\pdfoutput=1
\documentclass[11pt]{article}

\usepackage{indentfirst}
\usepackage{amsmath}
\usepackage{amsthm}
\usepackage{color}
\usepackage{eufrak}
\usepackage{verbatim}
\usepackage[makeroom]{cancel}
\usepackage{rotating}
\usepackage{float}
\usepackage[margin=1in]{geometry}
\usepackage{amssymb}
\usepackage{amsthm}
\usepackage{mathrsfs}
\usepackage{epsfig}
\usepackage{graphicx}
\usepackage{indentfirst}
\usepackage{framed}
\usepackage[numbers]{natbib}
\usepackage{color}
\usepackage{hyperref}
\usepackage{graphicx}
\usepackage{hyperref}
\usepackage{float}
\usepackage{bm}
\usepackage{cleveref}
\usepackage{enumitem}

\usepackage{tikz}
\usepackage{multicol}
\usetikzlibrary{positioning,shapes,shadows,arrows}

\usepackage[labelfont=bf]{caption}

\usepackage{algorithm}
\usepackage[noend]{algpseudocode}

\definecolor{corlinks}{RGB}{100,0,100}
\definecolor{cormenu}{RGB}{100,0,100}
\definecolor{corurl}{RGB}{100,0,100}

\hypersetup{
	colorlinks=true,
	urlcolor=corlinks,
	linkcolor=corlinks,
	menucolor=cormenu,
	citecolor=corlinks,
	pdfborder= 0 0 0
}

\newtheorem{theorem}{Theorem}
\newtheorem{question}[theorem]{Question}

\newtheorem{lemma}[theorem]{Lemma}
\newtheorem{corollary}[theorem]{Corollary}
\newtheorem{definition}[theorem]{Definition}
\newtheorem{proposition}[theorem]{Proposition}

\newtheorem{fact}[theorem]{Fact}
\newtheorem{claim}[theorem]{Claim}

\newtheorem{hypothesis}[theorem]{Hypothesis}

\DeclareMathOperator{\poly}{poly}

\DeclareMathOperator*\Prob{{\bf Pr}}
\newcommand{\bool}{\left\{0,1\right\}}
\newcommand{\Kt}{\mathsf{Kt}}
\newcommand{\rKt}{\mathsf{rKt}}

\def\colorful{1}

\ifnum\colorful=1

\fi
\ifnum\colorful=0

\fi

\newcommand{\eqdef}{\stackrel{\rm def}{=}}

\newcommand{\PSPACE}{\mathsf{PSPACE}}
\newcommand{\DTIME}{\mathsf{DTIME}}
\newcommand{\DSPACE}{\mathsf{DSPACE}}
\newcommand{\BPTIME}{\mathsf{BPTIME}}
\newcommand{\BPP}{\mathsf{BPP}}
\newcommand{\BPE}{\mathsf{BPE}}

\newcommand{\randkt}{\mathsf{rKt}}

\newcommand{\rsr}{$\mathsf{rsr}$}
\newcommand{\dsr}{$\mathsf{dsr}$}

\usepackage{titlesec}

\setcounter{secnumdepth}{4}

\titleformat{\paragraph}
{\normalfont\normalsize\bfseries}{\theparagraph}{1em}{}
\titlespacing*{\paragraph}
{0pt}{3.25ex plus 1ex minus .2ex}{1.5ex plus .2ex}

\begin{document}
	
	\newgeometry{margin=0.8in}
	
	\title{Pseudodeterministic Algorithms and the Structure of Probabilistic Time\vspace{0.4cm}}

	\author{
		Zhenjian Lu\footnote{Email: \texttt{zhen.j.lu@warwick.ac.uk}}\vspace{0.2cm}\\
		{\small University of Warwick} 
		\and 
		Igor C. Oliveira\footnote{Email: \texttt{igor.oliveira@warwick.ac.uk}}\vspace{0.2cm}\\{\small University of Warwick\vspace{0.3cm}}
		\and
		Rahul Santhanam\footnote{Email: \texttt{rahul.santhanam@cs.ox.ac.uk}}\vspace{0.2cm}\\{\small ~\,University of Oxford}
		\vspace{0.4cm}
	}

	
	\maketitle
	
	\vspace{-0.7cm}
	
	\renewenvironment{abstract}
 {\small
  \begin{center}
  \bfseries \abstractname\vspace{-.5em}\vspace{0pt}
  \end{center}
  \list{}{%
    \setlength{\leftmargin}{14mm}
    \setlength{\rightmargin}{\leftmargin}%
  }%
  \item\relax}
 {\endlist}
	
	\begin{abstract}
		   We connect the study of pseudodeterministic algorithms to two major open problems about the structural complexity of $\mathsf{BPTIME}$: proving \emph{hierarchy theorems} and showing the existence of \emph{complete problems}. Our main contributions can be summarised as follows.\\

		\noindent  \textbf{A new pseudorandom generator and its consequences.}~We build on techniques developed to prove hierarchy theorems for probabilistic time with advice (Fortnow and Santhanam \citep{DBLP:conf/focs/FortnowS04}) to construct the first \emph{unconditional} pseudorandom generator of polynomial stretch computable in pseudodeterministic polynomial time (with one bit of advice) that is secure infinitely often against polynomial-time computations.  As an application of this construction, we obtain new results about the complexity of generating and representing prime numbers. For instance, we show unconditionally for each $\varepsilon > 0$ that infinitely many primes $p_n$ have a {\it succinct representation} in the following sense: there is a fixed probabilistic \emph{polynomial time} algorithm that generates $p_n$ with high probability from its succinct representation of size $O(|p_n|^{\varepsilon})$. This offers an exponential improvement over the running time of previous results, and shows that infinitely many primes have \emph{succinct} and \emph{efficient} representations.\\ 
		
		\noindent \textbf{Structural results for probabilistic time from pseudodeterministic algorithms.} Oliveira and Santhanam \citep{DBLP:conf/stoc/OliveiraS17} established unconditionally that there is a pseudodeterministic algorithm for the Circuit Acceptance Probability Problem ($\mathsf{CAPP}$) that runs in sub-exponential time and is correct with high probability over any samplable distribution on circuits on infinitely many input lengths. We show that improving  this running time or obtaining a result that holds for every large input length would imply new time hierarchy theorems for probabilistic time. In addition, we prove that a worst-case polynomial-time pseudodeterministic algorithm for $\mathsf{CAPP}$ would imply that $\mathsf{BPP}$ has complete problems.\\
		
		\noindent  \textbf{Equivalence between pseudodeterministic constructions and hierarchies.}~We establish an equivalence between a certain explicit pseudodeterministic construction problem and the existence of strong hierarchy theorems for probabilistic time. More precisely, we show that pseudodeterministically constructing in exponential time strings of large $\mathsf{rKt}$ complexity (Oliveira \citep{DBLP:conf/icalp/Oliveira19}) is possible if and only if for every constructive function $T(n) \leq \exp(o(\exp(n)))$ we have $\mathsf{BPTIME}[\mathsf{poly}(T)] \nsubseteq \mathsf{i.o.}\mathsf{BPTIME}[T]/\log T$. \\
		
		   More generally, these results suggest new approaches for designing pseudodeterministic algorithms for search problems and for unveiling the structure of probabilistic time.

	\end{abstract}
	
	\restoregeometry
	
	\newpage
	
	\tableofcontents

	\newpage
	
	\section{Introduction}
	
	A pseudodeterministic algorithm for a search problem $\cal S$ is a probabilistic algorithm that with high probability outputs a {\it fixed} solution to $\cal S$ on any given input. The notion of pseudodeterminism was pioneered by Gat and Goldwasser \cite{DBLP:journals/eccc/GatG11}, motivated by applications in cryptography and distributed computing. Pseudodeterminism has been the topic of much recent work and has been studied in a variety of settings, including query complexity, property testing,  parallel computation, learning algorithms,  space-bounded computation, streaming algorithms and interactive proof systems \cite{DBLP:journals/eccc/GatG11, DBLP:conf/innovations/GoldreichGR13, DBLP:journals/eccc/GoldwasserG15, DBLP:journals/eccc/Grossman15,  DBLP:conf/stoc/OliveiraS17, DBLP:journals/corr/Holden17, DBLP:conf/approx/OliveiraS18, DBLP:conf/innovations/GoldwasserGH18, DBLP:conf/mfcs/DixonPV18, DBLP:conf/soda/GrossmanL19, DBLP:journals/eccc/GoemansGH19, DBLP:conf/icalp/Oliveira19, DBLP:journals/eccc/Goldreich19,  DBLP:conf/innovations/GoldwasserGMW20, paperITCS21}.
	
	A fundamental question about pseudodeterministic algorithms posed in \cite{DBLP:journals/eccc/GatG11} is whether there is a polynomial-time pseudodeterministic algorithm for generating prime numbers of a given length. Note that there is a trivial {\it probabilistic} algorithm that generates a random number with $n$ bits and checks it for primality; however, this algorithm is far from being pseudodeterministic.
	
	The question of efficient generation of primes has attracted broad interest, including the Polymath 4 project \cite{MR2869058} devoted to this topic. Despite this, known unconditional results are still fairly weak: the most efficient deterministic algorithm \cite{DBLP:journals/jal/LagariasO87} to generate $n$-bit primes runs in time $\Omega(2^{n/2})$.
	In \cite{DBLP:conf/stoc/OliveiraS17}, some progress was made on the question of \cite{DBLP:journals/eccc/GatG11} about generating primes. They give a pseudodeterministic algorithm running in time $2^{n^{o(1)}}$ on input of length $1^n$ that generates a fixed prime $p_n$ with high probability for infinitely many $n$. While this algorithm is a significant improvement on brute force search, it is unsatisfactory in a couple of different respects: it runs in sub-exponential time rather than polynomial time, and it is only guaranteed to be correct for infinitely many $n$.
	
	Somewhat surprisingly, the algorithm of \cite{DBLP:conf/stoc/OliveiraS17} uses very little information about primes -- just that they are plentiful (by the Prime Number Theorem), and that there is a polynomial-time algorithm for Primality \cite{Agrawal02primesis}. Indeed, \cite{DBLP:conf/stoc/OliveiraS17} show a far more general result giving a pseudodeterministic algorithm solving the search version of the Circuit Acceptance Probability Problem ($\mathsf{CAPP}$), from which the prime generation result follows easily. This more general result has, of course, the same caveats as in the result for primes: the running time is sub-exponential, and the success of the pseudodeterministic algorithm is only guaranteed for infinitely many input lengths.
	
	Strengthening this general result to algorithms that run in polynomial time and work for almost all input lengths would solve the main open question of \cite{DBLP:journals/eccc/GatG11}, hence it is natural to wonder if this is possible. In this paper, we show that progress on this question is tightly connected to longstanding open problems about the structure of probabilistic time, namely the question of whether $\mathsf{BPP}$ has {\it complete problems} and the question of whether there is a {\it hierarchy theorem} for $\BPP$. We show that these connections go in {\it both} directions: we exploit previous work on hierarchies for probabilistic time to show new results on pseudodeterministic generation of primes, and we show that any improvements in the general result of \cite{DBLP:conf/stoc/OliveiraS17} would yield progress on hierarchies and complete problems for $\mathsf{BPTIME}$.
	
	We briefly review what is known about the structure of probabilistic time. Recall that $\mathsf{BPP}$ is the class of decision problems solvable in polynomial time by a probabilistic machine that has bounded error on every input. $\mathsf{BPP}$ is a {\it semantic} class rather than a {\it syntactic} one, meaning that there is no canonical enumeration of machines defining those and only those languages in the class. The reason is that the acceptance and rejection criteria for a probabilistic machine $M$ on an input are not exhaustive -- it could be that a machine $M$ satisfies its bounded-error promise on some inputs but not others, in which case it does not define a language in $\mathsf{BPP}$. Indeed, it is not hard to show that it is {\it undecidable} whether a given probabilistic machine $M$ satisfies its bounded-error promise on every input. In contrast, for syntactic classes such as $\mathsf{P}$, $\mathsf{NP}$ and $\mathsf{PSPACE}$, the acceptance and rejection criteria are indeed exhaustive -- a given deterministic or non-deterministic machine accepts or rejects on any given input. Syntactic classes have canonical complete problems which are based on canonical enumerations of machines defining the class, but semantic ones do not. Under strong derandomization assumptions, $\mathsf{BPP} = \mathsf{P}$ \cite{DBLP:conf/stoc/ImpagliazzoW97}, and hence $\mathsf{BPP}$ has complete problems because $\mathsf{P}$ does, but we know nothing at all about the existence of complete problems {\it unconditionally}. In fact, we do not even know if complete problems exist for the class of problems solvable on average in probabilistic polynomial time, or the class of problems solvable in probabilistic polynomial time with small advice. 
	
	The semantic nature of the class $\mathsf{BPP}$ is also relevant to the existence of {\it hierarchy theorems} for the class. A hierarchy theorem is a result showing unconditionally that more resources allow us to solve more problems. Some of the earliest results in complexity theory \cite{DBLP:journals/jacm/HennieS66,DBLP:conf/focs/StearnsHL65}
	were hierarchy theorems for deterministic time and space. Almost optimal hierarchy theorems are known for {\it every} syntactic class \cite{DBLP:journals/jcss/Cook73, DBLP:journals/jacm/SeiferasFM78, Zak83} with resource bounds up to exponential, using diagonalization arguments. However, these diagonalization arguments presuppose that there is an efficient canonical enumeration of machines in the class, and hence do not work for semantic classes.
	
	By using padding arguments and exploiting hierarchies for deterministic time, it is known that $\mathsf{BPP}$ is strictly contained in $\mathsf{BPSUBEXP}$ \cite{DBLP:journals/iandc/KarpinskiV87}, but it is still open even whether $\mathsf{BPTIME}(n)$ is strictly contained in $\mathsf{BPTIME}(T(n))$, for any function $T$ that remains sub-exponential even when composed with itself a constant number of times. The situation is slightly better when it comes to hierarchies for variants of $\mathsf{BPP}$: in a line of works \cite{DBLP:conf/random/Barak02, DBLP:conf/focs/FortnowS04}, hierarchies were shown for $\mathsf{BPP}/1$ (the class of problems solvable in probabilistic polynomial time with 1 bit of advice) and for $\mathsf{Heur}$-$\mathsf{BPP}$ (the class of problems solvable on average in probabilistic polynomial time). Despite much effort, it remains wide open to show a hierarchy for $\BPP$.
	
	Note that these questions about the structure of probabilistic time are about {\it separations} (in the case of hierarchies) and about {\it hardness} (in the case of complete problems), while the question of pseudodeterministic constructions is an {\it algorithmic} question. Connections between algorithms and lower bounds have already been very fruitful in complexity theory, e.g., in the theory of pseudorandomness or in Williams' algorithmic method for complexity lower bounds \cite{DBLP:journals/siamcomp/Williams13}. We provide yet another instance of this phenomenon.
	
	We now describe our results in more detail.

	\subsection{Results}\label{sec:results}
	
	Our first results show how to obtain new pseudodeterministic constructions by building on techniques employed to establish hierarchy theorems.\\
	
	\noindent \textbf{A new pseudorandom generator and improved bounds for primes.} Our main technical result is an unconditional construction of a pseudorandom generator (PRG) with seed length $n^{\varepsilon}$ that is secure infinitely often against uniform adversaries. The generator is computable in probabilistic polynomial time with one bit of advice. Note that while a {\it random} function from $n^{\varepsilon}$ bits to $n$ bits is a PRG with high probability, it is non-trivial to compute such a generator {\it efficiently} and {\it pseudodeterministically}.
	
		\begin{theorem}[A pseudodeterministic polynomial-time computable PRG with $1$ bit of advice] \label{t:intro_PRG}~\\
		For every $\varepsilon > 0$ and $c,d \geq 1$, there exists a generator $G = \{G_n\}_{n \geq 1}$ with $G_n \colon \{0,1\}^{n^\varepsilon} \to \{0,1\}^n$ for which the following holds:
		\begin{itemize}[leftmargin=*]
		    \item[] \emph{Efficiency:} There is a probabilistic polynomial-time algorithm $A$ that given $n$,  $x \in \{0,1\}^{n^\varepsilon}$, and an advice bit $\alpha(n) \in \{0,1\}$ that is independent of $x$, outputs $G_n(x)$ with probability $\geq 2/3$.
		    \item[] \emph{Pseudorandomness:} For every  language $L \in \mathsf{DTIME}[n^c]$, there exist infinitely many input lengths $n$ such that 
		    $$
		    \Big | \Pr_{y \sim \mathcal{U}_{n}}[L(y) = 1] - \Pr_{x \sim \mathcal{U}_{n^\varepsilon}}[L(G_n(x))= 1]   \Big | \leq  \frac{1}{n^d}.
		    $$
		\end{itemize}
	\end{theorem}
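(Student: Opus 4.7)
The plan is to construct $G_n$ by applying the Nisan-Wigderson paradigm to a truth table $y_n \in \{0,1\}^N$ with $N = \mathrm{poly}(n)$ that is produced pseudodeterministically in polynomial time with one bit of advice. Once I exhibit such a $y_n$ which, viewed as the truth table of a Boolean function on $\log N$ variables, is $(1/2 + 1/n^{\Omega(d)})$-hard against uniform $n^{\Omega(c)}$-time adversaries, the NW construction (in its uniform variant \`a la Impagliazzo--Wigderson) yields the desired generator: its seed length is $O(\log^2 N) \ll n^{\varepsilon}$, its output length is $n$, and it fools $\mathsf{DTIME}[n^c]$ within $1/n^d$. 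Thus the remaining task is to pseudodeterministically produce $y_n$ in polynomial time with infinitely-often hardness, and then pad the seed up to $n^\varepsilon$.

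For the construction of $y_n$, I would adapt the framework of Fortnow and Santhanam underlying their $\mathsf{BPTIME}/1$ hierarchy. Concretely, I would enumerate probabilistic polynomial-time procedures $P_1, P_2, \ldots$ with one-bit advice, and at each input length $n$ take the output of the first $P_i$ (in some canonical ordering) that, with the prescribed advice value, produces a fixed string of the correct length passing a probabilistic hardness check. The check is itself a BPP-style decision of the form ``does every fixed $n^c$-time algorithm fail to predict the candidate with advantage exceeding $1/n^{\Omega(d)}$?''. The role of the single advice bit $\alpha(n)$ is to resolve the semantic ambiguity that arises when the underlying probability lies near the threshold: exactly as in FS04, $\alpha(n)$ indicates which side of the threshold the true value lies on, ensuring that the overall algorithm produces a canonical output with probability $\ge 2/3$. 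To establish infinitely-often security I would argue by contradiction: if the constructed $y_n$ failed to be hard against $\mathsf{DTIME}[n^c]$ for all sufficiently large $n$, then since $y_n$ is produced by a fixed probabilistic polynomial-time algorithm with one bit of advice, we would obtain a uniform procedure that misses the hard set at every large $n$, incompatible with the abundance of random truth tables together with the diagonal structure of the enumeration.

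The main obstacle I expect is designing the pseudodeterministic algorithm so that its output is simultaneously canonical (a fixed $y_n$ with probability $\ge 2/3$) and useful as a hard NW truth table. The subtlety is that natural pseudorandomness tests have no sharp numerical boundary, so a vanilla BPP decision fails to be pseudodeterministic. The FS04 one-advice-bit trick creates such a boundary, but here it must be adapted from the setting of accepting or rejecting a fixed language to the setting of explicit construction of a pseudorandom object. Getting this adaptation right, and ensuring that the resulting $y_n$ retains sufficient hardness for the uniform NW analysis to go through, is the crux of the argument.
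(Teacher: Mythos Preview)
Your proposal has two genuine gaps that would cause the argument to fail.

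First, the ``hardness check'' you describe is not efficiently implementable. You propose testing whether ``every fixed $n^c$-time algorithm fails to predict the candidate with advantage exceeding $1/n^{\Omega(d)}$''. This is a universal quantification over infinitely many machines and cannot be decided in $\mathsf{BPP}$ (or in any fixed time bound). This is also not what Fortnow--Santhanam actually do: their one bit of advice does not resolve a threshold in a statistical test; it indicates whether the input length is ``good'', i.e.\ whether the padding of a specific $\mathsf{PSPACE}$-complete language $L_{hard}$ is long enough for the optimal-algorithm simulation (built from an instance checker) to terminate in polynomial time. There is no hardness test at any point.

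Second, even if you could produce a $y_n$ that is average-case hard against uniform $n^c$-time algorithms, this would not suffice for the uniform Nisan--Wigderson analysis. The IW01/TV07 security argument is inherently non-black-box in the hard function: it requires the function to be downward self-reducible and self-correctable so that a uniform distinguisher can be converted, via an inductive learning procedure, into a uniform algorithm computing the function. An arbitrary pseudodeterministically-generated truth table has no such structure, so the reconstruction step breaks down. The paper addresses exactly this by starting from the structured $\mathsf{PSPACE}$-complete language $L_{hard}$ (which has $\mathsf{dsr}$, self-correctability, paddability, instance-checkability), defining a padded variant $L_k \in \mathsf{BPP}/1 \setminus \mathsf{BPTIME}(n^k)/1$, and then proving the delicate point that the learning procedure can be run along a sequence of ``good'' input lengths $m_0 < m_1 < \cdots < m_{r(m)}$ that all inherit their goodness from the single advice bit at length $m$, so that no additional advice accumulates across the induction. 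This structural analysis is the heart of the proof, and your proposal does not contain a substitute for it.
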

	
	\vspace{0.2cm}
	
	In contrast, the pseudodeterministic generator from \citep{DBLP:conf/stoc/OliveiraS17} has a running time that is \emph{exponential} in the seed length $n^{\varepsilon}$. We remark that the security of $G$ also holds against randomized algorithms (modelled as a samplable distribution of circuits); see Section \ref{s:PRG_construction} for details.
	  
	  As a corollary of Theorem \ref{t:intro_PRG}, we obtain a new result about pseudodeterministic \emph{polynomial-time} construction of primes. Recall that, to solve the main open question of \cite{DBLP:journals/eccc/GatG11}, we need to show that there is a pseudodeterministic polynomial-time algorithm $A$ such that $A(1^n)$ is a prime for all $n$. We make progress on this by giving a pseudodeterministic algorithm that succeeds when given a {\it succinct} representation of $p_n$, rather than just $n$ in unary. Thus, it is possible to compress infinitely many primes such that these primes can be recovered efficiently and pseudodeterministically from the compressed representation. To the best of our knowledge, nothing non-trivial was known about constructions of primes in the \emph{polynomial time} regime.\\

		\noindent \textbf{Corollary} (Existence of infinitely many primes with short and efficient descriptions).\\
	\emph{For every $\varepsilon > 0$, there is a probabilistic polynomial-time algorithm $A$ and a sequence $\{p_m\}_{m \geq 1}$ of increasing primes $p_m$ such that there exist a sequence $\{a_m\}_{m \geq 1}$ of strings, with $|a_m| = |p_m|^{\varepsilon}$, for which $A(a_m) = p_m$ with high probability for each $m$.}\\
	
	As another consequence, we get that there is a probabilistic polynomial-time algorithm that on input $1^n$ outputs a fixed prime of length $n$ with probability $2^{-n^{\varepsilon}}$ for infinitely many $n$. Indeed, we just simulate the algorithm $A$ in the corollary above and guess the input $a_m$ at random given $m$ in unary. To the best of our knowledge, prior to our work, there was no probabilistic polynomial-time algorithm that generated a fixed $n$-bit prime with success probability $2^{-o(n)}$.
	
	Theorem \ref{t:intro_PRG} also has implications for the study of Kolmogorov complexity. Indeed, the results mentioned above admit natural formulations in this language. We define a new notion of Kolmogorov time-bounded randomized complexity $\mathsf{rK^{\poly}}$, which measures the smallest size of a program from which a given string $x$ can be generated with high probability in polynomial time. Theorem \ref{t:intro_PRG} implies that for every $\varepsilon > 0$, every dense set in $\mathsf{P}$ has strings of length $n$ with $\mathsf{rK^{\poly}}$ complexity at most $n^{\varepsilon}$, for infinitely many $n$. In addition, it is possible to employ Theorem \ref{t:intro_PRG} to prove unconditional complexity lower bounds for the problem of estimating the $\mathsf{rK^{\poly}}$ complexity of an input string. We refer the interested reader to Sections \ref{s:prelim_kolmogorov} and \ref{s:applications} for more details.\\
	
	Next, we show connections in the \emph{reverse} direction between pseudodeterministic algorithms and structural results for probabilistic time, i.e., that better hierarchy theorems and structural results for probabilistic time can be obtained from better pseudodeterministic algorithms. We use the term ``pseudo-derandomisation'' (PD) to refer to the simulation of a randomized algorithm for a search problem by a pseudodeterministic one.\\
		
	\noindent \textbf{Mildly better pseudo-derandomisations yield new structural results for $\mathsf{BPTIME}$.} For a positive integer $d$, we define $\mathsf{CAPP}_{n,n^d}$ to be the search problem where given as input $x = (1^n,C)$, where $|C|=n^d$ and $C$ is interpreted as a Boolean circuit on at most $n^d$ input variables and of size at most $n^d$, we must output a number $\mu \in [0,1]$ such that
		\[
		\left|\Prob_{y\in\bool^{n^{d}}}[C(y)=1]-\mu \, \right|\;\leq\; 1/10.
		\]
	We recall the following \emph{unconditional} result established by Oliveira and Santhanam \citep{DBLP:conf/stoc/OliveiraS17}. 
	
	\vspace{0.2cm}
	
	\noindent ($\star$)~\emph{Infinitely-often average-case sub-exponential time pseudo-derandomisation of $\mathsf{CAPP}$}: 
	
		\vspace{-0.2cm}
	
	\begin{itemize}[leftmargin=*]
	    \item[] For any $\varepsilon>0$ and $c, d \geq 1$, there is a pseudodeterministic algorithm for $\mathsf{CAPP}_{n,n^{d}}$ that runs in time $T(n) = 2^{n^{\varepsilon}}$, and for any polynomial-time samplable ensemble of distributions $\mathcal{D}_{n, n^d}$ of circuits of size $n^d$, succeeds with probability $1-1/n^c$ over $\mathcal{D}_{n, n^d}$ for infinitely many values of $n$. 
	\end{itemize}
	
	We stress that when referring to a \emph{pseudo-deterministic} algorithm $A$ that succeeds infinitely often and on average, we still require that on \emph{every} input string $x$, $A(x)$ produces a canonical output with high probability. The aforementioned result satisfies this property (see Appendix \ref{s:appendix_OS17} for more details).
	
	The statement in ($\star$) has several caveats: the running time is exponential, the simulation only succeeds infinitely often, and the pseudo-deterministic algorithm might fail to produce a correct answer on some inputs (i.e. the canonical output might not be an accurate estimate of the acceptance probability of the input circuit). 
	
	The next statement shows that addressing \emph{any} of these caveats would imply new structural results for probabilistic time (even if the pseudodeterministic algorithm  depends on the samplable distribution of inputs).

	\begin{theorem}[Structural results for $\mathsf{BPTIME}$ from  better pseudo-derandomisations, Informal]\label{t:pseudo-to-structural}~\\
	Let $d \geq 1$, and suppose that for each polynomial-time samplable distribution of input circuits, there is a pseudodeterministic algorithm $A$ that solves $\mathsf{CAPP}_{n, n^{d+1}}$ \underline{infinitely often} \underline{on average} in time $T(n)$. Then,
	\begin{itemize}
	    \item[\emph{(}i\emph{)}] There is a language $L \in \mathsf{BPTIME}[T(n)] \setminus \mathsf{BPTIME}[n^d]$.
	    \item[\emph{(}ii\emph{)}] Moreover, if $A$ succeeds \underline{almost everywhere}, then there is $L \in \mathsf{BPTIME}[T(n)] \setminus \mathsf{i.o.}\mathsf{BPTIME}[n^d]$.
	    \item[\emph{(}iii\emph{)}] Finally, if $A$ is correct \underline{almost everywhere} and in the \underline{worst case}, there exist $\mathsf{BPTIME}$-hard problems in $\mathsf{BPTIME}(T(n))$.
	\end{itemize}
		\end{theorem}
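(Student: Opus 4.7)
The unifying observation is that a pseudodeterministic algorithm for $\mathsf{CAPP}$ produces a canonical estimate with high probability, so its invocation inside another randomised procedure behaves essentially like a deterministic oracle call. This breaks the main obstacle to diagonalising against $\mathsf{BPTIME}[n^d]$---the semantic nature of the class---and also enables the definition of a canonical $\mathsf{BPP}$-complete problem.

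\textbf{Parts (i) and (ii): diagonalisation.} Fix a standard enumeration $\{M_i\}_{i\geq 1}$ of probabilistic machines clocked to run in time $n^d$. For each $i$ and each $x \in \{0,1\}^n$, a straightforward simulation produces in polynomial time a circuit $C_{i,x}$ of size $\tilde{O}(n^d) \leq n^{d+1}$ on $n^d$ random input bits such that $\Pr_r[C_{i,x}(r)=1] = \Pr[M_i(x)=1]$; this size bound is exactly why the hypothesis is stated for $\mathsf{CAPP}_{n,n^{d+1}}$ rather than $\mathsf{CAPP}_{n,n^{d}}$. Consider the polynomial-time samplable distribution $\mathcal{D}_n$ that draws $x \sim \mathcal{U}_n$, reads an index $i(x) \in \{1,\ldots,\lceil\log n\rceil\}$ from a fixed prefix of $x$, and outputs $C_{i(x),x}$ in the prescribed input format. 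The hypothesis applied to $\mathcal{D}$ yields a pseudodeterministic algorithm $A$ running in time $T(n)$ whose canonical output is correct on a $(1-1/n^{c})$-fraction of $\mathcal{D}_n$ at the relevant lengths. Define $L$ as follows: on input $x \in \{0,1\}^n$, compute $i(x)$, build $C_{i(x),x}$, invoke $A$, and output the negation of the canonical verdict. Because $A$ is pseudodeterministic and runs in time $T(n)$, we get $L \in \mathsf{BPTIME}[T(n)]$. Suppose toward contradiction that $L = L(M_j)$ for some index $j$. For all sufficiently large $n$ we have $j \leq \log n$, and the set $S_{j,n} = \{x \in \{0,1\}^n : i(x)=j\}$ has density $\Omega(1/\log n)$; whenever $A$ is correct on $C_{j,x}$, the canonical verdict agrees with the majority output of $M_j(x)$, so $L(x) = \neg M_j(x)$. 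Since the failure rate of $A$ on $\mathcal{D}_n$ is at most $1/n^c \ll 1/\log n$, some $x \in S_{j,n}$ must satisfy $L(x) \neq M_j(x)$. Under hypothesis (i) this disagreement occurs at infinitely many $n$, ruling out $L(M_j) = L$ and hence giving $L \notin \mathsf{BPTIME}[n^d]$; under hypothesis (ii) it occurs at every sufficiently large $n$, giving $L \notin \mathsf{i.o.}\mathsf{BPTIME}[n^d]$.

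\textbf{Part (iii): completeness.} With a worst-case correct pseudodeterministic $\mathsf{CAPP}$ algorithm $A$, introduce the problem $L^* = \{\langle M, x, 1^t\rangle : \Pr[M(x)\text{ accepts within }t\text{ steps}] \geq 2/3\}$. To decide an instance, construct in polynomial time the circuit $C_{M,x,t}$ of size $\tilde{O}(t)$ that simulates $M$ on $x$ as a function of its random tape, run $A$ on it, and answer according to whether the canonical estimate exceeds $1/2$. This places $L^*$ inside $\mathsf{BPTIME}[T(\mathrm{poly}(n))]$, which coincides with $\mathsf{BPTIME}[T(n)]$ in the regimes of interest. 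Moreover, every $L' \in \mathsf{BPP}$ polynomial-time many-one reduces to $L^*$ via $y \mapsto \langle M_{L'}, y, 1^{p(|y|)}\rangle$ for any fixed BPP machine $M_{L'}$ and polynomial bound $p$, so $L^*$ is hard for $\mathsf{BPP}$.

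\textbf{Main obstacle.} The delicate part of the argument is the bookkeeping in (i) and (ii): the encoding of $i(x)$ and the samplable distribution $\mathcal{D}_n$ must be chosen so that (a) every index $j$ is targeted at all sufficiently large $n$; (b) the density of $S_{j,n}$ inside $\mathcal{D}_n$ strictly dominates the average-case failure rate of $A$; and (c) the simulation circuits $C_{i(x),x}$ fit within the $n^{d+1}$ circuit-size slack of $\mathsf{CAPP}_{n,n^{d+1}}$. Point (b) is precisely what prevents us from promoting the ``infinitely-often on average'' hypothesis beyond a worst-case separation, and it explains conceptually why the strength of the pseudo-derandomisation directly controls the strength of the resulting hierarchy statement.
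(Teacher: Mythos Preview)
Your argument for parts (i) and (ii) is correct and follows the same diagonalisation idea as the paper, with a minor packaging difference: the paper restricts attention to inputs of the special form $1^{n-\lceil\log n\rceil}i$, so that the samplable distribution has support of size at most $2n$ and each circuit carries weight $\geq 1/(2n)$; average-case success with error $< 1/(3n)$ then immediately gives correctness on \emph{every} circuit in the support. Your version instead keeps all $2^n$ inputs and uses a density argument ($|S_{j,n}|/2^n = \Omega(1/\log n) \gg 1/n^c$) to find a single witness $x$. Both work; the paper's encoding is slightly cleaner because it upgrades average-case to worst-case on the support in one line, but your approach is a valid alternative.

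There is, however, a genuine gap in part (iii). You define
\[
L^{*} \;=\; \bigl\{\langle M,x,1^{t}\rangle : \Pr[M(x)\text{ accepts within }t\text{ steps}]\geq 2/3\bigr\}
\]
and claim that running $A$ on $C_{M,x,t}$ and thresholding at $1/2$ decides $L^{*}$. This fails on instances where $M$ does \emph{not} satisfy the bounded-error promise: if the true acceptance probability is, say, $0.55$, then $\langle M,x,1^{t}\rangle\notin L^{*}$, yet $A$'s canonical estimate (guaranteed only to be within $1/10$) will exceed $1/2$, and your algorithm accepts. So the proposed procedure does not compute $L^{*}$, and $L^{*}\in\mathsf{BPTIME}[T(n)]$ is not established. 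The paper avoids this by defining the hard language \emph{in terms of the canonical output of $A$}: set $w\in L$ iff the canonical value $A$ returns on $C_{M,x,t}$ is at least $1/2$. Membership in $\mathsf{BPTIME}[T(n)]$ is then immediate from pseudodeterminism, and the hardness reduction still goes through because any machine that \emph{does} satisfy the bounded-error promise has acceptance probability bounded away from $1/2$, so the canonical estimate lands on the correct side. This definitional twist is the key step you are missing.
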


		Item (\emph{i}) shows that improving the \emph{running time} $T(n)$ of the algorithm in $(\star)$ would lead to a new hierarchy theorem with tighter time bounds. On the other hand, from Item (\emph{ii}) we get that removing the \emph{infinitely often} condition from $(\star)$ would prove the first hierarchy result against $\mathsf{i.o.}\mathsf{BPTIME}[n^d]$, i.e., when  the language is hard on every large enough input length. Finally, Item (\emph{iii}) shows how to obtain complete problems from \emph{worst-case} pseudo-derandomisations.  We note that weaker consequences can also be obtained by relaxing the assumptions from Item (\emph{iii}). Indeed, a \emph{new} average-case infinitely often completeness result follows from $(\star)$. Since it is somewhat technical to formally state the result, we refer to the body of the paper for details (Corollary \ref{c:new_completeness}).
		
	Consequences of a similar nature also follow from weak pseudo-derandomisations of algorithms solving problems in $\mathsf{BPP}$-$\mathsf{search}$. Since the formulation of the  results for $\mathsf{BPP}$-$\mathsf{search}$ might be of independent interest, we include them in Appendix \ref{s:BPP_search_appendix}.\\

		Finally, we show a setting where hierarchy theorems and pseudo-derandomisations turn out to be \emph{equivalent}.\\
		
	\noindent \textbf{An equivalence between pseudo-derandomisation and hierarchies.} Our last result shows the existence of an explicit construction problem that is in a certain sense \emph{universal} for probabilistic time hierarchies. 
	
	In order to state the result, we recall the following fundamental notion from Kolmogorov complexity introduced by Levin \citep{DBLP:journals/iandc/Levin84}. For a string $x$, $\mathsf{Kt}(x)$ is defined as the minimum value $|M| + |a| + \log (t)$ over all tuples $(M,a,t)$, where $a$ is an arbitrary string, $M$ is a deterministic machine that prints $x$ in $t$ steps when it is given  $a$ as input, and $|M|$ is the length of its representation as a binary string (according to a fixed universal machine). 
	
	It is not hard to show that given $n$ we can construct a string $x$ such that $\mathsf{Kt}(x) \geq n$ in deterministic time $2^{O(n)}$.
	
	Oliveira \citep{DBLP:conf/icalp/Oliveira19} introduced a natural randomized analogue of Levin's definition, denoted $\mathsf{rKt}(x)$. The only difference is that now the minimization takes place over all tuples $(M,a,t)$ where $M$ is a \emph{randomized} machine that outputs $x$ with probability at least $2/3$ when it computes for $t$ steps on input $a$. We refer to Section \ref{s:prelim} for a precise definition.

	Can we construct in probabilistic exponential time a (fixed) string of large $\mathsf{rKt}$ complexity? 

	\begin{theorem}[An equivalence between pseudo-derandomisation and probabilistic time hierarchies] \label{t:equivalences} The following statements are equivalent:
		\begin{itemize}
			\item[\emph{(1)}] \emph{Pseudodeterministic construction of strings of large $\rKt$ complexity:} 
			There is a constant $\varepsilon > 0$ and a randomised algorithm $A$ that, given $m$, runs in time $2^{O(m)}$ and outputs with probability at least $2/3$ a fixed $m$-bit string $w_m$ such that $\rKt(w_m) \geq \varepsilon m$.
			\item[\emph{(2)}] \emph{Strong time hierarchy theorem for probabilistic computation:} There are constants $k \geq 1$ and $\lambda > 0$ for which the following holds. For any constructive function $n \leq t(n) \leq 2^{\lambda \cdot 2^n}$, there is a language $L \in \mathsf{BPTIME}[(t(n)^k]$ such that $L \notin \mathsf{i.o.BPTIME}[t(n)]/\log t(n)$. 
					\end{itemize}
	\end{theorem}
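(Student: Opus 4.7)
The plan is to establish both directions through a direct correspondence between $m$-bit strings and partial truth tables of languages on $n$-bit inputs. In the forward direction $(1)\Rightarrow(2)$, I would set $m \approx c\log t(n)$ so that the pseudodeterministically generated string $w_m$ can be encoded into the behaviour of a BPP language on a single input length. In the reverse direction $(2)\Rightarrow(1)$, I would set $2^n \approx m$ so that the truth table of a hard language on length-$n$ inputs furnishes the desired string. In both cases the $\rKt$-complexity of the string translates faithfully into the combined size of advice and logarithm of the running time that a randomized uniform algorithm needs to decide the language on the relevant input length, so the two statements become two sides of the same reduction.

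To implement $(1)\Rightarrow(2)$, fix $\varepsilon$ from statement (1) and a further constant $c$ to be tuned. For a constructive $t(n)$ in the allowed range, set $m(n):=\lceil c\log t(n)\rceil$ and run the algorithm of (1) to pseudodeterministically produce $w_{m(n)}$ in time $2^{O(m(n))} = t(n)^{O(c)}$ with error amplified to be polynomially small. Define $L$ so that on inputs of length $n$ it depends only on the first $\lceil\log m(n)\rceil$ bits, interpreted as an index $i\in[m(n)]$, with $L$ accepting iff $w_{m(n)}[i]=1$; then $L\in\mathsf{BPTIME}[t(n)^{O(c)}]$. Conversely, if a constant-size randomized machine $M$ with advice $\alpha_n$ of length $\log t(n)$ and running time $t(n)$ agreed with $L$ on infinitely many $n$, one could recover $w_{m(n)}$ on each such $n$ by querying $M^{\alpha_n}$ at the $m(n)$ canonical inputs (after amplification), which gives $\rKt(w_{m(n)})\leq |M|+\log t(n)+O(\log t(n))=O(\log t(n))$. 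Comparing with the guarantee $\rKt(w_{m(n)})\geq\varepsilon c\log t(n)$ and taking $c$ sufficiently large in terms of $\varepsilon$ yields the contradiction, so $L$ witnesses statement (2) with $k=O(c)$.

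For $(2)\Rightarrow(1)$, let $k,\lambda$ be the constants of (2), pick any $\beta\in(0,\lambda)$, and invoke (2) with $t(n):=2^{\beta\cdot 2^n}$ to obtain $L\in\mathsf{BPTIME}[2^{k\beta\cdot 2^n}]\setminus\mathsf{i.o.}\mathsf{BPTIME}[t(n)]/\log t(n)$. Given $m$, put $n:=\lfloor\log m\rfloor$ and let $w_m$ be the first $m$ bits of the truth table of $L$ on length-$n$ inputs in lexicographic order. A pseudodeterministic algorithm for $w_m$ running in time $2^{O(m)}$ is obtained by running the BPP algorithm for $L$ on each of the at most $2m$ relevant inputs, with each run amplified so that a union bound keeps the overall error below $1/3$. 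For the $\rKt$ lower bound, suppose $(M,a,t')$ witnesses $\rKt(w_m)\leq\varepsilon m$; then simulating $M(a)$ to recover $w_m$ and looking up the relevant bit yields a $\mathsf{BPTIME}[2^{O(\varepsilon m)}]$ algorithm for $L$ on length-$n$ inputs with advice $a$ of length at most $\varepsilon m$. Taking $\varepsilon$ small enough that $2\varepsilon<\beta$ places this algorithm inside $\mathsf{BPTIME}[t(n)]/\log t(n)$, contradicting statement (2) infinitely often. Hence $\rKt(w_m)\geq\varepsilon m$ for all sufficiently large $m$, with the finitely many exceptions hardcoded into the algorithm.

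The main obstacle is the uniform bookkeeping of constants: on the $(1)\Rightarrow(2)$ side, $c$ (and hence $k$) must be fixed once, independently of $t(n)$, so that the hierarchy holds uniformly across all constructive time bounds in the allowed range; on the $(2)\Rightarrow(1)$ side, $\varepsilon$ must be chosen strictly below $\beta/2$ so that both the derived running time and the derived advice length stay below $t(n)$ and $\log t(n)$ respectively. A secondary subtlety is that the constructions must be genuinely pseudodeterministic: the inner BPP queries must be amplified tightly enough to survive union bounds over the $\poly(t(n))$ or $2^n$ subqueries used to read off or reconstruct $w_m$, without inflating the running time beyond the prescribed budget.
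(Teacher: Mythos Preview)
Your proposal is correct and follows essentially the same approach as the paper: both directions exploit the natural correspondence between an $m$-bit string and (a prefix of) the truth table of a language on inputs of length $n\approx\log m$, translating $\rKt$ bounds into $\mathsf{BPTIME}$-with-advice bounds via the universal machine. Two small technicalities you gloss over (and the paper handles explicitly): in $(2)\Rightarrow(1)$ with $n=\lfloor\log m\rfloor$ the length-$n$ truth table has only $2^n\le m$ bits, so $w_m$ must be padded with $m-2^n$ zeros; and in $(1)\Rightarrow(2)$ the constant $\lambda$ must be chosen small enough relative to $c$ so that $m(n)=\lceil c\log t(n)\rceil\le 2^n$ and your index-encoding is well-defined.
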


Consequences from weaker pseudodeterministic constructions of strings of non-trivial $\rKt$ complexity are explored in Section \ref{s:hier_from_expl_constr}.

We conjecture that the equivalence from Theorem \ref{t:equivalences} extends to capture the pseudo-derandomisation of unary problems in $\mathsf{BPE}$-$\mathsf{search}$, and we elaborate on this in Section \ref{sec:unary_BPE}.\\

We summarise several connections established in our paper in Appendix \ref{s:summary_appendix}. \\
\\
{\bf \noindent Relationship to the independent work of \cite{DPV21b}}: Peter Dixon, A. Pavan and N. V. Vinodchandran have very recently brought to our attention that they have independent and concurrent unpublished work \cite{DPV21b} that overlaps with this work, and have shared a draft with us. We briefly discuss the relationship between their work and ours.

Like our work, \cite{DPV21b} show close connections between pseudodeterministic algorithms for $\mathsf{CAPP}$ and structural results on $\mathsf{BPP}$. In particular, as in our Theorem \ref{t:pseudo-to-structural}, they show that polynomial-time pseudodeterministic algorithms for $\mathsf{CAPP}$ imply a hierarchy for $\mathsf{BPP}$.

There are also differences between the two works. Directions such as unconditional constructions of pseudodeterministic PRGs and equivalences between hierarchy theorems and pseudo-deterministic constructions of strings of high $\rKt$ complexity, corresponding to Theorem \ref{t:intro_PRG} and Theorem \ref{t:equivalences} in this work, are not explored in \cite{DPV21b}. There are also results in \cite{DPV21b}, such as fixed-polynomial circuit lower bounds for $\mathsf{MA}$ and conversion of multi-pseudodeterministic algorithms to pseudodeterministic algorithms, both shown under the assumption that $\mathsf{CAPP}$ has polynomial-time pseudodeterministic algorithms, which do not have counterparts in our work.

	\subsection{Techniques}
	
	In this section, we provide an overview of our main ideas and techniques. We start with a discussion of our most technically demanding result showing how to obtain new pseudodeterministic algorithms from existing probabilistic time hierarchies with advice.\\

	\noindent \textbf{Theorem \ref{t:intro_PRG}:~Pseudodeterministic algorithms from hierarchies.}~In trying to derive an implication from hierarchy theorems for probabilistic time to pseudodeterministic algorithms, our starting point is the observation in \cite{DBLP:conf/stoc/OliveiraS17} that exponential circuit lower bounds for $\mathsf{BPE}$ can be used to get a pseudodeterministic poly-time computable PRG with seed length $O(\log(n))$, just by plugging in a hard function in $\mathsf{BPE}$ into the Impagliazzo-Wigderson generator \cite{DBLP:conf/stoc/ImpagliazzoW97}. Such a PRG is secure even against non-uniform adversaries; if we only need security against uniform adversaries, intuitively it should suffice to start with a hard function in $\mathsf{BPE}$ against sub-exponential time probabilistic time, i.e., a hierarchy theorem. If this approach worked, we would actually be able to get pseudodeterministic poly-time generation of primes, just by listing the outputs of the PRG in lexicographic order and outputting the first one that passes the Primality test.
	
	There are a couple of problems with this. First, since we do not know a hierarchy theorem for $\mathsf{BPE}$, we cannot hope to get an unconditional result this way. Second, even if our goal is merely to get a connection between hierarchy theorems and pseudodeterministic algorithms, known techniques for arguing security against uniform adversaries \cite{DBLP:journals/jcss/ImpagliazzoW01, DBLP:journals/cc/TrevisanV07} do not work when starting with a function in $\mathsf{BPE}$. Rather, they require the hard function to be downward self-reducible, and downward self-reducibility implies that the hard function is in $\mathsf{PSPACE}$.
	
	To get around these problems, we start with a hard problem in $\mathsf{BPP}/1$ rather than in $\mathsf{BPE}$. We lose something by doing this -- now we can only hope for PRGs with seed length $n^{\varepsilon}$ rather than $O(\log(n))$. But we also gain something, as we know from the $\mathsf{BPP}$ hierarchy theorem with advice \cite{DBLP:conf/random/Barak02, DBLP:conf/focs/FortnowS04} that hard languages unconditionally exist: for every $k$ there is a language $L_k$ in $\mathsf{BPP}/1$ that is not in $\mathsf{BPTIME}(n^k)/1$. Now we can try to plug in the language $L_k$ into the amplified Nisan-Wigderson generator as used in the uniform hardness-to-randomness reduction of
	\cite{DBLP:journals/jcss/ImpagliazzoW01, DBLP:journals/cc/TrevisanV07}. However, this reduction is inherently non-black-box and requires the initial language $L_k$ to be downward self-reducible and random self-reducible.
	
	Starting with an arbitrary hard language in $\mathsf{BPP}/1$, we do not know how to transform it into one that satisfies the properties required by the reduction in \cite{DBLP:journals/jcss/ImpagliazzoW01, DBLP:journals/cc/TrevisanV07}, while maintaining hardness. We are free though to design a hard language $L_k$ ourselves, rather than starting with an arbitrary one. Our idea is to exploit the \emph{structure} of the hard language $L_k$ in the hierarchy theorem of \cite{DBLP:conf/focs/FortnowS04}.
	
	What is promising is that the proof of the hierarchy theorem in \cite{DBLP:conf/focs/FortnowS04} starts with a certain structured $\mathsf{PSPACE}$-complete problem $L_{hard}$ with special properties constructed in \cite{DBLP:journals/cc/TrevisanV07}. The hard language $L_k$ in the hierarchy theorem of \cite{DBLP:conf/focs/FortnowS04} is a \emph{padded version} of $L_{hard}$. By modifying $L_k$ slightly so that the padding does not lose information, we can hope to show that the language $L_k$ {\it inherits} random self-reducibility (\rsr) and downward self-reducibility (\dsr) from the language $L_{hard}$. This would enable us to plug the modified version of $L_k$ into the hardness-to-randomness reduction of \cite{DBLP:journals/cc/TrevisanV07} and thus show security against uniform adversaries.
	
	Unfortunately, it is not quite true that the language $L_k$ inherits $\mathsf{rsr}$ and $\mathsf{dsr}$ from $L_{hard}$. Indeed, the amount of padding required to transform $L_{hard}$ into $L_k$ is not efficiently computable in general, and this is the reason one bit of advice is required to decide $L_k$ with a $\mathsf{BPP}$ algorithm. As a consequence, $L_k$ is only $\mathsf{rsr}$ with one bit of advice, and similarly $\mathsf{dsr}$ given the right bit of advice.
	
	This turns out to be an issue when using the learning procedure of \cite{DBLP:journals/cc/TrevisanV07}, which builds up a circuit for the hard function at length $n$ from failure of the PRG at length $n$ by inductively building circuits at length $i$ for $i < n$ and then using $\mathsf{rsr}$ and $\mathsf{dsr}$ to complete the inductive step. If one bit of advice is required at each input length, then $n$ bits of advice are required in all, and this kills the argument -- we do not known that $L_k$ is still hard for $\BPTIME(n^k)$ with $n$ bits of advice.
	
	We circumvent this using a modified learning strategy using the structure of the language $L_k$. The crucial observation is that the bit of advice in the $\mathsf{BPP}/1$ algorithm for $L_k$ is only used to tell if the input length is ``good'' in the sense of the padding being long enough. We show that for each good input length $n$, there is a sequence of smaller good input lengths such that the learning strategy can be implemented within these input lengths. Since these smaller input lengths inherit their goodness from the original length $n$, we do not need additional advice when using $\mathsf{rsr}$ and $\mathsf{dsr}$ at the smaller input lengths. This enables us to use the learning strategy to derive a $\mathsf{BPTIME}(n^k)/1$ algorithm for $L_k$, which is indeed a contradiction to the hardness of $L_k$.
	
	The above description omits many technical subtleties, but does convey the gist of the proof.\\
	\\
	\noindent \textbf{Theorem \ref{t:pseudo-to-structural}:~Hierarchies from pseudo-derandomisations.}~To show that weak (infinitely often and on average) pseudo-derandomisations of $\mathsf{CAPP}$ give hierarchy theorems for probabilistic time, we use \emph{diagonalization}. Suppose we want to diagonalize against randomized machines running in time $n^k$, while maintaining the promise that \emph{every} input string is either accepted or rejected with probability \emph{bounded away} from $1/2$. One way to proceed might be by first obtaining an \emph{estimate} of the acceptance probability of each input machine $M$ (say when running it on its code) via simulations of $M$, then flipping the output (i.e.~output $1$ if the estimate is less than $1/2$). One issue with this approach is that it is not clear how to implement this idea and put the ``diagonalized'' hard language in $\mathsf{BPTIME}$. The issue is that if the input machine $M$ for the simulation accepts certain strings with probability near $1/2$, we cannot guarantee to have a \emph{fixed} output bit with high probability (since the output depends on the estimate of the acceptance probability of $M$). 
	
	This is where \emph{pseudodeterminism} comes in helpful. If we can estimate the acceptance probability pseudodeterministically, which means we get a \emph{fixed} (though not necessarily correct) estimate with high probability, we can put the diagonalized language in $\mathsf{BPTIME}$. Crucially, $\mathsf{CAPP}$ is precisely the problem that allows one to estimate the acceptance probability of a randomized machine computing in bounded time, since we can obtain a circuit to describe the computation of $M$ on a given input as a function of its random string. However, we still need to address the fact that the pseudodeterministic simulation can make mistakes on some inputs, which could destroy the hardness of the diagonalized language.
	
	To cope with the issue that our pseudodeterministic algorithm $A$ for $\mathsf{CAPP}$ gives a correct answer only over a set $S \subseteq \mathbb{N}$ of input lengths (e.g.~in the infinitely often case $S$ is only guaranteed to be infinite), we use a careful padding technique to ensure that, for each fixed machine $M$, if the input length is large enough then $A$ attempts to diagonalize against $M$ over that input length. The only issue left is that, even on ``good'' input lengths (where goodness is determined by $S$), the pseudodeterministic algorithm succeeds only with high probability (say $\geq 1-1/n^2$) over the samplable distribution of circuits. This is handled by the observation that, thanks to our padding construction and the choice of an appropriate polynomial-time samplable distribution of input instances for $\mathsf{CAPP}$, the number of \emph{relevant} inputs (describing circuits obtained from machines) on each input length is small (say $\leq 2n$). This means that if $A$ is correct with high probability over the samplable distribution of interest employed in the diagonalization argument, it is also correct with high probability on \emph{each} relevant input string (i.e.~circuit). This idea can be formalised to show that the language produced through the diagonalisation process is indeed hard (infinitely often or almost everywhere, depending on $S$).\\

	\noindent \textbf{Theorem \ref{t:equivalences}:~Equivalence.}~For the equivalence between constructing strings of large $\rKt$ complexity and the existence of strong hierarchy theorems for probabilistic time, we proceed as follows. (For simplicity, we focus on the qualitative aspect of the proof.) We first observe that any string of large $\rKt$ complexity cannot be pseudodeterministically computed by randomized algorithms with small running time and with a small amount of advice -- this follows from the definition of $\rKt$. Therefore, if the truth table of a language \emph{contains} a string of large $\rKt$ complexity, the language cannot be computed in small $\mathsf{BPTIME}$ with a bounded amount of advice, since otherwise this string can be pseudodeterministically reconstructed from a probabilistic algorithm for the language and the correct advice. Using this idea, it is possible to employ a pseudodeterministic construction of strings of large $\rKt$ complexity to embed these strings in the definition of a (hard) language. This shows that a pseudodeterministic solution to the explicit construction problem for $\rKt$ yields a probabilistic time hierarchy with languages that are hard even against probabilistic algorithms with advice.
	
	For the other direction, suppose we have a fixed language $L$ in BPTIME that is hard against probabilistic algorithms of smaller running time, \emph{even with advice}. Then by viewing this language as a sequence of strings $\{w_n\}$ (obtained from the corresponding truth tables), we get that the probabilistic algorithm that decides $L$ can be transformed into a pseudodeterministic algorithm that generates $\{w_n\}$. We claim that this is a sequence of strings of large $\rKt$ complexity. Indeed, if not, then an optimal sequence of probabilistic machines $M_n$ that describe each $w_n$ (according to the definition of $\rKt$) can be given as \emph{advice} to a \emph{uniform} probabilistic algorithm that computes $L$ in bounded probabilistic time. This is a contradiction to the hardness of $L$.
	
	This completes the sketch of the equivalence between the two statements. Checking that the parameters obtained from a formalisation of the sketch given above are appropriate is not difficult.

	\section{Preliminaries} \label{s:prelim}

	\subsection{Basic definitions and notation}
	
	A function $t \colon \mathbb{N} \to \mathbb{N}$ is said to be time-constructible if there is a deterministic machine $M$ that on input $1^n$ halts within $O(t(n))$ steps and outputs $t(n)$. For simplicity, we might simply say that a function is constructible in this case. We say that $t$ is monotone if $t(a) \geq t(b)$ for $a \geq b$. 
	
	We write $|x|$ to denote the length of a string $x \in \{0,1\}^*$.
	
	The uniform distribution over $\{0,1\}^m$ is denoted by $\mathcal{U}_m$.
	
	For a function $\mu \colon \mathbb{N} \to [0,1]$, we say that a language $L \subseteq \{0,1\}^*$ is $\mu$-\emph{dense} if for every large enough $n$, we have $\Pr_{y \sim \{0,1\}^n}[y \in L] \geq \mu(n)$.
	
	We use $\mathsf{SIZE}[s]$ to refer to the class of languages that are computable by a sequence of circuits of size $s(n)$.
	
	We say that an ensemble $\{\mathcal{D}_n\}_{n \geq 1}$ of distributions is samplable in time $T(n)$ if there is a deterministic algorithm $A$ such that, for every $n$, the distribution induced by $A(1^n,\mathcal{U}_{T(n)})$ is $\mathcal{D}_n$ and $A(1^n,z)$ runs in time at most $T(n)$.
	
	\subsection{Probabilistic computations and search problems}
	
	We use $\mathsf{BPTIME}[t(n)]/a(n)$ to denote the set of languages computed in probabilistic time $O(t(n))$ using $a(n)$ bits of advice. Note that the acceptance probability of a machine with incorrect advice can be arbitrary. 
	
	In the definition below, we consider a binary relation $R \subseteq \{0,1\}^* \times \{0,1\}^*$ such that, for every $x \in \{0,1\}^*$, the set of solutions $R_x \eqdef \{y \mid (x,y) \in R\}$ is nonempty.

	\begin{definition}[$\mathsf{BPP}$-$\mathsf{search}$]\label{d:search-BPP}
		A binary relation $R$ is in $\mathsf{BPP}$-$\mathsf{search}$ if there exist both
		\begin{itemize}
			\item \emph{(Search algorithm)} a probabilistic polynomial-time algorithm $A$ such that for every input $x$, $A$ outputs $y \in R_x$  with probability at least $2/3$ over its internal randomness,
			\item \emph{(Verification algorithm)} and a probabilistic polynomial-time algorithm $B$ such that
			\begin{itemize}
				\item for every pair $(x,y)$, if $(x,y)\not\in R$ then $B$ rejects $(x,y)$ with probability at least $2/3$,
				\item and for every $x$,  with probability at least $1/2$ over the random choices of $A$ on input $x$, $B$ accepts $(x,A(x))$ with probability at least $2/3$.
			\end{itemize}
		\end{itemize}
		If this is the case, we say that the pair $(A,B)$ witnesses that $R \in \mathsf{BPP}$-$\mathsf{search}$.	
	\end{definition}
	
	Note that if $R \in \mathsf{BPP}$-$\mathsf{search}$ then using algorithms $A$ and $B$ from above we can efficiently find for a given $x$ a solution $y \in R_x$ and certify its validity with high probability. On the other hand, it is not necessarily the case that the relation $R$ can be efficiently decided, since the verification algorithm is not required to accept with high probability \emph{every} pair $(x,y) \in R$.\\

	\noindent \textbf{$\mathsf{BPE}$-$\mathsf{search}$ and $\mathsf{unary}$-$\mathsf{BPE}$-$\mathsf{search}$.}	Definition \ref{d:search-BPP} can be extended to algorithms $A$ and $B$ running in exponential time $2^{O(n)}$ as a function of $n = |x|$, which gives rise to the class of relations $\mathsf{BPE}$-$\mathsf{search}$. 
	
	We can also consider the class of \emph{unary} relations $R$, meaning that if $(x,y) \in R$ then $x = 1^n$ for some $n$, and for every $n$ there exists $y$ such that $(1^n,y) \in R$. The class $\mathsf{unary}$-$\mathsf{BPE}$-$\mathsf{search}$ is then defined in the natural way. More precisely, in Definition \ref{d:search-BPP} we restrict to $x$ of the form $1^n$, and allow exponential time algorithms $A$ and $B$ as in the case of $\mathsf{BPE}$-$\mathsf{search}$.\\
	
	\noindent \textbf{Pseudodeterministic algorithms for $\mathsf{BPP}$-$\mathsf{search}$ and $\mathsf{BPE}$-$\mathsf{search}$.} We say that a randomized algorithm $A$ pseudo-deterministically solves a search problem $R$ (viewed as a binary relation) if for every $x \in \{0,1\}^n$ there exists $y \in R_x$ such that $\Pr_A[A(x) = y] \geq 2/3$. In the case of $\mathsf{BPP}$-$\mathsf{search}$ and $\mathsf{BPE}$-$\mathsf{search}$, this necessarily means that the solution $y$ produced by $A$ on $x$ is accepted by the verification algorithm $B$ with probability at least $2/3$. We also consider pseudodeterministic algorithms $A$ that only succeed on average with respect to a distribution $\mathcal{D}_n$ supported over $\{0,1\}^n$ and $x \sim \mathcal{D}_n$. In this case, we stress that $A$ is still pseudo-deterministic on \emph{every} input string $x \in \{0,1\}^n$, meaning that it produces a canonical output $z_x$ with probability at least $\geq 2/3$. However, it is not necessarily the case that $z_x \in R_x$ for every input string $x$. These definitions are extended to the infinitely often setting in the natural way. Again, we assume a pseudo-deterministic output for every input string, although the algorithm might not generate a valid solution on some input lengths or on some inputs.\\
	
    We will also rely on the following formalisation of $\mathsf{BPTIME}$-hardness from \citep{DBLP:conf/random/Barak02}.
	
	\begin{definition}[$\mathsf{BPTIME}$-hard problems]\label{d:bptime_hardness} We say that a language $L$ is $\mathsf{BPTIME}$-hard if there is a positive constant $c$ such that, for any time-constructible function $t(n)$ and any language $L' \in \mathsf{BPTIME}[t(n)]$, there is a deterministic $O(t(|x|)^c)$-time computable function $f \colon \{0,1\}^* \to \{0,1\}^*$ such that for every $x$ it holds that $x \in L'$ if and only if $f(x) \in L$. We say that $L$ is $\mathsf{BPP}$-complete if $L$ is $\mathsf{BPTIME}$-hard and $L \in \mathsf{BPP}$.
	\end{definition}
	
	Note that problems in $\mathsf{BPTIME}[t(n)]$ for a large $t(n)$ can produce larger instances of the hard language. Since the reduction in the definition above is deterministic, if $L$ is $\mathsf{BPP}$-complete and $L \in \mathsf{P}$ then $\mathsf{BPP} \subseteq \mathsf{P}$, as one would expect. 
	
	Finally, we introduce notation for the Circuit Acceptance Probability Problem ($\mathsf{CAPP}$). For convenience, we employ a parameter $n$ to index instances. This will be useful when discussing algorithms solving $\mathsf{CAPP}$ on average with respect to an ensemble of distributions.
	
	\begin{definition}[$\mathsf{CAPP}_{n,n^d}$]\label{d:CAPP}
		For a positive integer $d$, we define $\mathsf{CAPP}_{n,n^d}$ to be the search problem where given as input $x = (1^n,C)$, where $|C|=n^d$ and $C$ is interpreted as a Boolean circuit on at most $n^d$ input variables and of size at most $n^d$, we must output a value $\mu \in [0,1]$ such that
		\[
		\left|\Prob_{y\in\bool^{n^{d}}}[C(y)=1]-\mu \right|\leq 1/10.
		\]
	We also define $\mathsf{CAPP}_n \eqdef \mathsf{CAPP}_{n,n \cdot (\log n)^C}$ for a large enough constant $C \geq 1$, which refers to circuits of size $n \cdot (\log n)^C$ and is useful in the context of linear-time probabilistic algorithms.\footnote{By a standard padding argument, the size of the circuits in the definition of $\mathsf{CAPP}$ is not essential, but it is convenient to fix an appropriate size when discussing time bounds and ensembles of input distributions.}
	\end{definition}

    As alluded to above, we consider algorithms solving $\mathsf{CAPP}$ in the worst case and on average, i.e., with respect to an ensemble $\{\mathcal{D}_n\}_{n \geq 1}$ of distributions where each $\mathcal{D}_n$ is supported over $1^n \times \{0,1\}^{n^d}$. When discussing pseudodeterministic algorithms for solving $\mathsf{CAPP}$ on average or in the infinitely often regime, we adopt the same convention as in the case of  $\mathsf{BPP}$-$\mathsf{search}$: the algorithm is assumed to produce with probability at least $2/3$ a canonical value $\mu_x$ on every input string $x = (1^n, C)$, but $\mu_x$ might be incorrect (i.e.~$1/10$-far from $\Pr_y[C(y)=1]$) on some input strings.
	
	Recall that a Turing machine running in time $T$ can be simulated by a Boolean circuit of size $O(T \cdot \log T)$ (see e.g.~\citep{book_complexity}), and that the conversion from machines to circuits can be done efficiently. We will implicitly use this in a few proofs.

	\subsection{Structural properties of languages}

	\begin{definition}[Downward self-reducible language] \label{d:dsr}
		A language $L \subseteq \{0,1\}^*$ is said to be downward self-reducible \emph{(}$\mathsf{dsr}$\emph{)} if there is a polynomial-time oracle algorithm $D$ that for any input $x$, only asks queries of length $< |x|$, and such that $D^L$ decides $L$.
	\end{definition}
	
	\begin{definition}[Paddable language]\label{d:paddable}
		A language $L \subseteq \{0,1\}^*$ is said to be paddable if there is a polynomial-time computable function $f\colon \{0,1\}^* \times \{0,1\}^* \rightarrow \{0,1\}^*$ such that for each $x \in \{0,1\}^*$ and $m > |x|$, $|f(x,1^m)| = m$ and $x \in L$ iff $f(x,1^m) \in L$. 
	\end{definition}
	
	\begin{definition}[Self-correctable language] \label{d:selfcorrectable}
		Let $L \subseteq \{0,1\}^*$ be a language, $C$ be a probabilistic polynomial-time oracle algorithm, and $\varepsilon: \mathbb{N} \rightarrow [0,1]$ be a function. We say that $C$ is an $\varepsilon(n)$ self-corrector for $L$ at input length $n$ if:
		\begin{enumerate}
			
			\item On any input $x$ of length $n$ and for any oracle $O$, $C^O$ only makes queries of length $n$ on input $x$.
			
			\item For all $n \in \mathbb{N}$ and all $O \subseteq \{0,1\}^*$ such that $O(y) = L(y)$ for at least a $1-\varepsilon(n)$ fraction of inputs $y$ of length $n$, $C^{O}(x) = L(x)$ with probability at least $3/4$ \emph{(}over the internal randomness of $C$\emph{)} for each $x$ of length $n$.
			
		\end{enumerate} 
		We say that $L$ is self-correctable if there is a constant $k$ and a probabilistic polynomial-time oracle algorithm $C$ such that $C$ is a $1/n^k$ self-corrector for $L$ at length $n$ for every $n \in \mathbb{N}$.
	\end{definition}
	
	\begin{definition}[Instance-checkable language] \label{d:instancecheckable}
		A language $L$ is said to be same-length instance-checkable if there is a probabilistic polynomial-time oracle machine $I$ with output in $\{0,1,?\}$ such that for any input $x$:
		\begin{enumerate}
			
			\item $I$ only makes oracle queries of length $|x|$.
			
			\item $I^L(x) = L(x)$ with probability $1$.
			
			\item $I^{O}(x) \in \{L(x), ?\}$ with probability at least $2/3$ for any oracle $O$.
			
		\end{enumerate}
		
	\end{definition}
	
	\subsection{Pseudorandomness}

We say that a Boolean function $f \colon \{0,1\}^m \to \{0,1\}$ $\delta$-\emph{distinguishes} distributions $\mathcal{D}_1$ and $\mathcal{D}_2$ supported over $\{0,1\}^m$ if
$$
\Big | \Pr_{y \sim \mathcal{D}_1}[f(y) = 1] - \Pr_{y \sim \mathcal{D}_2}[f(y) = 1] \Big | > \delta.
$$
We will often be interested in the distribution induced by a ``generator'' $G \colon \{0,1\}^{\ell} \to \{0,1\}^m$, by which we mean the distribution $G(\mathcal{U}_\ell)$ supported over $\{0,1\}^m$.

	\begin{theorem}[\cite{DBLP:journals/jcss/ImpagliazzoW01,DBLP:journals/cc/TrevisanV07}]\label{t:io-PRG}
		For every $b \geq 1$, there is a sequence $\{G_\ell\}_{\ell \geq 1}$, where $G_{\ell}\colon \bool^\ell \to \bool^{\ell^{b}}$ is computable in time $2^{O(\ell)}$, such that if there is a polynomial-time samplable distribution $\{\mathcal{D}_{\ell^{b}}\}$ of Boolean circuits and a constant $c$ for which for all sufficiently large $\ell$, with probability at least $\ell^{-b\cdot c}$ over $C \sim \mathcal{D}_{\ell^{b}}$, $C$ $(1/10)$-distinguishes $G_\ell$ from $\mathcal{U}_{\ell^b}$, then  $\mathsf{PSPACE}\subseteq \mathsf{BPP}$.
	\end{theorem}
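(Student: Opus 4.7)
The plan is to instantiate the Nisan--Wigderson generator with a carefully chosen $\PSPACE$-complete language and then invoke a uniform reconstruction argument in the style of Impagliazzo--Wigderson and Trevisan--Vadhan. Fix $b \geq 1$. The starting ingredient is a $\PSPACE$-complete language $L^{*}$ that is simultaneously \emph{downward self-reducible}, \emph{random self-reducible}, \emph{paddable}, and \emph{same-length instance-checkable}. Such a language is obtained by arithmetising $\mathsf{TQBF}$ over a suitable finite field, as in \cite{DBLP:journals/cc/TrevisanV07}; I will use these properties as black boxes via Definitions \ref{d:dsr}, \ref{d:paddable}, \ref{d:selfcorrectable}, and \ref{d:instancecheckable}.

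For each $\ell$, pick $m = \Theta(\ell)$ and define $G_\ell \colon \{0,1\}^\ell \to \{0,1\}^{\ell^b}$ to be the NW generator built from a combinatorial $(\ell, m)$-design whose hard function is $L^{*}_m$, the truth table of $L^{*}$ on inputs of length $m$. Since $L^{*}\in\PSPACE$, evaluating $L^{*}_m$ at any point takes time $2^{O(m)}=2^{O(\ell)}$, so $G_\ell$ is computable in time $2^{O(\ell)}$ as required. Now assume for contradiction that a polynomial-time samplable distribution $\{\mathcal{D}_{\ell^b}\}$ distinguishes $G_\ell$ from $\mathcal{U}_{\ell^b}$ with probability at least $\ell^{-bc}$ for all large $\ell$. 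The standard NW hybrid/next-bit-predictor argument transforms a distinguishing circuit $C$ into a short program that, given black-box access to $L^{*}$ at lengths below $m$ (coming from the design), computes a circuit that predicts $L^{*}_m$ with advantage $1/\poly(\ell)$. Because $\mathcal{D}_{\ell^b}$ is samplable and the distinguishing event has probability $\ell^{-bc}$, this whole reconstruction can be carried out by a uniform samplable procedure running in time $2^{O(\ell)}$ that, with inverse-polynomial probability, outputs a circuit $\tilde{C}$ weakly predicting $L^{*}_m$.

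The crux is now the uniform bootstrapping, which exploits the structural properties of $L^{*}$. Using random self-reducibility of $L^{*}$, $\tilde{C}$ is amplified (by majority over polynomially many random self-reductions and Chernoff-type concentration) into a circuit that agrees with $L^{*}_m$ on a $1 - 1/m^{k}$ fraction of inputs; feeding this into the self-corrector from Definition \ref{d:selfcorrectable} yields a randomized algorithm that is correct on \emph{every} input of length $m$ with high probability, certified by the same-length instance checker. We then induct upward in input length using downward self-reducibility: given a high-confidence BPP routine for $L^{*}$ at length $i-1$, the dsr reduction at length $i$ produces an oracle algorithm whose queries can be answered by the length-$(i-1)$ routine, after which rsr, self-correction, and instance checking are applied again to drive the error down at length $i$. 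Iterating to any polynomial length $n$, with a standard error-budget argument, puts $L^{*}$ in $\BPP$, and hence $\PSPACE\subseteq\BPP$.

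The main obstacle is controlling the uniform reconstruction: we never possess a single distinguisher, only a samplable source that produces one with small probability, so the bootstrapping must be implemented as an algorithm that samples candidate distinguishers, uses the instance checker to \emph{verify} correctness at each length, and rejects bad samples. Tracking the success probability through the induction on input length---ensuring that the $\ell^{-bc}$ success rate of $\mathcal{D}_{\ell^b}$ survives all the amplification, self-correction, and dsr steps, and that no step requires non-uniform advice---is the delicate part. The combination of rsr (to convert weak correctness into strong correctness), instance-checkability (to verify samples without trusted advice), and dsr (to climb in input length without reusing the distinguisher) is precisely what makes this possible, and is the reason $L^{*}$ must be chosen with all three properties simultaneously.
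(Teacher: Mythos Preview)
The paper does not supply a proof of this theorem; it is quoted in the preliminaries and attributed to \cite{DBLP:journals/jcss/ImpagliazzoW01,DBLP:journals/cc/TrevisanV07}. Your sketch follows precisely the strategy of those papers---an NW generator built from a $\PSPACE$-complete language $L^*$ that is downward self-reducible, self-correctable and same-length instance-checkable, together with a uniform reconstruction that iteratively produces small circuits for $L^*$ at every length---so at the level of approach you are aligned with the cited proof.

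Two points in your sketch are technically off, and the first one matters for the logic. In the NW reconstruction, the weak predictor for $L^*_m$ needs values of $L^*$ \emph{at length $m$}, not at lengths below $m$. The design sets all have size $m$; the small overlaps only bound the \emph{number} of $L^*_m$ evaluations hard-wired into the predictor, not their input length. This is exactly why $\mathsf{dsr}$ is indispensable: the learning procedure at stage $m$ (this is Lemma~\ref{l:learnablePRG} in the paper) makes oracle queries to $L^*$ of length exactly $m$, and one answers those queries by applying $\mathsf{dsr}$ once to reduce to lengths $<m$, then evaluating the circuits built at earlier stages. Your description of the induction inverts the roles: at stage $i$ one runs the NW learner (which internally uses self-correction/instance-checking), and $\mathsf{dsr}$ is used only to service the learner's length-$i$ oracle calls. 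Climbing from $i-1$ to $i$ with $\mathsf{dsr}$ alone, as your wording suggests, would blow up the circuit size geometrically; the learner is what keeps the circuit at each length polynomially small.

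Second, the reconstruction does not take time $2^{O(\ell)}$; it takes time $\poly(\ell^b)=\poly(m)$, since it only samples from $\mathcal{D}_{\ell^b}$, runs the polynomial-size distinguisher, and performs polynomially many self-correction and instance-checking steps. The $2^{O(\ell)}$ bound is the time to \emph{evaluate} $G_\ell$ (because computing $L^*_m$ in $\PSPACE$ costs $2^{O(m)}$), not the time of the learner. Getting this right is what yields $\PSPACE\subseteq\BPP$ rather than merely $\PSPACE\subseteq\BPE$.
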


    We say that a sequence $G = \{G_n\}_{n \geq 1}$ of functions $G_n \colon \{0,1\}^{\ell(n)} \to \{0,1\}^{m(n)}$ is a pseudorandom generator (PRG) against $\mathsf{DTIME}[T]$ with error $\varepsilon(n)$ if for every deterministic algorithm $A$ running in time $T(m)$ on inputs of length $m$, we have for every large enough $n$ that
    $$
    \Big | \Pr_{z \sim \mathcal{U}_{m(n)}}[A(z) = 1] - \Pr_{y \sim G_n(\mathcal{U}_{\ell(n)})}[A(y) = 1]  \Big | \leq \varepsilon(n).
    $$
	We say that $G$ as above is an \emph{infinitely often} pseudorandom generator when for each fixed algorithm $A$ this is only guaranteed to hold for infinitely many values of the parameter $n$. We say that $G$ is computable in \emph{pseudo-deterministic} polynomial time if there is a randomized algorithm $B$ that, when given $1^n$ and $x \in \{0,1\}^{\ell(n)}$, runs in time $\mathsf{poly}(n)$ and outputs $G_n(x)$ with probability at least $2/3$. Finally, we also extend this definition to the case where the randomized algorithm $B$ requires an advice string of length $a(n)$ to compute $G_n$, meaning that there is a function $\alpha \colon \mathbb{N} \to \{0,1\}^*$ with $|\alpha(n)| = a(n)$ such that $B(1^n,x,\alpha(n)) = G_n(x)$ with probability $\geq 2/3$. Note that in this case $B$ does not need to satisfy the promise of bounded acceptance probabilities if it is given an incorrect advice string.

	\subsection{Time-bounded Kolmogorov complexity}\label{s:prelim_kolmogorov}

   We consider natural probabilistic analogues of standard notions from Kolmogorov complexity. We refer the reader to \citep{allender1992applications, DBLP:conf/fsttcs/Allender01, fortnow2004kolmogorov} for more background in time-bounded Kolmogorov complexity and its applications.
	
	We start with the definition of $\rKt$ complexity \citep{DBLP:conf/icalp/Oliveira19}. 	Recall that probabilistic Turing machines have an extra tape with random bits. We will use $\bm{M_{\leq t}}(a)$ to refer to a random variable representing the content of the output tape of $M$ after it computes for $t$ steps over the input string $a$ (or the final content of the output tape if the computation halts before $t$ steps on a given choice of the random string). Fix a universal Turing machine $U$ capable of simulating probabilistic machines (i.e.,~$U$ has its own random tape). We will abuse notation and use $|M|$ to denote the length of the binary encoding of a machine $M$ with respect to $U$.

	\begin{definition}[$\rKt$ complexity of a string] For $\delta \in [0,1]$ and a string $x \in \{0,1\}^*$, we let $$\rKt_\delta(x) = \min_{M,\,a,\,t} \big \{|M| + |a| + \lceil \log t \rceil \,\mid\, \Pr[\bm{M_{\leq t}}(a) = x] \geq \delta \big \},\vspace{-0.2cm}$$
		where the minimisation takes place over the choice of a probabilistic machine $M$, its input string $a$, and the time bound $t$. The randomized time-bounded Kolmogorov complexity of $x$ is set to be $\randkt(x) \eqdef \rKt_{2/3}(x)$.
	\end{definition}
	
	We also introduce a version of (randomised) time-bounded Kolmogorov complexity that fixes a time bound $t$ for the generation of $x$. While a similar definition for deterministic algorithms has been investigated in several works, to our knowledge, its randomised analogue has not been considered before.

\begin{definition}[$\mathsf{rK}^t$ complexity of a string]
For $\delta \in [0,1]$, a string $x \in \{0,1\}^*$, and a time bound $t$, we let $$\mathsf{rK}^t_\delta(x) = \min_{M,\,a} \big \{|M| + |a|\,\mid\, \Pr[\bm{M_{\leq t(|x|)}}(a) = x] \geq \delta \big \},$$
		where the minimisation takes place over the choice of a probabilistic machine $M$ and an input string $a$. The randomized $t$-time-bounded Kolmogorov complexity of $x$ is set to be $\mathsf{rK}^t(x) \eqdef \mathsf{rK}^t_{2/3}(x)$.
\end{definition}

In this work, we will be interested in $\mathsf{rK}^t$ for a fixed polynomial $t(n) = n^b$ (with respect to $n = |x|$), where $b$ might depend on other parameters depending on the context. We might write $\mathsf{rK}^{\mathsf{poly}}$ in informal discussions.
	
We stress that the (deterministic) Kolmogorov complexity measures $\mathsf{Kt}$ and $\mathsf{K}^t$ have been widely investigated in algorithms and complexity, and $\mathsf{rKt}$ and $\mathsf{rK}^t$ are simply natural probabilistic analogues of these measures.

	\section{A polynomial-time computable pseudodeterministic PRG with 1 bit of advice} \label{s:PRG_construction}
	
	This section establishes our main result (Theorem \ref{t:intro_PRG}) and derives new consequences about the time-bounded Kolmogorov complexity of prime numbers and other objects.  
	
	\subsection{The pseudorandom generator}

	\begin{theorem} \label{t:PseudodetPolytimePRG}
		For each $\varepsilon > 0$ and $c,d \geq 1$, there is an infinitely often pseudorandom generator $G = \{G_n\}_{n \geq 1}$ mapping $n^{\varepsilon}$ bits to $n$ bits that is secure against $\DTIME(n^c)$ with error $1/n^d$ and computable in pseudodeterministic polynomial time with $1$ bit of advice. More generally, $G$ is infinitely often secure against any ensemble $\mathfrak{D} = \{\mathcal{D}_n\}_{n \geq 1}$ of distributions $\mathcal{D}_n$ supported over circuits of size $\leq n^c$ and samplable in time $n^c$, in the sense that for infinitely many $n$, with probability at most $1/n^d$ over $C \sim \mathcal{D}_n$ we have that $C$ $1/n^d$-distinguishes $\mathcal{U}_n$ and $G_n(\mathcal{U}_{n^\varepsilon})$. 
	\end{theorem}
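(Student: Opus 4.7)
The plan is to apply the uniform hardness-to-randomness reduction of Theorem \ref{t:io-PRG} (due to Impagliazzo--Wigderson and Trevisan--Vadhan) to a carefully designed hard language $L_k \in \mathsf{BPP}/1$, rather than an arbitrary one. Directly invoking a generic hard language from the Fortnow--Santhanam hierarchy fails, because the reduction in \cite{DBLP:journals/cc/TrevisanV07} is non-black-box and needs the hard language to be downward self-reducible ($\mathsf{dsr}$), random self-reducible ($\mathsf{rsr}$), and instance-checkable. My plan is therefore to build $L_k$ as an \emph{information-preserving} padded version of the structured $\mathsf{PSPACE}$-complete language $L_{\mathsf{hard}}$ of Trevisan--Vadhan. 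By making the padding injective (so that the unpadded instance is recoverable from a padded one), $L_k$ should inherit $\mathsf{dsr}$, $\mathsf{rsr}$, and same-length instance-checking from $L_{\mathsf{hard}}$, but only at input lengths that are ``good'' in the sense of Fortnow--Santhanam, i.e., where the padding amount matches a target value that is not efficiently computable. One bit of advice per length suffices to identify good lengths, which simultaneously certifies $L_k \in \mathsf{BPP}/1 \setminus \mathsf{BPTIME}[n^k]/1$ via \cite{DBLP:conf/focs/FortnowS04}.

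\textbf{Construction of $G_n$.} Given $\varepsilon, c, d$, I would choose $k = k(c,d)$ sufficiently large and $\ell = \Theta(n^{\varepsilon})$, and define $G_n \colon \{0,1\}^{n^{\varepsilon}} \to \{0,1\}^n$ by combining the generator $G_{\ell}$ of Theorem \ref{t:io-PRG} with the truth table of $L_k$ at inputs of length $O(\log \ell)$. Since $L_k \in \mathsf{BPP}/1$, this truth table is computable pseudodeterministically in time $\mathsf{poly}(n)$ given one bit of advice, namely the goodness bit for the relevant small length. Standard error amplification, followed by a union bound over the $\mathsf{poly}(\ell)$ entries of the truth table, drives the per-entry error to $o(1/2^{\ell})$, so that $G_n(x)$ is pseudodeterministic on \emph{every} seed $x$ with probability $\geq 2/3$ and runs in polynomial time.

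\textbf{Security and the main obstacle.} Suppose toward contradiction that there is a polynomial-time samplable ensemble $\{\mathcal{D}_n\}$ of size-$n^c$ circuits that $(1/n^d)$-distinguishes $G_n(\mathcal{U}_{n^{\varepsilon}})$ from $\mathcal{U}_n$ for all sufficiently large $n$. The reduction of \cite{DBLP:journals/cc/TrevisanV07} transforms such a distinguisher into a $\mathsf{BPTIME}[n^k]$ learner for $L_{\mathsf{hard}}$ at length $O(\log n)$, via an inductive procedure that builds a circuit for $L_{\mathsf{hard}}$ at length $i$ from a circuit at length $i-1$ using $\mathsf{dsr}$ together with $\mathsf{rsr}$ and the instance-checker to validate candidates. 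Transporting this to $L_k$, the core obstacle is that each such reduction step is only reliable at a \emph{good} length, and naively the induction would require one advice bit per intermediate length, accumulating $\Omega(m)$ advice bits and destroying the hardness contradiction. The crucial structural observation I would exploit is that, in the Fortnow--Santhanam padding scheme, the goodness of a length $n$ is inherited by a deterministically computable sequence of smaller lengths at which all $\mathsf{dsr}/\mathsf{rsr}$ queries land; these are precisely the lengths at which $L_{\mathsf{hard}}$ is embedded by the padding, so the inductive argument can be routed through this chain without invoking additional advice. Implementing the whole learning procedure using only the single goodness bit for $n$ then yields a $\mathsf{BPTIME}[n^k]/1$ algorithm for $L_k$, contradicting its hardness and completing the argument.
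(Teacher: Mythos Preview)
Your high-level plan is essentially the paper's approach: build $L_k$ as an information-preserving padded version of the structured $L_{\mathsf{hard}}$, plug it into a TV-style generator, and resolve the advice-accumulation obstacle in the learning procedure by routing the induction through a chain of ``good'' lengths whose goodness is inherited from the single top length. Your identification of the central obstacle and its fix is exactly right.

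However, your description of the construction of $G_n$ is mis-parameterized in a way that would break the argument. You propose to take $\ell = \Theta(n^{\varepsilon})$ and use ``the truth table of $L_k$ at inputs of length $O(\log \ell)$'', i.e., at length $O(\log n)$. But the only hardness guarantee for $L_k$ is $L_k \notin \mathsf{BPTIME}[m^k]/1$ as a function of its own input length $m$; at $m = O(\log n)$ this is hardness against time $\mathsf{polylog}(n)$, which cannot yield security against $\mathsf{DTIME}[n^c]$. The paper instead instantiates the NW/TV construction so that $G_n$ with seed length $n^{\varepsilon}$ makes \emph{oracle queries} to $L_k$ at a single \emph{polynomial} length $m(n) = n^{\gamma}$ for some $\gamma < \varepsilon$ (this is Lemma~\ref{l:learnablePRG}, not Theorem~\ref{t:io-PRG}). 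No truth table is computed: pseudodeterministic computability follows because the $\mathsf{BPP}/1$ algorithm for $L_k$ answers $\mathsf{poly}(n)$ queries, all of the same length $n^{\gamma}$, so one advice bit suffices. Correspondingly, in the security step the learning procedure builds circuits for $L_k$ at the polynomial lengths $m_0,\ldots,m_{r(m)}$ inside the good-length sequence $I_m$, using self-correction of $L_k$ at those lengths and the $\mathsf{dsr}$ of $L_{\mathsf{hard}}$ at lengths $\leq r(m) < m/2$; none of this happens at length $O(\log n)$. Once you correct the scale from $O(\log \ell)$ to $n^{\gamma}$ and replace ``truth table'' by ``oracle queries'', your outline coincides with the paper's proof. (One minor omission: the paper also handles the easy case $L_{\mathsf{hard}} \in \mathsf{BPP}$ separately, where a much stronger PRG follows directly.)
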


	The remainder of this section will be dedicated to a proof of Theorem \ref{t:PseudodetPolytimePRG}. For simplicity, we consider an arbitrary $\varepsilon > 0$ and fix $c =  d = 1$. It is not hard to see that our argument generalises to arbitrary constants $c, d \geq 1$. Moreover, we focus on the case of distinguishers from $\DTIME(n)$. The security of $G$ against samplable circuits follows from a standard adaptation of the proof.
	
	Our construction will use a $\PSPACE$-complete language with certain special properties. This construction is given by \cite{DBLP:conf/focs/Chen19}, building on \cite{DBLP:journals/cc/TrevisanV07}. Chen only claims that the language is self-correctable in a non-uniform sense (as that is all he needs in his proof), but it is clear from his proof of self-correctability that it holds in a uniform sense as well.
	
	\begin{lemma} [\cite{DBLP:journals/cc/TrevisanV07, DBLP:conf/focs/FortnowS04, DBLP:conf/focs/Chen19}] \label{l:splPSPACE}
		There is a $\mathsf{PSPACE}$-complete language $L_{hard}$ that is downward self-reducible, self-correctable, paddable and same-length instance-checkable.
	\end{lemma}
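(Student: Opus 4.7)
The plan is to construct $L_{hard}$ via arithmetization of a standard $\mathsf{PSPACE}$-complete problem, following the Shamir/LFKN paradigm, and then verify the four listed properties one at a time. Concretely, I would start with true quantified Boolean formulas (TQBF) and apply the standard arithmetization to obtain, for each formula $\phi$, a polynomial $P_\phi$ of polynomially bounded degree over a finite field $\mathbb{F}_q$ whose value at a canonical point encodes the truth value of $\phi$. The language $L_{hard}$ can then be taken to consist of tuples $(\phi, i, \vec{a}, v)$, where $i$ indicates a ``level'' in the sum-check recursion, $\vec{a}$ is a point in $\mathbb{F}_q^k$, and $v \in \mathbb{F}_q$, with membership asserting $P_{\phi,i}(\vec{a}) = v$. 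The $\mathsf{PSPACE}$-hardness is immediate from TQBF, and membership in $\mathsf{PSPACE}$ is immediate from space-efficient polynomial evaluation.

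Next I would verify the structural properties. Downward self-reducibility follows from the sum-check recursion: the value $P_{\phi,i}(\vec{a})$ can be computed from $O(q)$ values of $P_{\phi,i-1}(\cdot)$ at strictly smaller instance lengths, provided the encoding of the tuple shrinks when $i$ decreases, which one arranges by a suitable succinct encoding scheme. Self-correctability and same-length instance-checkability follow from the Beaver--Feigenbaum style random self-reduction of low-degree polynomials: to evaluate $P_{\phi,i}$ at $\vec{a}$, one chooses a uniformly random line through $\vec{a}$, queries the (faulty) oracle at $\deg(P) + 1$ other points of that line, interpolates, and repeats with majority voting; the line's other points are uniformly distributed at the same input length, giving both self-correction with error $1/n^k$ and an instance checker that outputs ``?'' when the univariate polynomials recovered from two independent lines disagree.

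For paddability, I would ensure that the tuple encoding admits dummy padding coordinates: formally, extend the encoding by an unused padding string whose length can be freely chosen, and design the reductions above so that they are insensitive to this padding. This requires that the downward self-reduction target the padded length one below, and that the random self-reduction keep the padding coordinate fixed, so that all queries remain at the same padded input length. With a careful canonical encoding of $(\phi,i,\vec{a},v,\text{pad})$, these are compatible.

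The main obstacle is making all four properties hold simultaneously for a \emph{single} language in a \emph{uniform} manner, which is exactly what \cite{DBLP:journals/cc/TrevisanV07} and \cite{DBLP:conf/focs/Chen19} handle. The subtle point is that downward self-reducibility must shrink a specific notion of ``input length'' that nevertheless matches the length at which random self-correction operates, and paddability must respect this length accounting so that padded instances inherit the self-reducibility structure. I would therefore conclude by citing the constructions of \cite{DBLP:journals/cc/TrevisanV07} and \cite{DBLP:conf/focs/Chen19}, observing (as the remark preceding the lemma already notes) that Chen's self-correction argument is in fact uniform, so all four properties hold uniformly for the same $L_{hard}$.
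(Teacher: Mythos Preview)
The paper does not actually supply a proof of this lemma: it is stated as a citation to \cite{DBLP:journals/cc/TrevisanV07, DBLP:conf/focs/FortnowS04, DBLP:conf/focs/Chen19}, with only the brief remark (preceding the lemma) that Chen's self-correctability argument is uniform even though he only stated it non-uniformly. Your proposal is therefore already \emph{more} detailed than the paper's treatment, and your sketch --- arithmetize TQBF, take the sum-check intermediate polynomials as the language, get downward self-reducibility from the sum-check recursion, get self-correctability and instance-checkability from random-line interpolation for low-degree polynomials, and handle paddability by an explicit padding coordinate --- is precisely the construction carried out in the cited works. Your closing paragraph, deferring the delicate length-accounting to \cite{DBLP:journals/cc/TrevisanV07} and \cite{DBLP:conf/focs/Chen19} and noting the uniformity of Chen's self-corrector, matches exactly what the paper does.
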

	
	We first show that if $L_{hard}$ can be solved efficiently, we get a much stronger version of Theorem \ref{t:PseudodetPolytimePRG}.
	
	\begin{lemma} \label{l:PSPACEeasy}
		If $L_{hard} \in \BPP$, then there is a \emph{PRG} $\{G_n\}$ with seed length $O(\log(n))$ secure against $\DTIME(n)$, and computable in pseudodeterministic polynomial time.
	\end{lemma}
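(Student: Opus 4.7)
The plan is to produce a hard Boolean function in $\mathsf{BPE}$ from the hypothesis and then plug it into the standard Impagliazzo--Wigderson hardness-to-randomness construction, while carrying enough error reduction through each oracle call so that the generator output is a fixed string with high probability.

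First I would use that $L_{hard}$ is $\mathsf{PSPACE}$-complete and assumed to lie in $\mathsf{BPP}$ to deduce $\mathsf{PSPACE} \subseteq \mathsf{BPP}$, and then pad this containment to obtain $\mathsf{DSPACE}[2^{O(n)}] \subseteq \mathsf{BPTIME}[2^{O(n)}] = \mathsf{BPE}$. Next, I would isolate a hard function inside $\mathsf{DSPACE}[2^{O(n)}]$: in space $O(2^n)$ one can enumerate all truth tables of length $2^n$ and all circuits of size $2^{n/2}$ and output the lexicographically first truth table computed by none of them. This yields a language $L^\star \in \mathsf{DSPACE}[2^{O(n)}]$ whose characteristic function on length-$n$ inputs has circuit complexity at least $2^{n/2}$ on every input length; by the previous step, $L^\star \in \mathsf{BPE}$.

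With $L^\star$ in hand, I would invoke the black-box Impagliazzo--Wigderson PRG: given any $f \in \mathsf{DTIME}[2^{O(n)}]$ with circuit complexity at least $2^{\Omega(n)}$, one obtains a generator $G_m \colon \{0,1\}^{O(\log m)} \to \{0,1\}^m$ fooling $\mathsf{SIZE}[m^c]$ for every constant $c$, computable in time $\mathrm{poly}(m)$ using only $\mathrm{poly}(m)$ oracle queries to $f$ on inputs of length $O(\log m)$. Instantiating $f = L^\star$ and answering each oracle call via the $\mathsf{BPE}$ algorithm for $L^\star$, amplified to error at most $m^{-\omega(1)}$ at a cost of $\mathrm{poly}(m)$ time per query, a union bound over the $\mathrm{poly}(m)$ queries implies that with probability at least $2/3$ the algorithm recovers the entire relevant slice of the truth table of $L^\star$ correctly. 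Since the NW generator is a deterministic function of the seed and of these truth-table bits, the overall generator outputs a canonical string $G_m(x)$ with probability at least $2/3$ on every seed $x$, giving the required pseudodeterministic polynomial-time computation. Security against $\mathsf{DTIME}[n]$ follows a fortiori from security against $\mathsf{SIZE}[n^c]$.

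The only real subtlety is the last step: the textbook statement of hardness-to-randomness assumes a deterministic hard function in $\mathsf{E}$, whereas here the hard function is only in $\mathsf{BPE}$. I expect this to be handled entirely by amplification plus a union bound, as sketched above, rather than by any new idea. The genuinely difficult case — when one does not get to assume $L_{hard} \in \mathsf{BPP}$ — is precisely what Theorem \ref{t:PseudodetPolytimePRG} addresses; the present lemma is a warm-up that illustrates the plug-in strategy in a much easier regime, which is presumably why it is stated before the main construction.
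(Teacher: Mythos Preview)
Your proposal is correct and follows essentially the same route as the paper: deduce $\mathsf{PSPACE}\subseteq\mathsf{BPP}$ from the hypothesis, pad to get $\mathsf{DSPACE}[2^{O(n)}]\subseteq\mathsf{BPE}$, diagonalize in exponential space to obtain a language with circuit complexity $2^{\Omega(n)}$ (hence in $\mathsf{BPE}$), and feed it into the Impagliazzo--Wigderson generator, using amplification plus a union bound over the $\mathrm{poly}(m)$ oracle queries to argue pseudodeterminism. The paper compresses the last step into a citation of \cite[Lemma~1]{DBLP:conf/stoc/OliveiraS17}, but the content is exactly what you wrote out.
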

	
	\begin{proof}
		If $L_{hard} \in \BPP$, then since $L_{hard}$ is $\PSPACE$-complete, we have that $\PSPACE = \BPP$. It follows by a simple padding argument that $\DSPACE(2^{O(n)}) \subseteq \BPE$. By direct diagonalization, there is a language $L'$ in $\DSPACE(2^{O(n)})$ that, for all but finitely many input lengths, does not have circuits of size $2^{0.9 n}$, and by the simulation in the previous sentence, we have that $L' \in \BPE$. Now the desired conclusion follows from Lemma 1 in \cite{DBLP:conf/stoc/OliveiraS17}.
	\end{proof}
	
	Hence we can focus on the case that $L_{hard} \not \in \BPP$. Roughly speaking, we can use the same-length checkability of $L$ to define an {\it optimal} algorithm for $L$, which implies that there is a time bound $T$ for computing $L$ probabilistically that is optimal to within polynomial factors. 
	
	Let $t \colon \mathbb{N} \to \mathbb{N}$ be an arbitrary function. It will be convenient to introduce the following variant of the class $\mathsf{BPTIME}[t]$. We use $\widetilde{\BPTIME}(t(n))$ to denote the set of languages $L$ that admit a probabilistic algorithm $A$ with the following guarantees. For any large enough input length $n$ and for every $x \in \{0,1\}^n$, with probability at least $1 - 1/n$ over its internal randomness $A(x)$ runs for at most $t(n)$ steps and outputs the correct answer $L(x)$. The difference compared with the standard definition $\mathsf{BPTIME}[t]$ is that $A$ might run for more than $t(n)$ steps on some computation paths. Note that when $t$ is \emph{time constructible} the two definitions essentially coincide, since we can always halt the computation of $A$ after $t(n)$ steps. In particular, using constructible upper bounds on running time we have $\widetilde{\mathsf{BPP}} = \mathsf{BPP}$.
	
	For convenience, we say that a function $t \not \in O(\poly(n))$ if for every constant $c \in \mathbb{N}$, there are infinitely many values of $n$ such that $t(n) > c \cdot n^c$.
	
	\begin{lemma} [Adaptation of \cite{DBLP:conf/focs/FortnowS04}] \label{l:optimal}
		Suppose $L_{hard} \not \in \BPP$. There is a non-decreasing function $T \colon \mathbb{N} \rightarrow \mathbb{N}$ and a constant $\delta > 0$ such that for any constant $b>0$, $L_{hard} \in \widetilde{\BPTIME}(T(n)) \setminus \BPTIME(n^b\cdot T(n)^{\delta})/\delta \log T(n)$, and such that $T(n) \not \in O(\poly(n))$.
	\end{lemma}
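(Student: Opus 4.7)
The plan is to define $T(n)$ as (a monotonized version of) the running time of a universal probabilistic algorithm $U$ for $L_{hard}$ built around the same-length instance checker $I$ guaranteed by Lemma \ref{l:splPSPACE}. Enumerate probabilistic Turing machines $M_1, M_2, \ldots$. On input $x$ of length $n$, $U(x)$ dovetails over pairs $(i,t)$ in order of $i + \log t$: for each such pair, invoke $I^{O_{i,t}}(x)$, where the oracle $O_{i,t}(y)$ answers queries by simulating $M_i(y)$ for $t$ steps with independent $\poly(n)$-fold amplification; as soon as $I$ returns a bit $b \in \{0,1\}$, halt and output $b$. If some probabilistic machine $M_{i^*}$ decides $L_{hard}$ correctly on length $n$ in time $t^*(n)$ with high probability, then once the enumeration reaches $(i^*, t^*(n))$ the oracle $O_{i^*,t^*(n)}$ behaves like $L_{hard}$ on every query, so the checker returns the correct bit with high probability. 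Let $T_0(n)$ be the smallest time bound within which $U$ outputs $L_{hard}(x)$ with probability $\geq 1-1/n$ on every $x$ of length $n$, and set $T(n) := \max_{m \leq n} T_0(m)$. Then $L_{hard} \in \widetilde{\BPTIME}(T(n))$ directly, and if $T \in O(\poly(n))$ then capping $U$ at a polynomial time places $L_{hard}$ in $\BPP$, contradicting the hypothesis; so $T \not\in O(\poly(n))$.

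For the lower bound, fix a sufficiently small constant $\delta > 0$ (chosen below the reciprocal of the dovetailing overhead of $U$, e.g.\ $\delta = 1/8$) and any $b > 0$, and suppose toward contradiction that $L_{hard} \in \BPTIME(n^b T(n)^\delta)/\delta \log T(n)$ via a machine $M^*$ with advice $\alpha(n)$ of length $\delta \log T(n)$. Since $T(n)$ is not a priori computable, I would build a uniform algorithm $M^{**}(x)$ that dovetails over advice lengths: for $\ell = 1, 2, \ldots$ and each $\sigma \in \{0,1\}^\ell$, treat $O_\sigma(y) := M^*(y,\sigma)$ simulated for $n^b \cdot 2^\ell$ steps as the oracle for a $\poly(T(n))$-amplified call to $I^{O_\sigma}(x)$; output the first bit in $\{0,1\}$ that any invocation returns. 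By the instance-checker guarantee, every incorrect $\sigma$ produces $L_{hard}(x)$ or $?$ with high probability, while at $\ell^* := \delta \log T(n)$ the correct advice $\sigma^*$ produces $L_{hard}(x)$ with high probability, so a union bound shows that $M^{**}$ outputs $L_{hard}(x)$ with probability $\geq 2/3$. Its total running time is dominated by the final iteration,
$$\sum_{\ell \leq \ell^*} 2^\ell \cdot n^b \cdot 2^\ell \cdot \poly(n) \;=\; \widetilde{O}\bigl(T(n)^{2\delta} \cdot n^b\bigr).$$
But $M^{**}$ is a fixed uniform probabilistic algorithm appearing in $U$'s enumeration, so $T(n) \leq \poly(n) \cdot T(n)^{O(\delta)} \cdot n^{O(b)}$, which for $\delta$ small enough forces $T \in O(\poly(n))$ and contradicts the first paragraph.

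The main technical obstacle will be the error-probability bookkeeping. The advice enumeration inside $M^{**}$ considers up to $T(n)^\delta$ incorrect advices, and each can with small probability make the checker emit a wrong bit rather than $?$; to prevent such a wrong bit from being the first output, I would amplify every checker invocation to failure probability $\leq 1/\poly(T(n))$, absorb the resulting polylogarithmic factor into the $\widetilde{O}$ estimate, and take a union bound so that no invocation returns a wrong bit. A similar amplification is needed inside $U$ so that simulating $M_i$ as an oracle is consistent enough for the checker's guarantees to apply. A more cosmetic subtlety is that $T$ is non-computable, so neither $U$ nor $M^{**}$ can use $T(n)$ directly; the dovetailings over $(i,t)$ and over $\ell$ sidestep this, and the running-time analysis depends only on the optimal $(i^*,t^*)$ and $\ell^*$ that actually succeed.
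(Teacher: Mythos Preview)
Your argument is correct and is organized somewhat differently from the paper's. The paper hardwires the length-$n$ advice $\alpha(n)$ into a single non-uniform program of description length $\approx 2\delta\log T(n)$ and observes that its own $\mathrm{OPTIMAL}$ algorithm tries this program at a suitable stage; this only bounds $T_0(n)$ in terms of $T(n)$ at the single length $n$, so the paper then invokes the \emph{paddability} of $L_{hard}$ (its Claim~13) to show $T(n)\le T_0(n)^3\cdot n^{O(1)}$ and close the loop. Your route instead builds a \emph{uniform} machine $M^{**}$ that internally enumerates over advice lengths and advice strings; since $M^{**}$ has a fixed index $i^{**}$ in $U$'s enumeration and runs in time $\widetilde O(T(m)^{2\delta}m^{O(b)})$ on \emph{every} length $m$, you get $T_0(m)\le C\cdot T(m)^{O(\delta)}m^{O(1)}$ for all $m$ at once, and then monotonicity of $T$ immediately gives $T(n)\le C\cdot T(n)^{O(\delta)}n^{O(1)}$. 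In effect, your nested enumeration replaces the paddability step; this is a genuine simplification in that it uses one fewer structural property of $L_{hard}$, at the cost of a second layer of dovetailing.

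One small point to tighten: you write that the checker should be amplified to failure probability $\le 1/\poly(T(n))$, but $T(n)$ is unknown to $M^{**}$ and to $U$. The fix you already hint at is the right one: at level $\ell$ of $M^{**}$ (and analogously at stage $k$ of $U$), amplify each checker invocation to error $2^{-\Theta(\ell)}$ (resp.\ $2^{-\Theta(k)}$), which costs only an $O(\ell)$ multiplicative factor and sums geometrically so that no invocation anywhere emits a wrong bit with more than constant total probability. This overhead is indeed absorbed into your $\widetilde O$, and the final inequality $T(n)^{1-O(\delta)}\le n^{O(1)}$ goes through for $\delta$ chosen small enough.
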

	\begin{proof}
	The proof described here is similar to the argument in \cite{DBLP:conf/focs/FortnowS04}. The difference is that we use the more convenient $\PSPACE$-complete language stated above, which in fact simplifies the argument. We note that in our presentation we will not \emph{explicitly} state and prove the optimality of the proposed algorithm for $L_{hard}$, as this is not really needed. The lower bound part of the argument relies instead on the definition of the function $T$.
	
	First we describe an algorithm and a corresponding (non-decreasing) function $T \colon \mathbb{N} \to \mathbb{N}$. This function will serve as an upper bound to the running time of the algorithm solving $L_{hard}$ (in the sense of $\widetilde{\mathsf{BPTIME}}[\cdot]$). Let $I$ be an instance checker for $L_{hard}$ with exponentially small error probability $2^{-|x|^c}$, where $c$ is a large enough constant that depends only on $L_{hard}$.
	
			\begin{algorithm}
		\caption{An optimal algorithm for $L_{hard}$}
		\begin{algorithmic}[1]
			
			\Procedure{OPTIMAL}{$x$} 
			
			\For{$m = 1,2,\dots$}	
			\For{each probabilistic program $M$ of description length $\log m$}				
			\State Run $I$ on $x$ with oracle $M^m$ (i.e., $M$ restricted to $m$ steps). 
			\State If a non-``?'' answer $val$ is returned, output $val$.

			\EndFor
			\EndFor
			\EndProcedure
		\end{algorithmic}
	\end{algorithm}
	
	We claim that the above algorithm solves $L_{hard}$ with high probability. Since $L_{hard}\in \mathsf{PSPACE}$, there exists a deterministic exponential-time machine $M_{L_{hard}}$ that decides $L_{hard}$. In a worst-case scenario, such a machine will eventually be tried by the algorithm at some stage $m$ (where $m$ is exponential in $|x|$), and the correct answer will be returned if $M_{L_{hard}}$ is used as an oracle for the instance checker. Also, since the instance checker has exponentially small error probability, the probability that a wrong answer is output before this stage is very small. 
	
	Let $T$ be the non-decreasing function defined as $T(n)=t$, where $t$ is the minimum number such that for every $1\leq i\leq n$, the algorithm OPTIMAL, on inputs of length $i$, outputs the correct answer within $t$ steps, with probability at least $1 - 1/i$. Then we have  $L_{hard} \in \widetilde{\BPTIME}(T(n))$. Note that, since we assume $L_{hard} \not \in \BPP$, it is the case that $T(n) \not \in \poly(n)$. Moreover, the function $T$ is \emph{non-decreasing} by definition. 
	
	Next, we show that $L_{hard} \not\in \BPTIME\left(n^b\cdot T(n)^{\delta}\right)/\delta\log(T(n))$ for any constant $b>0$ and $0<\delta<1/18$. For the sake of contradiction, suppose there are constants $b>0$, $0<\delta<1/18$, and some probabilistic program $M_0$ of size $\delta\log(T(n)) + O(1)$ such that for every input $x$ of length $n$, $M_0$ outputs the correct answer $L_{hard}(x)$ within $n^b\cdot T(n)^{\delta}$ steps, with probability at least $1-1/n$. Then by hardwiring the running time $n^b\cdot T(n)^{\delta}$, we can implement a ``timeout'' mechanism and perform error reduction using standard techniques. Therefore, for every large enough input length $n$, there is a probabilistic program $M$ of size $2\delta\log (T(n)) + b\log(n) + O(1)$ that when restricted to inputs of length $n$ produces the correct answer within $T(n)^{\delta} \cdot n^a$ steps except with exponentially small probability, where $a>b$ is a constant. If the algorithm OPTIMAL reaches stage $m :=T(n)^{2\delta} \cdot n^a$, it will eventually try the program $M$, which has size at most
	\[
		2\delta\log (T(n)) + b\log(n) + O(1) \leq \log m = 2\delta\log(T(n)) + a\log(n).
	\]
	Thus, using $M$ as an oracle for the instance checker, the algorithm outputs the correct answer with high probability. Again, since the instance checker has exponentially small error probability, the probability that a wrong answer is output before this stage is very small. As a result, for every large enough $n$, the algorithm OPTIMAL outputs the correct answer with high probability within $t(n) := m^3\cdot n^c=T(n)^{6\delta}\cdot n^{3a+c}$ steps (to complete stage $m$), for some constant $c>0$. Next, we show the following 
	\begin{claim}\label{c:monotone-via-pad}
		For every large enough $n$,
		\[
		t(n) \geq T(n)^{1/3}/n^d,
		\]
		where $d>0$ is some constant.
	\end{claim}

	\begin{proof}[Proof of \Cref{c:monotone-via-pad}]
		For every $n$, let $s(n)$ be the minimum number of steps such the algorithm OPTIMAL, on inputs of length $n$, outputs the correct answer with probability at least $1 - 1/n$. Note that $t(n) \geq s(n)$ for every $n$. Then to show the claim, it suffices to show that $T(n)\leq s(n)^3\cdot n^d$ for every $n$. Note that by definition, $T(n)=\max_{1\leq i\leq n} s(i)$. Assume without loss of generality that $T(n)=s(\ell)$ for some $\ell\leq n$. To conclude the argument, we argue that $s(\ell)$ is at most $\mathsf{poly}(s(n),n)$ using the \emph{paddability} of $L_{hard}$ and the definition of algorithm OPTIMAL. We give the details below.

		Let $M$ be the following algorithm: on input $x$ of length $\ell$, and an advice encoding the integer $n$, $M$ first computes $x' := p(x,n)\in\bool^n$, where $p$ is the padding function for $L_{hard}$. Then $M$ runs the algorithm OPTIMAL on $x'$.  By the paddability of $L_{hard}$ and the fact that the algorithm OPTIMAL computes $L_{hard}(x')$ within $s(n)$ steps with high probability, we get that $M$ computes $L_{hard}(x)$ within $m_0 := \poly(n) + s(n)$ steps with high probability. Therefore, using error reduction if necessary, we get that for inputs of length $\ell$, there is a program of size $\log(n) + \log (m_0) + O(1)$ that decides $L_{hard}$ within $m := \poly(n)\cdot m_0$ steps with very high probability. Since this program can be used in the algorithm OPTIMAL for inputs of length $\ell$, we conclude that $s(\ell)\leq m^3\cdot n^c$, which implies $T(n) = s(\ell)\leq s(n)^3\cdot n^d$ for some constant $d>0$. This completes the proof of the claim.
	\end{proof}

	\Cref{c:monotone-via-pad} implies that for every large enough $n$,
	\[
		T(n)^{6\delta}\cdot n^{3a+c} \geq  T(n)^{1/3}/n^d,
	\]
	which means
	\[
		T(n) \leq n^{(9a+3c+3d)/(1-18\delta)}.
	\]
	This contradicts that $T(n) \not \in O(\poly(n))$.
	\end{proof}

	Next we argue that a padded version of $L_{hard}$ gives a hierarchy for $\BPP$ with one bit of advice. The argument here is essentially the same as that in Lemmas 14 and 15 in \cite{DBLP:conf/focs/FortnowS04}. The only difference is that we define the padded version slightly differently than in \cite{DBLP:conf/focs/FortnowS04} with a view towards the next part of our proof, but this does not really change the argument.
	
	We define the language $L_{k}$ as follows, where $T$ and $\delta$ are as  in the statement of Lemma \ref{l:optimal}: \\

	\vspace{-0.15cm}
	
	\noindent \textbf{Definition of $L_k$:}\\
$x \in L_{k}$ iff $x = yz$, where $y \in L_{hard}$, $|z| = 2^{\ell}$ for some integer $\ell$, $|z| > |y|$ and $|z| \geq T(i)^{\delta/3k}$ for each non-negative integer $i \leq |y|$. 

	\vspace{0.15cm}
	
	\begin{lemma} [Adaptation of \cite{DBLP:conf/focs/FortnowS04}] \label{l:hierarchy}
		Suppose $L_{hard} \not \in \BPP$. Then $L_k \in \BPP/1 \,\setminus\, \BPTIME(n^k)/1$, for every constant $k>0$.
	\end{lemma}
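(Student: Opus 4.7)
The plan is to establish the two containments separately, exploiting that for every input length $n$ there is at most one viable split $x = yz$ matching the definition, since $|z|$ must be a power of two in the interval $(n/2, n]$, and hence $|z|$ (and therefore $|y|$) is uniquely determined by $n$.

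For the upper bound $L_k \in \mathsf{BPP}/1$, I will set the advice bit $\alpha(n)$ to $1$ precisely when the unique candidate $|z|$ satisfies $|z| \geq T(i)^{\delta/3k}$ for every $i \leq |y|$ (call such $n$ \emph{good}). On a bad length no string is in $L_k$, so the algorithm rejects. On a good length, monotonicity of $T$ gives $T(|y|) \leq |z|^{3k/\delta} \leq n^{3k/\delta}$, so simulating the OPTIMAL algorithm from Lemma~\ref{l:optimal} on $y$ for $n^{3k/\delta}$ steps returns $L_{hard}(y) = L_k(x)$ with probability $\geq 1 - 1/|y|$, which is enough for membership in $\mathsf{BPP}/1$ after standard error reduction.

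For the lower bound I will argue by contradiction against Lemma~\ref{l:optimal}. Suppose $(A, \alpha_k)$ decides $L_k$ in time $n^k$ with one bit of advice. I will build an algorithm $B$ for $L_{hard}$: on input $y$ of length $m$, set $\ell^*(m) = \lceil \log\max(m+1,\, T(m)^{\delta/3k}) \rceil$, pad $y$ to $x = y \cdot 0^{2^{\ell^*(m)}}$ of length $n = m + 2^{\ell^*(m)}$, and output $A(x, \alpha_k(n))$. By construction $n$ is good, so $x \in L_k \iff y \in L_{hard}$, and $B$'s advice at length $m$ is just the pair $(\ell^*(m),\, \alpha_k(n(m)))$, of total length $O(\log\log T(m))$ bits (since $\ell^*(m) \leq (\delta/3k)\log T(m) + O(1)$). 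The running time of $B$ is $n^k \leq m^{O(1)} \cdot T(m)^{\delta/3}$. Because $T$ is super-polynomial (as $L_{hard} \notin \mathsf{BPP}$), we have $O(\log\log T(m)) \leq \delta \log T(m)$ for all sufficiently large $m$, so $L_{hard} \in \mathsf{BPTIME}(m^b \cdot T(m)^\delta)/\delta \log T(m)$ for some constant $b$, contradicting Lemma~\ref{l:optimal}.

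The point that will require the most care is the combined accounting in the lower bound: the pad length $2^{\ell^*(m)}$ must simultaneously be large enough to make $n$ a good length (so that $A$'s answer at $x$ pulls back to $L_{hard}(y)$) yet small enough that $n^k$ stays within $T(m)^\delta$ up to polynomial factors in $m$, and the advice length stays below $\delta \log T(m)$. All three conditions are achievable at once precisely because of the exponent $\delta/(3k)$ baked into the definition of $L_k$; once these are verified, the remainder is a standard padding argument essentially following Lemmas~14--15 of \cite{DBLP:conf/focs/FortnowS04}, adapted to the slightly modified $L_k$ used here (in which the padding literally concatenates a length-$2^\ell$ block, making the split canonical).
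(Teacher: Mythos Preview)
Your proposal is correct and follows essentially the same approach as the paper: the advice bit for the upper bound flags whether the input length is ``good,'' and the lower bound pads an instance of $L_{hard}$ to a good length, using as advice the pad length $\ell^*$ together with the single advice bit for the assumed $\mathsf{BPTIME}(n^k)/1$ algorithm, then invokes Lemma~\ref{l:optimal}. The only substantive difference is cosmetic: you encode $\ell^*$ in binary and obtain an $O(\log\log T(m))$ advice bound, whereas the paper simply bounds the advice by $\delta\log(T(n))/(3k)+O(1)$ (essentially the value of $\ell$ itself); both bounds are well under the $\delta\log T$ budget from Lemma~\ref{l:optimal}, so this tightening buys nothing for the present argument. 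One small slip worth fixing: your parenthetical ``$\ell^*(m) \leq (\delta/3k)\log T(m) + O(1)$'' fails when $m+1 > T(m)^{\delta/3k}$; the correct uniform bound is $\ell^*(m) \leq \log T(m) + O(1)$ (using $T(m)\geq m$), which still yields your $O(\log\log T(m))$ encoding length.
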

   	\begin{proof}
		Again, the proof is similar to the argument in \cite{DBLP:conf/focs/FortnowS04}. Firstly we show that $L_k \in \widetilde{\BPP}/1$, by constructing a machine $M$ that takes one bit of advice and with high probability runs in polynomial time and decides $L_k$. This implies that $L_k \in \BPP/1$ by using a constructive upper bound for this regime of time complexity.

		We first specify the sequence of advice bits for $M$.
		We say that input length $m \in \mathbb{N}$ is good for $L_k$ if $m = r + 2^{\ell}$ for non-negative integers $r$ and $\ell$, $m > 2r$ and $2^{\ell} \geq T(i)^{\delta/3k}$ for each $0 \leq i \leq r$.  Note that for $m$ that is good for $L_k$, $r = r(m)$ and $\ell = \ell(m)$ are well-defined, since there is at most one way that any integer $a$ can be written as a sum of non-negative integers $b$ and $c$ such that $c$ is a power of two and $a > 2b$.	For input length $m$, we let the corresponding advice bit $b_m=1$ iff $m$ is good. On input $x$ of length $m$, the machine $M$ rejects $x$ immediately if $b_m=0$. If $b_m=1$, $M$ parses its input as $x=yz$ and accepts if and only if the algorithm for $L_{hard}$ granted by  \Cref{l:optimal} accepts $y$ within $m^{3k/\delta}$ steps. It is clear that $M$ runs in time $\poly(m)$ with high probability. To argue correctness, note that by the definition of $L_k$, if $m$ is not good, then every input of length $m$ is not in $L_k$. Also, if $m$ is good, an input of the form $x=yz$ is in $L_k$ if and only if $y\in L_{hard}$. Then the correctness of $M$ follows from the fact that the algorithm provided by \Cref{l:optimal} takes time $T(|y|)\leq |z|^{3k/\delta}<m^{3k/\delta}$ to output $L_{hard}(y)$ with high probability.
		
		Next, we show that $L_k\not\in \BPTIME(m^k)/1$. For the sake of contradiction, suppose there is a probabilistic machine $M$ that takes one bit of advice and decides $L_{k}$ on inputs of length $m$ in time $m^k$ with high probability. We will construct a probabilistic machine $M'$ that takes $\delta\log(T(n))$ bits of advice and decides $L_{hard}$ on inputs of length $n$ in time $\poly(n)\cdot T(n)^{\delta}$ with high probability, which contradicts \Cref{l:optimal}.
		Given an input $y$ of length $n$ for $L_{hard}$, $M'$ interprets the first part of its advice as an encoding of the smallest integer $\ell$ such that $2^{\ell}>n$ and $2^{\ell}\geq T(n)^{\delta/3k}$, and obtains a padded input $x=y1^{2^{\ell}}$. Since $T$ is non-decreasing, we also get that $2^{\ell}\geq T(i)^{\delta/3k}$ for each $i \leq n$, which means that the input length of $x$ is good. Then $M'$ interprets the second part of its advice as the correct advice bit for $M$ and it accepts if and only if $M$ accepts the padded input $x$ with this advice bit. Note that the number of advice bits for $M'$ is at most $\delta\log(T(n))/(3k)+O(1)\leq \delta \log(T(n))$. Also since $|x|\leq n +  2\cdot T(n)^{\delta/3k}$, $|x|^k\leq (2\cdot n)^k \cdot  T(n)^{\delta/3}$. Therefore, $M$ decide $L_{hard}$ on inputs of length $n$ within  $O(2\cdot n)^k \cdot  T(n)^{\delta/3}$ steps with high probability, which contradicts \Cref{l:optimal}.
	\end{proof}

	Now we get to the core of our proof: plugging in the language $L_k$ for appropriately chosen $k$ into a version of the Nisan-Wigderson generator, and arguing that the resulting PRG is secure against uniform adversaries. This involves using a learning procedure that is specifically tailored to the structure of the language $L_k$.
	
	The following lemma is stated slightly differently than {\cite[Lemma 3.5]{DBLP:journals/cc/TrevisanV07}}, but the proof is exactly the same.
	
	\begin{lemma} [\cite{DBLP:journals/jcss/ImpagliazzoW01,DBLP:journals/cc/TrevisanV07}] \label{l:learnablePRG}
		Let $L$ be a language, $C$ be a probabilistic polynomial-time oracle algorithm, $D$ be a polynomial-time algorithm, and let $\varepsilon > 0$ be any constant. There is a generator $G = \{G_n\}_{n \geq 1}$ with seed length $n^{\varepsilon}$ and producing $n$ output bits such that:
		
		\begin{enumerate}
			
			\item[\emph{(\emph{i})}] \emph{Complexity:} $G_n$ can be computed in polynomial time given oracle access to $L$ on inputs of length  $m(n) = n^{\gamma}$, for some $\gamma < \varepsilon$.
			
			\item[\emph{(\emph{ii})}] \emph{``Exact Learnability'':} For every constant $a > 0$, there is a probabilistic polynomial-time oracle algorithm $B$ with unary input such that for each $n$ for which $D$ $1/n$-distinguishes the output of $G_n$ from random, and for which $C$ is a $1/m^a$ self-corrector for $L$ at length $m = n^{\gamma}$, $B(1^{m})$ makes oracle queries to $L$ of length exactly $m$, and with probability at least $1-1/n^2$ outputs a circuit $Ckt$ that correctly computes $L$ at length $m$.
			
		\end{enumerate}
		
	\end{lemma}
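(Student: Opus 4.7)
The plan is to instantiate the Nisan--Wigderson generator with the language $L$ composed with a locally list-decodable encoding, and then extract a learning algorithm by running the classical ``distinguisher-to-predictor-to-circuit'' reduction backwards, followed by list decoding and self-correction. Concretely, fix $\gamma < \varepsilon$ and let $m = n^\gamma$. Using standard constructions, I would build a combinatorial design $S_1, \ldots, S_n \subseteq [n^{\varepsilon}]$ with $|S_i| = m$ and pairwise intersections of size $O(\log n)$. Let $\widehat{L}$ denote a locally list-decodable encoding of the length-$m$ truth table of $L$ (for example, a concatenation of Reed--Muller and Hadamard, such that a circuit of advantage $1/n^2$ can be locally decoded into a polynomial-size list of circuits each agreeing with $L$ on a $1-1/m^a$ fraction of length-$m$ inputs). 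Define $G_n(x)_i = \widehat{L}(x\!\mid_{S_i})$. This immediately yields part (i): $G_n$ runs in polynomial time with $n$ oracle calls to $L$ at length exactly $m$.

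For part (ii), I would proceed by derandomised hybrid argument. Given that the fixed polynomial-time $D$ distinguishes $G_n$ from $\mathcal{U}_n$ with advantage $1/n$, Yao's prediction theorem applied bit-by-bit yields an index $i^*$ and a next-bit predictor for the $i^*$-th coordinate of $G_n(x)$ with advantage $\geq 1/n^2$ over a random seed $x$. Because the design has intersections of size $O(\log n)$, the bits of $G_n(x)$ at positions $j \neq i^*$ are functions of $x\!\mid_{S_j \cap S_{i^*}}$ once the bits of $x$ outside $S_{i^*}$ are fixed, and each such function is describable by a truth table of size $\poly(n)$. By averaging in the ``outside'' bits (nonuniformly, since $B$ can internally search the polynomial-size space of good hardwirings by estimating the predictor's advantage), I obtain a polynomial-size circuit $\textit{Ckt}'$ on $m$ inputs whose agreement with $\widehat{L}$ exceeds $1/2 + 1/n^{O(1)}$.

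Next, $B(1^m)$ runs the local list decoder of $\widehat{L}$ on $\textit{Ckt}'$. The output is a list of $\poly(n)$ candidate circuits, at least one of which computes $L$ at length $m$ with agreement $\geq 1 - 1/m^a$. For each candidate $\textit{Ckt}_j$ in the list, $B$ forms the composed circuit $C^{\textit{Ckt}_j}$; by the assumption that $C$ is a $1/m^a$ self-corrector for $L$ at length $m$, at least one of these composed circuits computes $L(y)$ with probability $\geq 3/4$ at every $y \in \{0,1\}^m$. Finally, $B$ uses its oracle — only on same-length queries, as required — by sampling $O(m \cdot \log n)$ random $y \in \{0,1\}^m$, calling $L(y)$, amplifying each composed candidate by majority voting, and selecting a candidate that matches $L$ on all samples. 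A union bound over the polynomial-size list gives success probability at least $1 - 1/n^2$.

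The main obstacle is closing the gap between the feeble $1/n^2$ advantage that the hybrid argument delivers and the $1 - 1/m^a$ agreement the self-corrector demands; this is exactly what the local list decoder of $\widehat{L}$ is introduced for, so the substance of the argument lies in choosing parameters of the design and of the encoding $\widehat{L}$ so that (a) the design's intersection parameter keeps the fixed ``outside'' truth tables polynomial-size, (b) the list output by local decoding has polynomial size with polynomial-size circuits, and (c) the entire $B$ still runs in polynomial time in $m$. All of this is parameter-chasing rather than new ideas, and amounts to tracing through the Impagliazzo--Wigderson / Trevisan--Vadhan framework with the self-correctability parameter set to $1/m^a$.
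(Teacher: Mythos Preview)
Your proposal is essentially correct and is the same approach the paper takes: the paper does not give its own proof of this lemma but cites it as a restatement of \cite[Lemma~3.5]{DBLP:journals/cc/TrevisanV07}, and your sketch (Nisan--Wigderson over a locally list-decodable encoding of $L$, then hybrid $\to$ predictor $\to$ list-decode $\to$ self-correct $\to$ oracle test) is precisely the Impagliazzo--Wigderson/Trevisan--Vadhan pipeline.

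One small technical slip is worth flagging. In your final step you apply the self-corrector to each list-decoded candidate first and then select a composed circuit $C^{\textit{Ckt}_j}$ that matches $L$ on $O(m\log n)$ random samples. But a composed circuit built from a ``bad'' $\textit{Ckt}_j$ (one not $1/m^a$-close to $L$, so the self-corrector carries no guarantee) could still happen to agree with $L$ on all your samples while erring somewhere else; your procedure as written therefore does not guarantee an \emph{exact} circuit for $L$. The standard fix, and what the Trevisan--Vadhan argument actually does, is to swap the order: first test the raw list-decoded candidates $\textit{Ckt}_j$ against the $L$-oracle on $\Theta(m^a\log n)$ random points to isolate one whose agreement with $L$ is at least $1-1/m^a$, and only then apply the self-corrector $C$ to that candidate --- which by hypothesis on $C$ now yields a circuit computing $L$ exactly at every input of length $m$. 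Equivalently, you may keep your order but apply $C$ a second time to whatever survives the test.
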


	We apply Lemma \ref{l:learnablePRG} to the language $L_k$ (for $k$ to be determined later) to obtain the generator $\{G_n\}$ in Theorem \ref{t:PseudodetPolytimePRG}. Next, we show that $\{G_n\}$ is computable in pseudodeterministic polynomial time with 1 bit of advice, and that it is secure infinitely often against $\DTIME(n)$ adversaries.\\

\noindent \textbf{Complexity of computing $G_n$.} The computability condition is much easier to establish. Let $M$ be an advice-taking probabilistic machine deciding $L_k$ in polynomial time with one bit of advice and with error $1/n^{\omega(1)}$. We define an advice-taking probabilistic polynomial-time machine $M'$, which given an input $x$ of length $n^\varepsilon$ and one bit of advice, computes $G_n(x)$ pseudodeterministically. $M'$ simulates the polynomial time oracle procedure given by the first item of Lemma \ref{l:learnablePRG}, and each time the oracle procedure makes a query of length $n^{\gamma}$, $M'$ runs $M$ with the correct advice bit for length $n^{\gamma}$ to answer the query. Since $M$ runs in polynomial time, $M'$ runs in polynomial time. To see that $M'$ is pseudodeterministic, note that the oracle procedure makes at most $\poly(n)$ queries, since it runs in polynomial time, and by a union bound over the random choices of $M$, all of these queries are answered correctly with probability $1-1/n^{\omega(1)}$. Hence with probability $1-1/n^{\omega(1)}$, $M'$ outputs $G_n(x)$ correctly.\\ 
	
	\noindent \textbf{Security of $G_n$.}~In order to argue that $\{G_n\}$ is secure against $\DTIME(n)$ adversaries for infinitely many $n$, we use the learning procedure in the second item of Lemma \ref{l:learnablePRG} in conjunction with structural properties of the language $L_k$ (which is defined using the special language $L_{hard}$). This argument is somewhat technical, and we establish some new terminology first. The definition given below appears in the proof of Lemma \ref{l:hierarchy}, but we present it again in case the reader skipped that argument.\\

	\noindent \emph{Good input length.} A key notion is that of a {\it good} input length $m$ for $L_k$. We say that input length $m \in \mathbb{N}$ is good for $L_k$ if $m = r + 2^{\ell}$ for non-negative integers $r$ and $\ell$, $m > 2r$ and $2^{\ell} \geq T(i)^{\delta/3k}$ for each $0 \leq i \leq r$.  Note that for $m$ that is good for $L_k$, $r = r(m)$ and $\ell = \ell(m)$ are well-defined, since there is at most one way that any integer $a$ can be written as a sum of non-negative integers $b$ and $c$ such that $c$ is a power of two and $a > 2b$. By the definition of $L_k$, if $m$ is not good for $L_k$, then every input of length $m$ is not in $L_k$. (While we won't explicitly rely on this, as a sanity check note that for each $r$ there are large enough integers $\ell$ and $m$ such that $r = r(m)$, $\ell = \ell(m)$, and $m$ is good for $L_k$.)
	
	For each good input length $m$, we define an increasing sequence $I_m = m_0, \ldots , m_{r(m)}$ as follows: $m_i = i + 2^{\ell(m)}$. We argue that for each $0 \leq i \leq r(m)$, $m_i$ is a good input length for $L_k$. The first condition for goodness is clearly satisfied: each $m_i$ can be decomposed as $i$ plus a power of two; moreover, $r(m_i) = i$ and $\ell(m_i) = \ell(m)$. Also, since $m > 2 r(m)$, it follows that $m > 2i$ for each $i \leq r(m)$. Finally, since $2^{\ell(m)} \geq T(i)^{\delta/3k}$ for each $i \leq r(m)$, we have that for each $m_i$, $2^{\ell(m_i)} = 2^{\ell(m)} \geq T(i)^{\delta/3k}$ for each $j \leq r(m_i) = i \leq r(m)$. Intuitively, each $m_i$ in the sequence $I_m$ inherits its goodness from $m$.\\ 
	
	We will use good input lengths and their corresponding sequences in 2 ways: first, we use the self-correctability of $L_{hard}$ to give a probabilistic polynomial-time oracle procedure that is a self-corrector for $L_k$ on each good input length, and second, we use the downward self-reducibility of $L_{hard}$ to argue that if $L_k$ is learnable on good input lengths, then there is a probabilistic polynomial-time machine $N$ with one bit of advice deciding $L_k$ everywhere. The one bit of advice for $N$ will be used to tell if an input length is good for $L_k$.

	\begin{lemma} \label{l:hierselfcorr}
		There is a probabilistic polynomial-time oracle procedure $C$  and a constant $a > 0$ such that for each good input length $m$ for $L_k$, $C$ is a $1/m^a$ self-corrector for $L_k$ at length $m$.
	\end{lemma}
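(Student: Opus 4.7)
The plan is to build $C$ by reducing to the $L_{hard}$ self-corrector $C_{hard}$ granted by Lemma~\ref{l:splPSPACE}. For a good length $m$, every $x\in\{0,1\}^m$ admits the unique decomposition $x=yz$ with $|y|=r:=r(m)$ and $|z|=2^{\ell(m)}$, and the length conditions in the definition of $L_k$ are \emph{all} met by goodness of $m$; therefore $L_k(yz)=L_{hard}(y)$ for \emph{every} $z\in\{0,1\}^{2^{\ell(m)}}$. Given $x$, $C$ first computes $r$ and $\ell(m)$ from $m$, parses $x=yz$, and then runs $C_{hard}$ on $y$ at length $r$, answering each oracle query $y'\in\{0,1\}^r$ that $C_{hard}$ makes to $L_{hard}$ by simulating the oracle from the approximate $L_k$-oracle $O$ at length $m$.

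The simulation of $L_{hard}(y')$ draws $\Theta(\log m)$ independent uniform strings $z'_i\in\{0,1\}^{2^{\ell(m)}}$, queries $O(y'z'_i)$ on each, and returns the majority vote. Writing $p(y'):=\Pr_{z'}[O(y'z')\neq L_{hard}(y')]$ and using the hypothesis $\E_{y'}[p(y')]=\Pr_{x\sim\{0,1\}^m}[O(x)\neq L_k(x)]\leq 1/m^a$, a Markov argument gives that at most a $4/m^a$ fraction of queries $y'$ satisfy $p(y')>1/4$; on every other ``good'' $y'$, a Chernoff bound makes the majority return $L_{hard}(y')$ with probability $1-m^{-\Omega(1)}$, where the implied constant is controlled by the sample size.

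To finish, let $c$ be the constant such that $C_{hard}$ is a $1/r^c$ self-corrector at length $r$, and choose $a$ large enough that $4/m^a+m^{-\Omega(1)}\leq 1/r^c$ for all good $m$; since $r<m$, taking $a=c+2$ together with a sufficiently large sampling constant suffices. The simulated oracle then disagrees with $L_{hard}$ on a uniformly random query with probability at most $1/r^c$, meeting the accuracy requirement of $C_{hard}$, and a union bound over the $\poly(r)$ queries of $C_{hard}$ and its own $1/4$ failure probability yields that $C$ outputs $L_k(x)=L_{hard}(y)$ with probability at least $3/4$. The total running time is $\poly(m)$ because $r\leq m$ and each of the $\poly(r)$ queries of $C_{hard}$ is answered by $\Theta(\log m)$ length-$m$ queries to $O$. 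The only delicate step—hardly an obstacle—is that the simulated oracle is stochastic rather than a fixed function; this is handled by the standard device of fixing the simulation coins up front and arguing that, with high probability, the induced deterministic oracle is $1/r^c$-close to $L_{hard}$ on length-$r$ inputs, after which the analysis of $C_{hard}$ applies verbatim.
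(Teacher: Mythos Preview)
Your proposal is correct and follows essentially the same approach as the paper: parse $x=yz$ using the unique decomposition of $m$, run the $L_{hard}$ self-corrector on $y$, and simulate each length-$r$ oracle query $y'$ by majority-voting over $O(y'z'_i)$ for random suffixes $z'_i$, using Markov to bound the fraction of ``bad'' $y'$. The one cosmetic difference is that the paper takes $100m$ samples per query rather than $\Theta(\log m)$; this makes the per-query Chernoff error $2^{-\Omega(m)}$, which (since $r<m/2$) survives a union bound over \emph{all} $2^r$ possible queries and thus yields a fixed close oracle directly, avoiding the extra Markov step you need to pass from ``expected disagreement $\leq 1/r^c$'' to ``with high probability the induced deterministic oracle is $1/r^c$-close.'' Your fix via fixing simulation coins up front is the right move, though note that to apply Markov you want the expected error to be a constant factor \emph{below} $1/r^c$ (not merely $\leq 1/r^c$), and the final success probability comes out slightly below $3/4$ in both arguments---easily repaired by first amplifying $C_{hard}$.
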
      
	
	\begin{proof}
		By assumption, $L_{hard}$ is self-correctable, and therefore there is a constant $b > 0$ and a probabilistic polynomial-time oracle algorithm $C'$ such that $C'$ is a $1/n^b$ self-corrector for $L_{hard}$ with success probability $3/4$. We define $C$ as follows. Given input $x$ of length $m$ and access to an oracle, it checks if $m = r + 2^{\ell}$ for non-negative integers $r$ and $\ell$ with $m > 2r$. This check can easily be implemented in polynomial time. If the check fails, $C$ rejects. If the check succeeds, let $x = yz$, where $|y| = r$ and $|z|$ is a power of two. $C$ simulates $C'$ in the following way. It runs $C'$ on $y$. Whenever $C'$ makes an oracle query $y'$ of the same length as $y$, $C$ makes oracle queries to $y' z_1, \ldots, y' z_{100 m}$ where each $z_i$ is chosen uniformly at random from strings of length $|z|$, and uses the majority answer of these queries as the simulated answer to $y'$. (If $C'$ queries the same input twice, $C$ provides a consistent answer.) $C$ accepts its input string $x$ iff the above simulation involving $C'$ accepts.
		
		We argue that $C$ is a $1/m^a$ self-corrector for $L_k$ at any good length $m$, where $a = 2b$. By the definition of $L_k$, if $m$ is a good length, then $x$ of length $m$ belongs to $L_k$ iff the $r(m)$ length prefix $y$ of $x$ belongs to $L_{hard}$. Suppose that $O$ is an oracle that agrees with $L_k$ on at least a $1-1/m^a$ fraction of inputs of length $m$. We show that $C^O$ decides $L_k$ correctly on $x$ for each $x$ of length $m$. Call a string $y'$ \emph{nice} if for at least a $2/3$ fraction of strings $z'$ of length $|z|$, $y'z' \in O$ iff $y' \in L_{hard}$. By a straightforward application of the Markov bound, at least $1-3/m^a$ fraction of strings of length $r$ are nice. Define the partial oracle $O'$ at length $r$ by setting $O'(y') = L_{hard}(y')$ if $y'$ is nice. $O'(y')$ is left undefined for strings $y'$ that are not nice. By the lower bound on fraction of nice strings of length $r$, $O'$ is defined for at least $1-3/m^a \geq 1-1/r^b$ fraction of strings of length $r$, since $m > 2r$ and $a = 2b$.
		
		By a simple Chernoff bound and a union bound, for every string $y'$ on which $O'$ is defined, with all but exponentially small probability, the simulation by $C$ of an oracle query $y'$ of $C'$ returns $L_{hard}(y')$. Since $C'$ is a $1/n^b$ self-corrector for $L_{hard}$ and the partial oracle $O'$ is defined and agrees with $L_{hard}$ for at least a $1-1/r^b$ fraction of $r$-bit strings, it follows by convexity that the simulation of $C'(y)$ returns $L_{hard}(y)$ for each $y$ with success probability $3/4 - 2^{-\Omega(m)} \geq 2/3$. Since $L_k(x) = L_{hard}(y)$, this implies that on oracle $O$ the oracle algorithm $C$ outputs $L_k(x)$ with probability at least $2/3$ for each $x$ of length $m$.
	\end{proof}

	We apply Lemma \ref{l:learnablePRG} together with Lemma \ref{l:hierselfcorr} and the downward self-reducibility of $L_{hard}$ to establish that the PRG $\{G_n\}$ is secure against $\DTIME(n)$ infinitely often. Contrapositively, let $D$ be a deterministic linear-time algorithm that $1/n$-distinguishes the output of $G_n$ from random on almost all lengths $n$. We show, for any sufficiently large $k$, that this implies that $L_k$ in $\BPTIME(n^k)/1$, in contradiction to the lower bound in Lemma \ref{l:hierarchy}.
	
	We define an advice-taking probabilistic poly-time machine $N$ with one bit of advice as follows. Given an input $x$ of length $m$, $N$ uses its advice bit to tell if the input length $m$ is good for $L_k$. If the length $m$ is not good, $N$ rejects. If $m$ is good, $N$ inductively builds circuits $\mathsf{Ckt}_i, i = 0, \ldots, r(m)$, where $\mathsf{Ckt}_i$ decides $L_k$ at length $m_i \in I_m$. $\mathsf{Ckt}_0$ is a trivial circuit that is the constant 1 iff the empty string is in $L_{hard}$ and the constant $0$ otherwise. For $i > 0$, $N$ inductively builds $\mathsf{Ckt}_i$ from circuit $\mathsf{Ckt}_{i-1}$ by using the learnability of the generator $\{G_n\}$ and the downward self-reducibility of $L_{hard}$. 
	
	Let $n_i = m_i^{1/\gamma}$. We apply Lemma \ref{l:learnablePRG} to $L_k$, the oracle algorithm $C$ from Lemma \ref{l:hierselfcorr}, the deterministic linear-time algorithm $D$ that $1/n_i$-distinguishes the output of $G_{n_i}$ from random, and the constant $\varepsilon$ in the statement of Theorem \ref{t:PseudodetPolytimePRG}. Using the fact that $m_i$ is good for $L_k$, it follows from Lemma \ref{l:hierselfcorr} that the oracle procedure $C$ is a self-corrector for $L_k$ at length $m_i$. Since the conditions of the second item of Lemma \ref{l:learnablePRG} are satisfied, the probabilistic poly-time oracle procedure $B(1^{m_i})$ on oracle $L_k$ only asks queries of length exactly $m_i$ and outputs a correct circuit $\mathsf{Ckt}_i$ for $L_k$ at length $m_i$. We need to simulate the oracle procedure by a procedure that does not use an oracle, and we do so by taking advantage of the downward self-reducibility of $L_{hard}$.
	
	By Lemma \ref{l:splPSPACE}, the language $L_{hard}$ is downward self-reducible. This means there is a polynomial-time oracle algorithm $A$ that solves $L_{hard}$ on input $x$ while only making queries to $L_{hard}$ on inputs of length less than $|x|$. By induction, we have that the advice-taking probabilistic poly-time machine $N$ has already computed correct circuits $\mathsf{Ckt}_0, \ldots, \mathsf{Ckt}_{i-1}$, where $\mathsf{Ckt}_j$ is a circuit of size $\poly(m_j)$ correctly solving $L_k$ on inputs of length $m_j$. In order to compute a correct circuit $\mathsf{Ckt}_i$ at length $m_i$, $N$ runs $B(1^{m_i})$, answering any oracle query $q$ of $B$ as follows. By definition of $m_i$, $q = q_1 q_2$, where $|q_1| = r(m_i) = i$ and $|q_2| = 2^{\ell(m)}$. Moreover, since $m_i$ is good, $q \in L_k$ iff $q_1 \in L_{hard}$. $N$ runs the downward self-reduction $A$ on $q_1$, generating new queries all of length less than $i$. Let $q'$ be such a query to $L_{hard}$ of length $j$. $N$ constructs circuits for $L_k$ rather than $L_{hard}$, so it simulates the query $q'$ by running the circuit $\mathsf{Ckt}_j$ on $q'w_m$, where $w_m$ is a string of 0s of length $2^{\ell(m)}$. Note that $q' \in L_{hard}$ iff $q' w_m \in L_k$ -- this is because $q' w_m$ is of length $m_j$, which is a good input length. Hence each query of the downward self-reduction $A$ is answered correctly, and moreover so is each query of the learning algorithm $B$. Therefore $N$ correctly produces a circuit $\mathsf{Ckt}_{i}$ for length $m_i$ with high probability at the end of its simulation of $B$. Clearly, the simulation of $B$ runs in polynomial time, and moreover the size of the circuit output by $B$ is a fixed polynomial independent of the complexity of the simulation of the oracle. $N$ returns $\mathsf{Ckt}_{r(m)}(x)$. By a union bound over the $r$ iterative phases of $N$, $B$ outputs a correct circuit with high probability on all phases, and therefore $N$ returns the correct answer for $L_k(x)$.
	
	We need to fix $k$ so as to derive a contradiction. The advice-taking probabilistic algorithm $N$ runs in time $m^c$ for some fixed $c$ that depends only on $L_{hard}$ and the ``learning'' algorithm $B$ (which depends on $D$), and not on $k$. Hence we can simply set $k$ large enough to derive a contradiction to Lemma \ref{l:hierarchy}. \qed

	\subsection{Improved bounds for primes and further applications}\label{s:applications}

	In this section, we show (unconditionally) that dense languages in $\mathsf{P}$ must contain strings of $\mathsf{rK}^{\mathsf{poly}}$ complexity bounded by $n^{\varepsilon}$. We refer the reader to Section \ref{s:prelim_kolmogorov} for definitions related to time-bounded Kolmogorov complexity.
	
	Recall that, for a function $\mu \colon \mathbb{N} \to [0,1]$, we say that a language $L \subseteq \{0,1\}^*$ is $\mu$-\emph{dense} if for every large enough $n$, we have $\Pr_{y \sim \{0,1\}^n}[y \in L] \geq \mu(n)$.
	
	\begin{theorem}\label{t:dense_set_P_bound}
		Let $L \in \mathsf{P}$ be a language of density $\mu(n) \geq 1/n^c$, for some positive constant $c$. Then, for every $\varepsilon > 0$ there is a constant $k\geq 1$ for which the following holds. For infinitely many input lengths $n$, there is a string $x \in \{0,1\}^n$ such that $x \in L$ and $\mathsf{rK}^t(x) \leq n^{\varepsilon}$, where $t = n^k$. 
	\end{theorem}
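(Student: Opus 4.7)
The strategy is to apply Theorem~\ref{t:PseudodetPolytimePRG} to obtain a pseudodeterministic polynomial-time PRG whose image must intersect $L$, and then observe that any element of the image has a short pseudodeterministic description built from the seed and the advice bit.

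First, let $d_L \geq 1$ be such that $L \in \mathsf{DTIME}[n^{d_L}]$, and let $\varepsilon' = \varepsilon/2$. I would instantiate Theorem~\ref{t:PseudodetPolytimePRG} with stretch parameter $\varepsilon'$, and with security parameters $c' = d_L$ and $d' = c+1$, obtaining a pseudodeterministic polynomial-time computable PRG $G = \{G_n\}$ with $G_n \colon \{0,1\}^{n^{\varepsilon'}} \to \{0,1\}^n$, one bit of advice $\alpha(n)$, a uniform algorithm $A$ witnessing efficient pseudodeterministic evaluation, and some polynomial running time bound $t(n) = n^k$. The security guarantee then says that for infinitely many $n$,
\[
\Bigl| \Pr_{y \sim \mathcal{U}_n}[L(y) = 1] - \Pr_{s \sim \mathcal{U}_{n^{\varepsilon'}}}[L(G_n(s)) = 1] \Bigr| \leq \frac{1}{n^{c+1}}.
\]

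Next, I would combine this with the density hypothesis. Since $\Pr_y[L(y)=1] \geq 1/n^c$ for all large $n$, for every sufficiently large $n$ in the infinite set of good lengths we obtain
\[
\Pr_{s \sim \mathcal{U}_{n^{\varepsilon'}}}[G_n(s) \in L] \;\geq\; \frac{1}{n^c} - \frac{1}{n^{c+1}} \;>\; 0.
\]
In particular, for each such $n$ there exists a seed $s_n \in \{0,1\}^{n^{\varepsilon'}}$ with $x_n := G_n(s_n) \in L$.

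Finally, I would bound $\mathsf{rK}^t(x_n)$. Consider the fixed probabilistic machine $M$ that, on input $a = (s, \alpha)$, simulates $A$ on $(1^n, s, \alpha)$ (the parameter $n = |x_n|$ is implicit from the output length or can be read off from $|s| = n^{\varepsilon'}$). By the efficiency clause of Theorem~\ref{t:PseudodetPolytimePRG}, with the correct choice $a_n := (s_n, \alpha(n))$, the machine $M$ outputs $G_n(s_n) = x_n$ with probability at least $2/3$ within $t(n) = n^k$ steps. Hence
\[
\mathsf{rK}^t(x_n) \;\leq\; |M| + |a_n| \;=\; O(1) + n^{\varepsilon'} + 1 \;\leq\; n^{\varepsilon}
\]
for all sufficiently large $n$, which gives the theorem with this choice of $k$.

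The only potential obstacle is bookkeeping around the advice and the input-length parameter: we must be sure that the fixed machine $M$ in the $\mathsf{rK}^t$ definition has $O(1)$ size independent of $n$, and that packing the seed $s_n$ together with the single advice bit $\alpha(n)$ into $a_n$ adds only a constant overhead (handled, e.g., by a trivial self-delimiting encoding of $\alpha(n)$ as the first bit of $a_n$). Neither of these is substantive, so the argument goes through essentially immediately from Theorem~\ref{t:PseudodetPolytimePRG}.
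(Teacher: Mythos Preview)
Your proposal is correct and follows essentially the same approach as the paper: instantiate the pseudodeterministic PRG of Theorem~\ref{t:PseudodetPolytimePRG} with seed length $n^{\varepsilon/2}$ and error below the density of $L$, deduce that some output of $G_n$ lies in $L$ infinitely often, and then bound $\mathsf{rK}^t$ of that output by the seed length plus the advice bit plus $O(1)$. The only minor point of sloppiness is your remark that ``$n$ \ldots\ can be read off from $|s| = n^{\varepsilon'}$''; since $n \mapsto \lceil n^{\varepsilon'} \rceil$ need not be injective this is not literally true, and the paper resolves it by absorbing an extra $O(\log n)$ term into the bound (which still fits under $n^{\varepsilon}$), but this is exactly the bookkeeping you already flag as routine.
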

	
	\begin{proof}
		Let $L \in \mathsf{P}$, i.e., suppose that $L \in \mathsf{DTIME}[n^d]$ for some constant $d$. Take a fixed $\varepsilon > 0$, and consider an infinitely often pseudodeterministic polynomial-time computable PRG $\{G_n\}_{n}$ with $1$ bit of advice given by Theorem \ref{t:io-PRG} with $G_n \colon \{0,1\}^{n^{\varepsilon/2}} \to \{0,1\}^{n}$ that is secure against $\mathsf{DTIME}[n^d]$ and has associated error parameter $\gamma = 1/2n^c$. Since each output of $G_n$ can be computed in polynomial time with high probability assuming  the correct advice bit is given, it is easy to see that for $w \in \{0,1\}^{n^{\varepsilon/2}}$ and $y = G_n(w)$, we have $\mathsf{rK}^t(y) \leq O_G(1) + O(\log n) + 1 + n^{\varepsilon/2} \leq n^{\varepsilon}$, provided that $t = n^k$ for a large enough constant $k$ that is independent of $n$. Moreover, using the density of $L$ and the error parameter of $G$, it follows that for infinitely many choices of the parameter $n$ we have $G_n(\{0,1\}^{n^{\varepsilon/2}}) \cap L \neq \emptyset$. As a consequence, for infinitely many input lengths $n$, there is a string $x \in \{0,1\}^n$ such that $x \in L$ and $\mathsf{rK}^t(x) \leq n^{\varepsilon}$.
	\end{proof}
	
	As an immediate consequence of this theorem, the density of primes, and $\mathsf{Primes} \in \mathsf{P}$ \citep{Agrawal02primesis}, we get that infinitely many prime numbers have bounded $\mathsf{rK}^{\mathsf{poly}}$ complexity. 
	
	\begin{corollary}\label{c:primes}
		For every $\varepsilon > 0$, there is an infinite sequence $\{p_m\}_{m \geq 1}$ of increasing primes $p_m$ such that $\mathsf{rK}^t(p_m) \leq |p_m|^\varepsilon$, where $t(n) = n^k$ for some constant $k = k(\varepsilon) \geq 1$, and $|p_m|$ denotes the bit-length of $p_m$.
	\end{corollary}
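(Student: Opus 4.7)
The plan is to obtain this as a direct corollary of Theorem~\ref{t:dense_set_P_bound} applied to the language of primes. The two hypotheses to verify are membership in $\mathsf{P}$ and polynomial density. For the first, we invoke the AKS primality test \citep{Agrawal02primesis}, which puts $\mathsf{Primes} \in \mathsf{P}$. For the second, by the Prime Number Theorem the number of primes of bit-length exactly $n$ is $\Theta(2^n / n)$, so the density of $\mathsf{Primes}$ inside $\{0,1\}^n$ is at least $1/n^c$ for some absolute constant $c$ (any $c \geq 1$ works for all large $n$). Hence $\mathsf{Primes}$ meets the hypotheses of Theorem~\ref{t:dense_set_P_bound}.

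Next, given $\varepsilon > 0$, I would apply Theorem~\ref{t:dense_set_P_bound} with the language $L = \mathsf{Primes}$ and the chosen $\varepsilon$ to obtain a constant $k = k(\varepsilon) \geq 1$ and an infinite set $S \subseteq \mathbb{N}$ of input lengths such that, for every $n \in S$, there is some $x \in \{0,1\}^n$ with $x \in \mathsf{Primes}$ and $\mathsf{rK}^t(x) \leq n^\varepsilon$ for $t(n) = n^k$. For each $n \in S$, select one such $x$, which corresponds to a prime $p$ of bit-length exactly $n$. Enumerating $S$ in increasing order $n_1 < n_2 < \cdots$ and letting $p_m$ be the chosen prime at length $n_m$ yields a strictly increasing sequence of primes with $|p_m| = n_m$ and $\mathsf{rK}^t(p_m) \leq |p_m|^\varepsilon$, as required.

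I do not anticipate any obstacle: the argument is a direct instantiation of Theorem~\ref{t:dense_set_P_bound}, and the only minor point is noting that increasingness of the sequence comes for free from the fact that distinct lengths give distinct (and comparable) primes. One could also remark, as the paragraph preceding the corollary already does, that this is stated as an immediate consequence, so no new technical ingredient is needed beyond the two standard facts about $\mathsf{Primes}$.
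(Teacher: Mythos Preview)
Your proposal is correct and takes essentially the same approach as the paper: the paper states the corollary as an immediate consequence of Theorem~\ref{t:dense_set_P_bound}, the density of primes, and $\mathsf{Primes} \in \mathsf{P}$ via \citep{Agrawal02primesis}, which is exactly what you do. Your added remark about extracting a strictly increasing sequence from the distinct input lengths is a harmless elaboration that the paper leaves implicit.
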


	If we interpret the bound $\mathsf{rK}^t(p_m) \leq |p_m|^\varepsilon$ from a data compression perspective, Corollary \ref{c:primes} shows that for infinitely many values of $n$ there are $n$-bit primes that can be decompressed from a representation of length $n^{\varepsilon}$ with high probability and in \emph{polynomial time}. This running time offers an \emph{exponential} improvement compared to the $\mathsf{rKt}$ upper bounds for prime numbers established by \citep{DBLP:conf/stoc/OliveiraS17, DBLP:conf/icalp/Oliveira19}, which provide representation length $n^{\varepsilon}$ but only guarantee decompression (with high probability) in time $2^{n^{\varepsilon}}$.
	
	We can use a similar approach to obtain the following consequence for the problem of generating primes.
	
	\begin{corollary}
	    For every constant $\varepsilon > 0$, there is a probabilistic polynomial time algorithm $A$ with the following property. For infinitely many values of $n$, there exists an $n$-bit prime $p_n$ such that $\Pr_A[A(1^n) = p_n] \geq 2^{-n^\varepsilon}$.
	\end{corollary}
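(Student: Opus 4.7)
The plan is to derive this directly from Corollary \ref{c:primes} by having the algorithm guess at random the succinct description whose existence is guaranteed by the $\mathsf{rK}^t$ upper bound. Fix $\varepsilon > 0$ and pick any $\varepsilon' \in (0,\varepsilon)$. Apply Corollary \ref{c:primes} with parameter $\varepsilon'$ to obtain a constant $k = k(\varepsilon') \geq 1$ and an infinite sequence of primes $\{p_m\}$ with $\mathsf{rK}^{t}(p_m) \leq |p_m|^{\varepsilon'}$, where $t(n) = n^k$. By definition of $\mathsf{rK}^t$, for each such prime there exists a probabilistic machine $M^\star_m$ and an advice string $a^\star_m$ with $|M^\star_m| + |a^\star_m| \leq |p_m|^{\varepsilon'}$ such that $M^\star_m(a^\star_m)$ outputs $p_m$ with probability at least $2/3$ within $t(|p_m|)$ steps.

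Next I would define the algorithm $A$ uniformly in $n$ as follows. On input $1^n$, sample a uniformly random string $s \in \{0,1\}^{\lfloor n^{\varepsilon'} \rfloor}$, parse $s$ under the fixed prefix-free encoding of the universal machine as a pair $(M,a)$ (rejecting if the parse fails), and then simulate $M$ on input $a$ for exactly $n^k$ steps using the probabilistic universal machine $U$. If the simulation halts and produces an $n$-bit output, return that output; otherwise output, say, $0^n$. Clearly $A$ runs in probabilistic polynomial time in $n$.

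For the correctness, consider any $n$ of the form $n = |p_m|$, i.e., one of the infinitely many lengths supplied by Corollary \ref{c:primes}. Because $|M^\star_m| + |a^\star_m| \leq n^{\varepsilon'}$, the pair $(M^\star_m, a^\star_m)$ corresponds to a specific random string $s^\star$ in the sample space of $A$, and the probability of drawing this exact string is at least $2^{-n^{\varepsilon'}}$ (possibly up to a polynomial factor absorbed by choosing $\varepsilon' < \varepsilon$ with room to spare). Conditioned on drawing $s^\star$, the simulation of $M^\star_m$ within $t(n) = n^k$ steps produces $p_m$ with probability at least $2/3$, and these two events are independent since the random bits of $A$ can be split into the ``seed'' bits used to sample $s$ and the ``simulation'' bits fed to $U$. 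Multiplying gives
\[
\Pr_A[A(1^n) = p_m] \;\geq\; \tfrac{2}{3} \cdot 2^{-n^{\varepsilon'}} \;\geq\; 2^{-n^{\varepsilon}}
\]
for all sufficiently large $n$, as $\varepsilon' < \varepsilon$. Setting $p_n \eqdef p_m$ for these lengths yields the claim for infinitely many $n$.

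There is no real obstacle here beyond a careful accounting of encoding conventions and splitting the randomness of $A$ into independent components; the entire content is carried by Corollary \ref{c:primes} and the fact that the optimal $\mathsf{rK}^t$ description can be guessed uniformly at random within the allowed length budget. The main thing to be mildly careful about is that the description length $n^{\varepsilon'}$ may not be an integer and that the encoding of $(M,a)$ needs to be prefix-free so that reading $\lfloor n^{\varepsilon'} \rfloor$ random bits suffices, both of which are routine.
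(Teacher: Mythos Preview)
Your proof is correct and follows essentially the same approach as the paper: guess a short description uniformly at random and simulate it to recover the prime. The only cosmetic difference is that the paper works directly with the PRG from Theorem~\ref{t:PseudodetPolytimePRG} (guessing the single advice bit and a seed of length $n^{\varepsilon/2}$, then running the pseudodeterministic evaluator for $G_n$), whereas you route through Corollary~\ref{c:primes} and guess the $(M,a)$ pair witnessing the $\mathsf{rK}^t$ bound; both arguments bottom out in the same PRG and the same ``guess the seed'' trick.
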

	
	\begin{proof}[Proof Sketch.]
	    The argument is not very different from the proofs of Theorem \ref{t:dense_set_P_bound} and Corollary \ref{c:primes}. For a given $\varepsilon > 0$, we instantiate a pseudo-deterministic PRG $G$ with appropriate parameters in order to fool a deterministic polynomial time algorithm for checking if a given integer is prime. The algorithm $A$ from the statement of the result randomly guesses the advice bit and a seed $w$ of length $n^{\varepsilon/2}$ for $G$, then outputs the string $G(w) \in \{0,1\}^n$ using the pseudo-deterministic algorithm for computing $G$. On infinitely many input lengths where the generator succeeds, with probability at least $(1/2) \cdot 2^{-n^{\varepsilon/2}} \cdot (2/3) \geq 2^{-n^\varepsilon}$ the correct advice bit is generated, the canonical string produced by $G$ on the given seed $w$ is a prime number (since at least one output string of $G$ must represent a prime number), and the pseudo-deterministic algorithm for $G$ produces the canonical output.
	\end{proof}
	
	We prove the following unconditional complexity lower bound, which shows that estimating $\mathsf{rK}^{t}$ up to a polynomial is hard, in the regime where $t$ is larger than the running time of the algorithm trying to estimate $\mathsf{rK}^{t}(x)$ on an input string $x$.

	\begin{theorem}[An unconditional complexity lower bound for estimating {$\mathsf{rK}^{\mathsf{poly}}$}]
	For any $\varepsilon > 0$ and $d \geq 1$ there exists a constant $k \geq 1$ for which the following holds. Consider the following promise problem $\Pi^{\varepsilon} = (\mathcal{YES}_n, \mathcal{NO}_n)_{n \geq 1}$, where
	\begin{eqnarray}
	\mathcal{YES}_n & = & \{x \in \{0,1\}^n \mid \mathsf{rK}^t(x) \leq n^{\varepsilon} \}, \nonumber \\
	\mathcal{NO}_n & = & \{x \in \{0,1\}^n \mid \mathsf{rK}^t(x) \geq n - 1 \}, \nonumber
	\end{eqnarray}
		and $t(n) = n^k$. Then $\Pi^\varepsilon \notin \mathsf{promise}$-$\mathsf{BPTIME}[n^d]$.
	\end{theorem}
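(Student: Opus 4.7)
The plan is to derive a contradiction by using the pseudodeterministic PRG from Theorem \ref{t:PseudodetPolytimePRG} to distinguish the canonical outputs of the generator (which all have low $\mathsf{rK}^t$ complexity) from a uniformly random $n$-bit string (which, with probability at least $1/2$, has $\mathsf{rK}^t$ complexity at least $n-1$). Assume for contradiction that some probabilistic algorithm $A$ running in time $n^d$ solves $\Pi^\varepsilon$. After standard error reduction, $A$ runs in some polynomial time $n^{d'}$ and accepts each YES instance with probability at least $1 - 1/n^{2d+2}$ and each NO instance with probability at most $1/n^{2d+2}$. I would then invoke Theorem \ref{t:PseudodetPolytimePRG} with seed exponent $\varepsilon/2$ and security parameters large enough to fool samplable distributions of circuits of size $\poly(n^{d'})$ with error at most $1/n^{2d+2}$, obtaining an infinitely often PRG $G = \{G_n\}$ with $G_n \colon \{0,1\}^{n^{\varepsilon/2}} \to \{0,1\}^n$, pseudodeterministically computable in some polynomial time $n^K$ with one bit of advice. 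I would pick the constant $k = K + 1$ so that $t(n) = n^k$ exceeds the running time of the pseudodeterministic algorithm computing $G$.

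The core of the proof rests on two observations. Exactly as in the proof of Theorem \ref{t:dense_set_P_bound}, the pseudodeterministic algorithm for $G_n$ with the correct advice bit serves as a witness bounding the randomized $t$-time-bounded Kolmogorov complexity of each output of $G_n$, so
\[
\mathsf{rK}^{n^k}\!\bigl(G_n(w)\bigr) \;\leq\; O(1) + 1 + O(\log n) + n^{\varepsilon/2} \;\leq\; n^\varepsilon
\]
for all sufficiently large $n$; hence every canonical output $G_n(w)$ lies in $\mathcal{YES}_n$. On the other hand, a simple counting bound gives $|\mathcal{NO}_n| \geq 2^{n-1}$: for any fixed $t$, every pair $(M, a)$ with $|M| + |a| \leq n-2$ outputs at most one string with probability $\geq 2/3$ (two such disjoint events cannot coexist), so at most $2^{n-1}$ strings of length $n$ have $\mathsf{rK}^t$ below $n-1$.

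Combining these with the correctness guarantees of $A$, I would obtain
\[
\Pr_{A, w}\bigl[A(G_n(w)) = 1\bigr] \;\geq\; 1 - \tfrac{1}{n^{2d+2}}, \qquad \Pr_{A, x \sim \mathcal{U}_n}\bigl[A(x) = 1\bigr] \;\leq\; \tfrac{1}{2} + \tfrac{1}{n^{2d+2}},
\]
so that the average distinguishing advantage is at least $1/2 - o(1)$. Viewing $A$ together with its fixed advice bit as a samplable distribution $\mathcal{D}_n$ of polynomial-size circuits indexed by the random tape of $A$, a Markov argument shows that a constant fraction of circuits in the support of $\mathcal{D}_n$ achieve distinguishing advantage at least $1/4$. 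For any sufficiently large $n$ in the infinite set on which $G$ is secure, this contradicts the guarantee from Theorem \ref{t:PseudodetPolytimePRG} that only a $1/n^d$-fraction of circuits distinguish $\mathcal{U}_n$ from $G_n(\mathcal{U}_{n^{\varepsilon/2}})$ by more than $1/n^d$.

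The main obstacle is essentially bookkeeping: one must calibrate three error sources (the amplified error of $A$, the pseudodeterministic error in computing $G$, and the PRG security error) so that they are all much smaller than the constant $1/2$ gap produced by the counting bound, and one must pick $k$, $K$, and $d'$ consistently. Once the parameters are lined up, no deeper difficulty arises; the $\mathsf{rK}^{n^k}$ description bound transfers from the $\mathsf{rKt}$-style reasoning in Theorem \ref{t:dense_set_P_bound}, and the Markov step turning an average-case advantage into a noticeable fraction of distinguishing circuits is standard.
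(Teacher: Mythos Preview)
Your proposal is correct and follows essentially the same route as the paper: show that every output of the pseudodeterministic PRG from Theorem~\ref{t:PseudodetPolytimePRG} lies in $\mathcal{YES}_n$, observe that at least half of all $n$-bit strings lie in $\mathcal{NO}_n$, and conclude that the assumed $\mathsf{BPTIME}[n^d]$ algorithm yields a samplable distribution of circuits that breaks the generator. The only cosmetic difference is that the paper amplifies the error of $A$ down to $2^{-2n}$ (so a union bound over all $2^n$ strings yields, with probability $\geq 1/2$, a single circuit that is correct on \emph{every} promise instance and hence distinguishes outright), whereas you amplify to $1/n^{O(1)}$ and then pass through an averaging/Markov step; both are fine.
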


		\begin{proof}
		Suppose there is an algorithm $A$ running in probabilistic time $O(n^d)$ that accepts strings in $\mathcal{YES}_n$ and rejects strings in $\mathcal{NO}_n$, where we set $t(n) = n^k$ for a large enough constant $k = k(d,\varepsilon)$. We assume without loss of generality, using amplification if necessary, that the error probability of $A$ on any string from $\mathcal{YES}_n \cup \mathcal{NO}_n$ is at most $2^{-2n}$. Let $C_n^w(x)$ be a Boolean circuit that computes as $A(x)$ on a given $x \in \{0,1\}^n$ when the random input string of $A$ is set to $w \in \{0,1\}^{O(n^d)}$. Note that the collection $\{C_n^w\}$ (for a uniformly random string $w$) can be sampled in time at most $n^{c}$ for some constant $c = c(d)$, and each circuit $C_n^w(x)$ is also of size at most $n^c$. Moreover, by a union bound, with probability at least $1/2$ over the choice of $w$, the (deterministic) circuit $C_n^w$ is correct on every string in $\mathcal{YES}_n \cup \mathcal{NO}_n$. If this is the case, we say that $C_n^w$ is good. 
		
		Now consider the PRG $G = \{G_n\}_{n \geq 1}$ obtained from Theorem \ref{t:PseudodetPolytimePRG} for seed length $n^{\varepsilon/2}$, our parameter $c$, and $d = 1$. Since every output string  $x = G_n(z)$ has $\mathsf{rK}^t$ complexity at most $n^{\varepsilon}$ (if $k$ is large enough), any good circuit $C_n^w$ accepts $x$. On the other hand, since at least half of the $n$-bit strings are in $\mathcal{NO}_n$, we have that $\Pr[C_n^w(\mathcal{U}_n) = 1] \leq 1/2$ for any good circuit $C_n^w$. In other words, a good circuit $1/n$-distinguishes $G_n(\mathcal{U}_{n^{\varepsilon/2}})$ and $\mathcal{U}_n$.
		
		As a consequence of the discussion above, if $k = k(d,\varepsilon)$ is large enough, $\mathcal{D}_n = \{C_n^w\}$ gives rise to a samplable distribution of circuits that break the pseudorandomness of the generator $G_n$, in contradiction to Theorem \ref{t:PseudodetPolytimePRG}. We conclude from this that $\Pi^\varepsilon \notin \mathsf{promise}$-$\mathsf{BPTIME}[n^d]$, which completes the proof.
	\end{proof}
	
	We note that a complexity lower bound for computing $\mathsf{K}^t$ (against deterministic algorithms and for large enough $t$) was recently established by Hirahara \citep{DBLP:conf/stoc/Hirahara20} using different techniques.

	\section{Better pseudo-derandomisations yield new structural results}

	It is well known and easy to show that if we have a polynomial-time \emph{almost-everywhere}  \emph{deterministic} algorithm for $\mathsf{CAPP}$, then $\mathsf{BPTIME}$ admits complete problems. Our main results in this section show that much weaker pseudo-derandomisations of $\mathsf{CAPP}$  would also  have interesting consequences for the structure of probabilistic time. These results formalise the implications informally stated in Theorem \ref{t:pseudo-to-structural}.

	\subsection{Hierarchies from weak pseudo-derandomisations of \texorpdfstring{$\mathsf{CAPP}$}{CAPP}}
	
	In this section, we show that weak pseudo-derandomisations of $\mathsf{CAPP}$ imply hierarchy theorems for probabilistic time.
	
	\begin{theorem}[Pseudo-derandomisation of $\mathsf{CAPP}$ yields probabilistic time hierarchies]\label{t:pd-CAPP-to-hierarchy}
		Let $T$ be a constructive time bound, and let $d\geq 1$ be a constant. If for every polynomial-time samplable ensemble of distributions $\mathcal{D}_{n, n^{d+1}}$ supported over circuits whose description is of length $n^{d+1}$ there is a pseudodeterministic algorithm for $\mathsf{CAPP}_{n,n^{d+1}}$ that runs in time $T(n)$ and succeeds with probability at least $1-1/(3n)$ over $\mathcal{D}_{n, n^{d+1}}$ for infinitely many values of $n$, then there is a language $L \in \mathsf{BPTIME}[T(n)]$ such that $L \notin \mathsf{BPTIME}\!\left[n^d\right]$. Moreover, if the pseudodeterministic algorithm for $\mathsf{CAPP}_{n,n^{d+1}}$ succeeds \emph{(}on average\emph{)} on all sufficiently large $n$, then there is a language $L \in \mathsf{BPTIME}[T(n)]$ such that $L \notin \mathsf{i.o.BPTIME}\!\left[n^d\right]$.
	\end{theorem}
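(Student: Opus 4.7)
The plan is to proceed by diagonalization, using pseudodeterminism of the $\mathsf{CAPP}$ algorithm to ensure that the diagonalized language lies in $\mathsf{BPTIME}[T(n)]$. Fix a canonical self-delimiting enumeration $M_1, M_2, \ldots$ of probabilistic machines with encodings $e_i$ of length $O(\log i)$. For each pair $(i, n)$ with $n \geq |e_i|$, define the canonical input $x^{(i)}_n := e_i \cdot 0^{n-|e_i|}$ and the Boolean circuit $C^{(i)}_n$ of size exactly $n^{d+1}$ (padded with dummy gates) that on input $r$ simulates $M_i$ on $x^{(i)}_n$ with random bits $r$ and with $M_i$'s running time truncated at $n^d$; this circuit is computable from $(i, n)$ in deterministic polynomial time and has size $\tilde{O}(n^d) \leq n^{d+1}$ for large $n$.

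First I would specify a polynomial-time samplable ensemble $\mathcal{D}^* = \{\mathcal{D}^*_{n, n^{d+1}}\}$ by drawing $i$ uniformly from $\{1, \ldots, n\}$ and outputting $(1^n, C^{(i)}_n)$. By hypothesis, there is a pseudodeterministic algorithm $A$ for $\mathsf{CAPP}_{n, n^{d+1}}$ running in time $T(n)$ that succeeds on $\mathcal{D}^*_n$ with probability $\geq 1 - 1/(3n)$ at a set $S$ of input lengths that is infinite (io hypothesis) or cofinite (ae hypothesis). Call the lengths in $S$ \emph{good}. Since each $C^{(i)}_n$ (for $i \leq n$) has mass exactly $1/n > 1/(3n)$ under $\mathcal{D}^*_n$, a single failure at a good $n$ would already exceed the permitted error; hence at every good $n$ the canonical output of $A$ on every $C^{(i)}_n$ is within $1/10$ of the true acceptance probability $\Pr_r[C^{(i)}_n(r) = 1]$.

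The target language $L$ is then defined as follows: on input $x$ of length $n$, try to parse $x = e_i \cdot 0^{n-|e_i|}$ with $i \leq n$ (rejecting if no such $i$ exists), build $C^{(i)}_n$, run $A$ on $(1^n, C^{(i)}_n)$, and output $1$ iff the returned estimate is strictly less than $1/2$. Pseudodeterminism of $A$ ensures that on every input the algorithm outputs a fixed bit with probability $\geq 2/3$, so $L \in \mathsf{BPTIME}[T(n)]$. For the lower bound, I would take any probabilistic machine $N = M_{i^*}$ running in time $O(n^d)$ with bounded error on every input, fix a good $n \geq \max(|e_{i^*}|, i^*)$, and set $\beta := \Pr_r[N(x^{(i^*)}_n; r) = 1]$. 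The preceding paragraph says that with probability $\geq 2/3$ the algorithm defining $L$ reads off a fixed estimate $\mu$ with $|\mu - \beta| \leq 1/10$ and then outputs $1$ iff $\mu < 1/2$. A two-case split yields a contradiction with $N$'s bounded-error hypothesis: if $\mu < 1/2$ then $\beta \leq 0.6 < 2/3$, yet $L(x^{(i^*)}_n) = 1$ would force $\beta \geq 2/3$; symmetrically if $\mu \geq 1/2$ then $\beta \geq 0.4 > 1/3$, yet $L(x^{(i^*)}_n) = 0$ would force $\beta \leq 1/3$. Under the io hypothesis this yields infinitely many lengths at which $N$ fails, giving $L \notin \mathsf{BPTIME}[n^d]$; under the ae hypothesis it holds at every sufficiently large length, so $N$ disagrees with $L$ at almost all lengths, giving $L \notin \mathsf{i.o.BPTIME}[n^d]$.

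The main technical obstacle is upgrading $A$'s ``on-average-over-$\mathcal{D}^*$'' correctness to worst-case correctness on the specific circuits that the diagonalization needs to evaluate. The samplable distribution $\mathcal{D}^*$ is engineered exactly for this purpose: by restricting to $n$ diagonalization targets at length $n$ and assigning each mass $1/n$, any failure on a single target $C^{(i)}_n$ would already exceed the $1/(3n)$ error bound, so $A$ must be correct on all of them at every good $n$. The remaining details — that the simulation circuits fit inside size $n^{d+1}$, that $L$'s algorithm runs within $T(n)$, and that every fixed $N$ is eventually addressed once $n \geq i^*$ — are routine bookkeeping once this core trick is in place.
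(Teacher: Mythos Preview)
Your proposal is correct and follows essentially the same approach as the paper's proof: both use diagonalization via a tailored polynomial-time samplable distribution that places weight at least $1/n$ on each of the $\leq n$ relevant circuits, so that the $1-1/(3n)$ average-case guarantee forces correctness on every circuit in the support, while pseudodeterminism of the $\mathsf{CAPP}$ solver places the diagonal language in $\mathsf{BPTIME}[T(n)]$. The only differences are cosmetic --- the paper encodes machine index $i$ as $1^{n-\lceil\log n\rceil}i$ and enumerates clocked $n^d$-time machines, whereas you encode it as $e_i\cdot 0^{n-|e_i|}$ and truncate arbitrary machines at $n^d$ steps --- and your two-case analysis of the final contradiction is spelled out a bit more explicitly than in the paper.
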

	\begin{proof}
		Let $B_1, B_2, \ldots$ be an enumeration of all (clocked) probabilistic machines running in time $n^d$. Let $A$ be a (i.o.-)pseudodeterministic search algorithm for $\mathsf{CAPP}_{n,n^{d+1}}$ that succeeds  with high probability over a particular polynomial-time samplable distribution over circuits defined below.
		
		We first define the language $L$. 
		Given $x \in \{0,1\}^{n}$, if $x$ is not of the form $1^{n - \lceil\log n \rceil} i$ for some $i \in \{0,1\}^{\lceil\log n \rceil}$, then reject. Otherwise, let $C_{i}(y)$ be the Boolean circuit of size at most $n^{d + 1}$ that computes according to $B_i\!\left(1^{n - \lceil\log n \rceil}i, y\right)$, where $y$ is the internal randomness used by $B_i$. Then we accept $x$ if and only if $A(1^n,C_i) \leq 1/2$.
		
		Since $A$ is a pseudodeterministic algorithm that runs in time $T(n)$, and given $i$ we can easily compute its input circuit $C_i$, we get that $L \in \mathsf{BPTIME}[T(n)]$. 
		
		Next, we show that $L\not\in\mathsf{BPTIME}\!\left[n^d\right]$. Let $L'$ be an arbitrary language in $\mathsf{BPTIME}[n^d]$. Then there is an $i$ such that the machine $B_i$ computes $L'$. Let $n\geq i$ be such that our pseudodeterministic algorithm $A$ succeeds on $n$ when the input circuits coming from the distribution $\mathcal{D}_{n,n^{d+1}}$ defined by sampling a random string $i$ of length $\lceil\log n\rceil$ and computing the circuit $C_i(\cdot)=B_i\!\left(1^{n - \lceil\log n \rceil}i, \cdot\right)$. Note that $\mathcal{D}_{n,n^{d+1}}$ is samplable in polynomial time. Assume without loss of generality that $1^{n - \lceil \log n \rceil}i\in L'$. Then we have 
		\[
		\Prob_y\left[B_i\!\left(1^{n - \lceil\log n \rceil}i, y\right)=1\right]=\Prob_y[C_i(y)=1]\geq 2/3.
		\]
		Note that, for the distribution $\mathcal{D}_{n,n^{d+1}}$ defined above, each element in its support has probability weight at least $1/2^{\lceil\log n\rceil}\geq 1/(2n)$. Since the pseudodeterministic algorithm $A$ succeeds with probability at least $1-1/(3n)$ over this input distribution, we have that $A$ succeeds on every input in its support, including $C_i$. In other words, the canonical output of $A$ on $(1^n,C_i)$ is at least $2/3-1/10>1/2$, which means that $1^{n - \lceil\log n \rceil}i \not\in L$.
		
		It is easy to check that the ``moreover'' part follows from a similar argument.
	\end{proof}
	
	Recall that \cite{DBLP:conf/stoc/OliveiraS17} established the following unconditional (average case, infinitely often, sub-exponential time) pseudo-derandomisation of $\mathsf{CAPP}$ (see \Cref{s:appendix_OS17} for a sketch of the proof).
	
	\begin{theorem}[$2^{n^{\varepsilon}}$-time infinitely often average-case pseudo-derandomisation of $\mathsf{CAPP}_{n,n^d}$ ]\label{t:uncondition-pd-CAPP}
		For any constants $\varepsilon>0$ and $c, d\geq 1$, there is a pseudodeterministic algorithm for $\mathsf{CAPP}_{n,n^{d}}$ that runs in time $2^{O(n^{\varepsilon})}$, and for any polynomial-time samplable ensemble of distributions $\mathcal{D}_{n, n^d}$ supported over circuits of size $\leq n^d$, succeeds with probability $1-1/n^c$ over $\mathcal{D}_{n, n^d}$ for infinitely many values of $n$. 
	\end{theorem}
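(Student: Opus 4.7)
The plan is to reduce solving $\mathsf{CAPP}_{n,n^d}$ to pseudo-deterministically constructing a hard truth table, and then using it to instantiate a Nisan--Wigderson style PRG. The $2^{O(n^\varepsilon)}$ time budget suffices to produce a hard string on $\ell = \Theta(n^\varepsilon)$ bits, which through a hardness-to-randomness conversion yields a generator $G_n \colon \{0,1\}^{O(\ell)} \to \{0,1\}^{n^d}$ that $1/n^c$-fools every circuit of size $n^d$. The $\mathsf{CAPP}$ algorithm is then immediate: enumerate all $2^{O(\ell)}$ seeds, evaluate the input circuit on each output $G_n(z)$, and return the empirical mean.

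Concretely, I would proceed in three steps. \textbf{Step 1 (hard string).} Construct, in time $2^{O(\ell)}$, a pseudo-deterministic procedure that on input $1^\ell$ outputs with probability $\geq 2/3$ a fixed string $w_\ell \in \{0,1\}^{2^\ell}$ (viewed as the truth table of a Boolean function on $\ell$ bits) of circuit complexity at least $2^{\Omega(\ell)}$, for infinitely many $\ell$. The natural construction is a ``lex-first incompressible string'' search: enumerate candidate truth tables in lexicographic order, run a probabilistic verification that brute-forces small circuits attempting to compute each candidate, and commit to the first candidate that survives. \textbf{Step 2 (hardness-to-randomness).} Apply standard hardness amplification (e.g., a good error-correcting encoding of $w_\ell$) followed by the Nisan--Wigderson generator to obtain $G_n$; this step is fully deterministic given $w_\ell$, so the whole pipeline remains pseudo-deterministic. \textbf{Step 3 ($\mathsf{CAPP}$ estimation).} Use $G_n$ by seed enumeration. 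The PRG guarantee implies that on every input length where Step 1 succeeded, the returned estimate is within $1/n^c$ of $\Pr_y[C(y)=1]$ on every circuit of size $\leq n^d$, and hence in particular on every circuit in the support of $\mathcal{D}_{n,n^d}$, far exceeding the required $1 - 1/n^c$ success probability over $\mathcal{D}_{n,n^d}$.

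The main obstacle is Step 1: ensuring that the ``lex-first incompressible'' search is genuinely pseudo-deterministic and succeeds on infinitely many $\ell$. The verification ``no small circuit computes this candidate'' is inherently probabilistic, so naively the algorithm risks committing to different canonical outputs on different runs. I would resolve this by amplifying the error probability of each individual candidate check to $2^{-\Theta(\ell)}$, so that a union bound over the at most $2^{O(\ell)}$ candidates examined forces the algorithm to commit to the same canonical $w_\ell$ with probability $\geq 2/3$. The ``infinitely often'' caveat of the theorem then arises from a diagonalization-style argument: contrapositively, if the procedure failed to return a hard string for all sufficiently large $\ell$, the accumulated compressors could be aggregated into a single efficient (randomized) procedure contradicting the counting bound on functions with small circuits. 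At the lengths $\ell$ where the construction does succeed, one obtains the corresponding ``good'' values of $n$ for which the $\mathsf{CAPP}$ guarantee holds; at the remaining input lengths the algorithm falls back to a canonical default (e.g., returning $1/2$), which may be incorrect on some circuits but is irrelevant to the infinitely-often average-case statement being proved.
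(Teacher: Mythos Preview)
Your Step~1 is not feasible in the stated time budget, and this breaks the whole plan. To certify that a truth table $w_\ell \in \{0,1\}^{2^\ell}$ has circuit complexity at least $2^{\Omega(\ell)}$ you must rule out every circuit of size up to $2^{\Omega(\ell)}$; there are $2^{2^{\Omega(\ell)}}$ such circuits, so a brute-force check costs time doubly exponential in $\ell$, not $2^{O(\ell)}$. Relaxing the hardness target does not help: to get an NW-style generator with seed length $O(\ell)=O(n^\varepsilon)$ and output length $n^d$ you still need hardness at least $n^{\Theta(d)}=\ell^{\Theta(d/\varepsilon)}$, and verifying even that by enumeration costs $2^{n^{\Theta(d)}}\gg 2^{O(n^\varepsilon)}$. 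Your contrapositive ``accumulated compressors'' argument does not rescue this; the counting bound tells you hard strings \emph{exist}, not that they can be \emph{found} in $2^{O(\ell)}$ time. In fact, if your pipeline worked it would give a PRG fooling \emph{every} size-$n^d$ circuit (your Step~3 explicitly claims worst-case correctness), hence an unconditional sub-exponential-time derandomisation of $\mathsf{promise}$-$\mathsf{BPP}$. That is a wide-open problem; the theorem you are asked to prove is strictly weaker---it only guarantees correctness \emph{on average} over a \emph{samplable} distribution of circuits.

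The paper's proof avoids the need to search for a hard function by using a \emph{fixed} hard function together with a win--win on $\mathsf{PSPACE}$ versus $\mathsf{BPP}$. If $\mathsf{PSPACE}\subseteq\mathsf{BPP}$, then exact counting of satisfying assignments is in $\mathsf{BPP}$, giving a polynomial-time pseudodeterministic (indeed, correct everywhere) algorithm for $\mathsf{CAPP}$. Otherwise, one invokes Theorem~\ref{t:io-PRG}: the \emph{deterministic} generator $G_\ell\colon\{0,1\}^\ell\to\{0,1\}^{\ell^b}$ with $b=d/\varepsilon$, built once and for all from a specific $\mathsf{PSPACE}$-complete language with self-reducibility and instance-checking properties, and the $\mathsf{CAPP}$ estimate is $\mu=\Pr_{z}[C(G_\ell(z))=1]$ with $\ell=\lceil n^\varepsilon\rceil$. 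The security of $G_\ell$ in Theorem~\ref{t:io-PRG} is only against \emph{samplable} distributions of distinguishers, which is exactly why the conclusion is average-case over $\mathcal{D}_{n,n^d}$ and why no search for a hard truth table is needed. Note also that in this branch the algorithm is literally deterministic, so pseudodeterminism on every input is automatic---a point you handle only informally with a ``canonical default'' whose trigger the algorithm cannot recognise.
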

	
	As a consequence of Theorems  \ref{t:pd-CAPP-to-hierarchy} and \ref{t:uncondition-pd-CAPP}, we get the following corollary, which provides an alternate proof of an existing hierarchy theorem.
	
	\begin{corollary}
		For every constant $k\geq 1$ and each $\varepsilon > 0$, there is a language $L \in \mathsf{BPTIME}[2^{n^{\varepsilon}}] \setminus \mathsf{BPTIME}[n^k]$.
	\end{corollary}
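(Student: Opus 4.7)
The plan is to directly combine \Cref{t:uncondition-pd-CAPP} (the unconditional pseudo-derandomisation of $\mathsf{CAPP}$ from \cite{DBLP:conf/stoc/OliveiraS17}) with \Cref{t:pd-CAPP-to-hierarchy} (the implication from pseudo-derandomisation to probabilistic time hierarchies). Given a constant $k\geq 1$ and $\varepsilon>0$, we wish to invoke \Cref{t:pd-CAPP-to-hierarchy} with $d=k$, which requires producing, for every polynomial-time samplable ensemble of input circuits of description length $n^{k+1}$, a pseudodeterministic algorithm for $\mathsf{CAPP}_{n,n^{k+1}}$ that runs in time $T(n)\leq 2^{n^\varepsilon}$ and succeeds with probability at least $1-1/(3n)$ over the ensemble, infinitely often.

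First I would fix an auxiliary exponent $\varepsilon'\in(0,\varepsilon)$, say $\varepsilon'=\varepsilon/2$, and then apply \Cref{t:uncondition-pd-CAPP} with the parameters $d=k+1$, $c=2$, and the exponent $\varepsilon'$. This yields a pseudodeterministic algorithm $A$ for $\mathsf{CAPP}_{n,n^{k+1}}$ running in time $2^{O(n^{\varepsilon'})}$ that, for any polynomial-time samplable ensemble of circuit distributions, succeeds with probability at least $1-1/n^2$ over the distribution on infinitely many input lengths $n$. For all sufficiently large $n$ we have $2^{O(n^{\varepsilon'})}\leq 2^{n^\varepsilon}$ (since $\varepsilon'<\varepsilon$) and $1-1/n^2\geq 1-1/(3n)$. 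Therefore $A$ satisfies all the hypotheses required by \Cref{t:pd-CAPP-to-hierarchy} with time bound $T(n)=2^{n^\varepsilon}$ (which is easily seen to be constructive).

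Plugging $A$ into \Cref{t:pd-CAPP-to-hierarchy} with $d=k$ then produces a language $L\in\mathsf{BPTIME}[2^{n^\varepsilon}]\setminus \mathsf{BPTIME}[n^k]$, which is precisely the conclusion of the corollary. The only subtlety worth double-checking is the alignment between the ``infinitely often'' guarantee of \Cref{t:uncondition-pd-CAPP} and the ``infinitely often'' hypothesis of \Cref{t:pd-CAPP-to-hierarchy}: the former produces an infinite set of good input lengths depending only on $A$ and the samplable distribution, and the latter is precisely designed to consume such a guarantee, so no further work is needed. Since the proof is otherwise a parameter-matching exercise, I do not anticipate any substantive obstacle beyond ensuring the success-probability threshold $1-1/(3n)$ is met by choosing $c\geq 2$, and ensuring $\varepsilon'<\varepsilon$ so that the asymptotic time bound fits inside $2^{n^\varepsilon}$.
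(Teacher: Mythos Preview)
Your proposal is correct and follows exactly the approach the paper intends: the corollary is stated in the paper as an immediate consequence of \Cref{t:pd-CAPP-to-hierarchy} and \Cref{t:uncondition-pd-CAPP}, and your parameter choices ($d=k+1$, $c=2$, and $\varepsilon'=\varepsilon/2$) correctly instantiate this combination. The only detail you add beyond the paper's one-line justification is the explicit verification that $1-1/n^2\geq 1-1/(3n)$ and $2^{O(n^{\varepsilon'})}\leq 2^{n^\varepsilon}$ for large $n$, which is routine.
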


	\subsection{\texorpdfstring{$\mathsf{BPTIME}$}{BPTIME}-hardness from pseudo-derandomisations of \texorpdfstring{$\mathsf{CAPP}$}{CAPP}}
	
	In this section, we show that weak pseudo-derandomisations of $\mathsf{CAPP}$ imply different forms of $\mathsf{BPTIME}$-hardness. 
	
	\begin{theorem}\label{thm:HeurBPP_hard}
	    Let $c, d \geq 1$, and let $T$ be a monotone constructive time bound.
		Suppose that there is a pseudodeterministic algorithm for $\mathsf{CAPP}_{n,n^{d+1}}$ that runs in time $T(n)$, and for every polynomial-time samplable  distribution $\mathcal{D}_{n, n^{d+1}}$ over circuits whose description is of length $\leq n^{d+ 1}$, succeeds with probability $1-1/n^c$ over $\mathcal{D}_{n, n^{d+1}}$ for infinitely many values of $n$. Then there is a language $L \in \mathsf{BPTIME}[T(n)]$ such that, for every language $L_0\in \mathsf{BPTIME}[n^d]$, there is a deterministic polynomial-time reduction $R$ such that, for every polynomial-time samplable distribution $I_n$ supported over $\{0,1\}^n$ and for infinitely many values of $n$, we have
		\[
		\Prob_{x\sim I_n}[L_0(x) = L(R(x))]\geq 1-1/n^c.
		\]
	\end{theorem}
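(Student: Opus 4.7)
The plan is to adapt the diagonalization from the proof of Theorem \ref{t:pd-CAPP-to-hierarchy} so that each enumerated machine $B_i$ is simulated on an arbitrary string supplied by the input to $L$, rather than on a fixed padded encoding of its own index. This modification allows, for any target language $L_0 \in \mathsf{BPTIME}[n^d]$ computed by some $B_{i^\star}$, a single fixed deterministic polynomial-time reduction $R$ (depending only on $L_0$) such that $L(R(x)) = L_0(x)$ whenever the pseudodeterministic $\mathsf{CAPP}$ algorithm $A$ gives a correct estimate on the induced circuit.

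Concretely, fix an enumeration $B_1, B_2, \ldots$ of clocked probabilistic $n^d$-time machines and a prefix-free encoding $i \mapsto p_i$ of indices. Define $L$ as follows: on input $z \in \{0,1\}^n$, parse $z = p_i \cdot x$ (rejecting on malformed parsings), construct the circuit $C_{i,x}(y) = B_i(x, y)$ of size at most $n^{d+1}$, and set $L(z) = 1$ iff the canonical value returned by $A(1^n, C_{i,x})$ is at least $1/2$. Since $A$ is pseudodeterministic in time $T(n)$ and $C_{i,x}$ is polynomial-time constructible, $L \in \mathsf{BPTIME}[T(n)]$. For $L_0 \in \mathsf{BPTIME}[n^d]$ computed by $B_{i^\star}$, define $R(x) = p_{i^\star} \cdot x$; this is deterministic polynomial-time, and $|R(x)| - |x| = c_{L_0} := |p_{i^\star}|$ is a constant depending only on $L_0$. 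Whenever $A$ correctly $(1/10)$-estimates $\Prob_y[C_{i^\star,x}(y) = 1] = \Prob_y[B_{i^\star}(x,y) = 1]$, the $2/3$ vs.\ $1/3$ bounded-error gap of $B_{i^\star}$ forces the canonical output of $A$ to fall on the correct side of $1/2$, so $L(R(x)) = L_0(x)$.

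To finish, fix any polynomial-time samplable $\{I_n\}$ and consider the ensemble $\{\mathcal{D}_{n'}\}_{n'}$ obtained by sampling $x \sim I_{n' - c_{L_0}}$ and returning $C_{i^\star, x}$ (with an arbitrary samplable default for $n' \leq c_{L_0}$). This ensemble is polynomial-time samplable in $n'$, so the hypothesis produces infinitely many $n'$ on which $A$ succeeds with probability $\geq 1 - 1/(n')^c$ over $\mathcal{D}_{n'}$. These translate one-to-one via $n = n' - c_{L_0}$ to infinitely many $n$ satisfying $\Prob_{x \sim I_n}[L_0(x) = L(R(x))] \geq 1 - 1/n^c$. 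The main subtlety, and the reason for using a prefix-free encoding of indices, is to make $|R(x)|$ depend on $|x|$ only through an additive constant: this guarantees that the infinitely many good output lengths promised by the hypothesis transfer to infinitely many input lengths $n$ for the reduction, whereas a construction where $|R(\cdot)|$ is a nonlinear polynomial could have its image miss the good set of output lengths entirely.
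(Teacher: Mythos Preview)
Your proof is correct, and while the high-level idea (define $L$ via the canonical CAPP estimate on the circuit simulating a machine on its input, then reduce by hard-coding the machine for $L_0$) is the same as the paper's, the execution differs in a meaningful way.

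In the paper, the input to $L$ is encoded as a triple $w=(\langle M\rangle,x,1^t)$, and the CAPP algorithm is invoked as $A\bigl(1^{|x|},C_{(M,x)}\bigr)$, i.e.\ with first parameter equal to the \emph{inner} input length $|x|$ extracted from the parsed triple, not $|w|$. Consequently, the samplable CAPP distribution induced by $I_n$ is indexed directly by $n=|x|$, and the ``infinitely many good $n$'' from the hypothesis line up exactly with the input lengths of $I_n$ with no shift at all. This lets the paper use a reduction $R(x)=(\langle M_0\rangle,x,1^{|x|^d})$ that blows the length up polynomially, while your concern about the image of $|R(\cdot)|$ missing the good set simply does not arise.

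You instead keep the CAPP parameter equal to the full input length $n=|z|$ and compensate by forcing $|R(x)|=|x|+c_{L_0}$ via a prefix-free index encoding, then translate the good lengths across a constant additive shift. Both tricks solve the same alignment problem: the paper decouples the CAPP index from the $L$-input length, whereas you keep them coupled but make the coupling a bijection up to a constant. Your approach buys a reduction that is length-increasing by only $O(1)$ rather than polynomially; the paper's approach buys a cleaner bookkeeping with no shift and no need for prefix-freeness. Either is fine for the stated theorem.
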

	
	Combining \Cref{thm:HeurBPP_hard} with the unconditional pseudo-derandomisations for $\mathsf{CAPP}$ in \Cref{t:uncondition-pd-CAPP}, we get the following unconditional result.
	
	\begin{corollary}\label{c:new_completeness}
		For every $\varepsilon>0$ and $c, d \geq 1$, there is a language $L\in \mathsf{BPTIME}\!\left[2^{n^{\varepsilon}}\right]$ such that, for each language $L_0\in \mathsf{BPTIME}[n^{d}]$, there is a deterministic polynomial-time reduction $R$ such that, for every polynomial-time samplable distribution $I_n$ supported over $\{0,1\}^n$ and for infinitely many values of $n$, we have
		\[
		\Prob_{x\sim I_n}[L_0(x) = L(R(x))]\geq 1-1/n^c.
		\]
	\end{corollary}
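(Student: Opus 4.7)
The plan is to derive Corollary~\ref{c:new_completeness} as a direct combination of Theorem~\ref{t:uncondition-pd-CAPP} (the unconditional $2^{O(n^{\varepsilon})}$-time infinitely-often average-case pseudo-derandomisation of $\mathsf{CAPP}$) and Theorem~\ref{thm:HeurBPP_hard} (the conditional $\mathsf{BPTIME}$-hardness consequence of such a pseudo-derandomisation). Given parameters $\varepsilon > 0$ and $c, d \geq 1$ from the corollary, the idea is simply to instantiate Theorem~\ref{t:uncondition-pd-CAPP} so that its hypothesis matches the hypothesis of Theorem~\ref{thm:HeurBPP_hard} and then read off the conclusion.

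First, I would apply Theorem~\ref{t:uncondition-pd-CAPP} with circuit-size exponent $d + 1$ (as required by Theorem~\ref{thm:HeurBPP_hard}), the same constant $c$, and inner PRG-parameter $\varepsilon' \eqdef \varepsilon/2$. This yields a pseudodeterministic algorithm $A$ for $\mathsf{CAPP}_{n, n^{d+1}}$ running in time $2^{O(n^{\varepsilon/2})}$ that, for every polynomial-time samplable ensemble $\{\mathcal{D}_{n, n^{d+1}}\}$ of circuits of description length $\leq n^{d+1}$, succeeds with probability at least $1 - 1/n^c$ over $\mathcal{D}_{n, n^{d+1}}$ for infinitely many input lengths $n$. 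For all sufficiently large $n$ we have $2^{O(n^{\varepsilon/2})} \leq 2^{n^{\varepsilon}}$, so setting $T(n) \eqdef 2^{n^{\varepsilon}}$ (a monotone constructive time bound) we may view $A$ as a $T(n)$-time pseudodeterministic algorithm with exactly the success guarantee demanded by the hypothesis of Theorem~\ref{thm:HeurBPP_hard}.

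Now I would invoke Theorem~\ref{thm:HeurBPP_hard} with this $T$, together with the constants $c$ and $d$ from the statement of the corollary. The conclusion of that theorem gives a language $L \in \mathsf{BPTIME}[T(n)] = \mathsf{BPTIME}[2^{n^{\varepsilon}}]$ such that for every $L_0 \in \mathsf{BPTIME}[n^d]$ there is a deterministic polynomial-time reduction $R$ with $\Pr_{x \sim I_n}[L_0(x) = L(R(x))] \geq 1 - 1/n^c$ for every polynomial-time samplable distribution $I_n$ on $\{0,1\}^n$ and infinitely many $n$ -- which is precisely the statement of the corollary.

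There is essentially no obstacle beyond parameter bookkeeping: the universal quantification over samplable distributions, the choice of circuit-size exponent $d+1$, and the ``infinitely often'' clauses line up identically on the two sides, and the inner exponent $\varepsilon/2$ is absorbed into the target bound $2^{n^{\varepsilon}}$ using only the inequality $2^{O(n^{\varepsilon/2})} \leq 2^{n^{\varepsilon}}$ for large $n$. Thus the corollary follows immediately by composing the two theorems.
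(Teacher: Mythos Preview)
Your proposal is correct and matches the paper's approach exactly: the paper states the corollary as a direct combination of Theorem~\ref{thm:HeurBPP_hard} and Theorem~\ref{t:uncondition-pd-CAPP}, with no further argument given. Your parameter bookkeeping (instantiating the circuit-size exponent at $d+1$ and taking inner exponent $\varepsilon/2$ to absorb the $O(\cdot)$ into $2^{n^{\varepsilon}}$) is precisely what is needed to make the composition go through.
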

	
	We now prove \Cref{thm:HeurBPP_hard}.
	\begin{proof}[Proof of \Cref{thm:HeurBPP_hard}]
		Given a probabilistic machine $M$ that runs in at most $t = |x|^d$ steps and an input $x$ for $M$, we let $C_{(M,x)}(y)$ be the circuit that computes according to $M(x, y)$, where $y$ is the internal randomness used by $M$. Recall that given $M$ and $x$, $C_{(M,x)}$ is easily computed and has size at most $O(t \cdot \log t) \leq |x|^{d + 1}$.
		
		We now define the language $L$. Let $A$ be the pseudodeterministic search algorithm for $\mathsf{CAPP}_{n,n^{d + 1}}$ granted by the statement of the theorem. For an input $w=\left(\langle M\rangle,x,1^t\right)$ of length $n$ with $t \leq |x|^d$,  
		\[
		w\in L \iff \text{The canonical output $\mu$ of $A\!\left(1^{|x|}, C_{(M,x)}\right)$ is at least 1/2}.
		\]
		Since $A$ is a pseudodeterministic algorithm that runs in time $T(|x|) \leq T(n)$, we get that $L \in \mathsf{BPTIME}[T(n)]$.
		
		Next, we show that $L$ is hard for $\mathsf{BPTIME}[n^d]$ (infinitely often and on average).
		Let $L_0 \in \mathsf{BPTIME}[n^d]$, and let $M_0$ be a probabilistic machine for $L_0$ which runs in time at most $t(n) = n^d$. Given an input $x$ for $L_0$, we define the reduction as $R(x)= \left(\langle M_0\rangle,x,1^{t(|x|)}\right)$.

	Let $\{I_n\}_{n \geq 1}$ be a polynomial-time samplable ensemble of distributions $I_n$ supported over $\{0,1\}^n$. Moreover, let $\mathcal{D}_{n, n^{d+1}}$ be the distribution supported over $\mathsf{CAPP}_{n,n^{d + 1}}$ obtained by first sampling $x \sim I_n$, then outputting the description of the circuit $C_{(M_0,x)}$. Note that $\mathcal{D}_{n, n^{d+1}}$ is also polynomial-time samplable. Therefore, algorithm $A$ succeeds with probability at least $1 - 1/n^c$ with respect to $\mathcal{D}_{n, n^{d+1}}$ on infinitely many values of $n$. For any such $n$, it follows from the definition of $R(x)$, $L$, and $\mathcal{D}_{n, n^{d+1}}$ that
	$$
		\Prob_{x\sim I_n}[L_0(x) = L(R(x))] \geq 1 - 1/n^c.
	$$
	This completes the proof.
	\end{proof}	
	
	Similarly, we note that from almost-everywhere worst-case pseudo-derandomisations of $\mathsf{CAPP}_n$ (Definition \ref{d:CAPP}), we get a $\mathsf{BPTIME}$-hard language (Definition \ref{d:bptime_hardness}).
    \begin{theorem}\label{thm:aePD-BPP_hard}
        Let $T$ be a constructive time bound.
        Suppose that there is a pseudodeterministic algorithm for $\mathsf{CAPP}_{n}$ that runs in time $T(n)$. Then there is a language in $\mathsf{BPTIME}[T(n)]$ that is $\mathsf{BPTIME}$-hard. In particular, if $T$ is a polynomial, then we have a $\mathsf{BPP}$-complete problem.
    \end{theorem}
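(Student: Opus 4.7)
The plan is to follow the template of the proof of Theorem \ref{thm:HeurBPP_hard}, upgrading its samplable-distribution/infinitely-often argument to a per-input one using the stronger worst-case hypothesis. Let $A$ be the assumed pseudodeterministic worst-case algorithm for $\mathsf{CAPP}_n$ running in time $T(n)$. Define a language $L$ as follows: given $w$ of length $n$, parse it as a tuple $(\langle M \rangle, x, 1^t)$ encoding a probabilistic machine $M$, an input $x$, and a time bound $t$ in unary (reject if the parsing fails), and form the Boolean circuit $C_{(M,x,1^t)}(y)$ of size $O(t \log t)$ that simulates $M(x,y)$ for $t$ steps. The unary padding forces $n \geq t$, so for the constant $C$ from the definition of $\mathsf{CAPP}_n$ and all large enough $n$ we have $|C_{(M,x,1^t)}| \leq n \cdot (\log n)^C$; after padding the unused input variables, $(1^n, C_{(M,x,1^t)})$ is a valid instance of $\mathsf{CAPP}_n$. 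Declare $w \in L$ iff the canonical output $\mu$ of $A(1^n, C_{(M,x,1^t)})$ satisfies $\mu \geq 1/2$. Parsing and constructing the circuit take $\mathsf{poly}(n)$ time and $A$ runs in time $T(n)$, so $L \in \mathsf{BPTIME}[T(n)]$.

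For $\mathsf{BPTIME}$-hardness, given $L' \in \mathsf{BPTIME}[t'(n)]$ decided by a bounded-error probabilistic machine $M_0$ running in time $t'(n)$, define the deterministic reduction $R(x) \eqdef (\langle M_0 \rangle, x, 1^{t'(|x|)})$. This runs in time $O(t'(|x|))$, fitting Definition \ref{d:bptime_hardness} with constant $c = 1$. Correctness is immediate from the worst-case pseudodeterminism of $A$: on \emph{every} input circuit $C$, its canonical output $\mu$ satisfies $|\mu - \Pr_y[C(y) = 1]| \leq 1/10$. In particular, when $x \in L'$ we have $\Pr_y[M_0(x,y) = 1] \geq 2/3$, so $\mu \geq 2/3 - 1/10 > 1/2$ and $R(x) \in L$; when $x \notin L'$ we have $\Pr_y[M_0(x,y) = 1] \leq 1/3$, so $\mu \leq 1/3 + 1/10 < 1/2$ and $R(x) \notin L$. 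The ``in particular'' clause then follows: if $T$ is a polynomial, then $L \in \mathsf{BPP}$, and since $L$ is $\mathsf{BPTIME}$-hard it is $\mathsf{BPP}$-hard, hence $\mathsf{BPP}$-complete.

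There is no real obstacle; the entire argument is a clean upgrade of the proof of Theorem \ref{thm:HeurBPP_hard}. The key conceptual step is that worst-case pseudodeterminism of $\mathsf{CAPP}$ supplies a \emph{fixed} estimate within $1/10$ of the true acceptance probability on \emph{every} circuit with probability $\geq 2/3$, which is exactly what a deterministic reduction into a $\mathsf{BPTIME}$ language requires. The only bookkeeping concerns matching the size $O(t \log t)$ of the simulation circuit against the $n \cdot (\log n)^C$ bound in $\mathsf{CAPP}_n$, and this is handled by the unary padding $1^t$ in the encoding of inputs to $L$.
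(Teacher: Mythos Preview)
Your proposal is correct and follows essentially the same approach as the paper's proof sketch: define $L$ via the canonical output of the pseudodeterministic $\mathsf{CAPP}$ algorithm on the simulation circuit $C_{(M,x)}$ built from the tuple $(\langle M\rangle, x, 1^t)$, and reduce any $L' \in \mathsf{BPTIME}[t']$ via $x \mapsto (\langle M_0\rangle, x, 1^{t'(|x|)})$. You have in fact supplied the details the paper omits, including the circuit-size bookkeeping against the $n\cdot(\log n)^C$ bound in $\mathsf{CAPP}_n$ and the verification that the reduction fits Definition~\ref{d:bptime_hardness}.
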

	\begin{proof}[Proof Sketch.]
		The idea of the proof is similar to that of \Cref{thm:HeurBPP_hard}. Using an (almost-everywhere, worst-case) pseudodeterministic algorithm $A$ for solving $\mathsf{CAPP}_{n}$ as in the assumption, we can define the language $L$ as follows. Given an input $w=\left(\langle M\rangle,x,1^t\right)$ of length $n$, $w\in L$ if and only the canonical output of $A$ running on the circuit $C_{(M,x)}(y)$ of size $\leq n \cdot (\log n)^C$ that computes according to $M(x, y)$ is at least $1/2$. Since with probability at least $2/3$, $A$ outputs a \emph{fixed} good estimate of the acceptance probability of $C_{(M,x)}$, $L$ can be decided in $\mathsf{BPTIME}[T(n)]$.
	\end{proof}
    
 	Also, if $\mathsf{BPE}$ is not contained infinitely often in $\mathsf{SIZE}(2^{\varepsilon n})$ for some $\varepsilon > 0$, then we get pseudodeterministic PRGs with logarithmic seed length (see \citep{DBLP:conf/stoc/OliveiraS17}) computable in polynomial time, which can be used to pseudodeterministically approximate acceptance probabilities  of circuits in polynomial time. This leads to the following new connection between circuit lower bounds for $\mathsf{BPE}$ and the existence of complete problems for $\mathsf{BPP}$.
    
    \begin{theorem}\label{thm:LB-BPP_hard}
    If there is a language in $\mathsf{BPE}$ that is not infinitely often in $\mathsf{SIZE}(2^{\varepsilon n})$ for some $\varepsilon > 0$, then there is a $\mathsf{BPP}$-complete problem.
    \end{theorem}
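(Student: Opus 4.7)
The plan is to chain two reductions to obtain the conclusion. First, I would use the assumed circuit lower bound to unconditionally construct a pseudodeterministic polynomial-time PRG with logarithmic seed length, and then feed that PRG into the worst-case hardness-to-completeness pipeline established by Theorem~\ref{thm:aePD-BPP_hard}.

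In more detail, let $L \in \mathsf{BPE}$ be a language witnessing the hypothesis: for all sufficiently large $n$, the characteristic function of $L$ at length $n$ requires circuits of size at least $2^{\varepsilon n}$. This is precisely the setup invoked by Lemma~1 of \cite{DBLP:conf/stoc/OliveiraS17} (and also cited in the proof of Lemma~\ref{l:PSPACEeasy}): since the truth table of $L$ at length $n$ can be produced pseudodeterministically in time $2^{O(n)}$ (using bounded-error amplification of the $\mathsf{BPE}$ algorithm and a union bound over all $2^n$ bits), plugging this truth table into the Impagliazzo--Wigderson generator yields, for every constant $k$, a generator $G = \{G_m\}_{m \geq 1}$ with $G_m \colon \{0,1\}^{O(\log m)} \to \{0,1\}^m$ that is computable in pseudodeterministic polynomial time and is secure against circuits of size $m^k$ almost everywhere.

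With $G$ in hand, I would derandomise $\mathsf{CAPP}_n$ in pseudodeterministic polynomial time in the natural way: given input $(1^n, C)$ where $C$ is a circuit of size $n \cdot (\log n)^C$ on at most that many inputs, the algorithm enumerates all $\mathsf{poly}(n)$ seeds $s$, computes each $G_{n'}(s)$ pseudodeterministically (where $n'$ is the input length of $C$), evaluates $C(G_{n'}(s))$, and outputs the empirical average $\widehat{\mu}$. By the security of $G$ against polynomial-size circuits at every sufficiently large length, $\widehat{\mu}$ is within $1/10$ of $\Pr_y[C(y)=1]$; by a union bound over the $\mathsf{poly}(n)$ pseudodeterministic invocations of $G$, all of them agree with their canonical outputs with high probability, so the overall algorithm is pseudodeterministic with a \emph{fixed} canonical estimate on every input. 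This gives an almost-everywhere worst-case pseudodeterministic polynomial-time algorithm for $\mathsf{CAPP}_n$, and applying Theorem~\ref{thm:aePD-BPP_hard} with $T$ polynomial produces a $\mathsf{BPP}$-complete problem.

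The main subtlety, and the only real obstacle, is verifying that the Impagliazzo--Wigderson reduction preserves pseudodeterminism when the hard function comes from $\mathsf{BPE}$ rather than $\mathsf{E}$. Since the reduction is a fixed deterministic post-processing of the truth table of the hard function (Nisan--Wigderson design combined with hardness amplification), the only randomness in computing $G_m(s)$ comes from computing individual bits of the truth table of $L$ at a length $\Theta(\log m)$; amplifying the $\mathsf{BPE}$ algorithm for $L$ to error $2^{-\Omega(m)}$ still costs only polynomial time in $m$, so a union bound over the $\mathsf{poly}(m)$ truth-table queries made while evaluating $G_m(s)$ ensures that with probability $\geq 2/3$ every queried bit is correct, making the output canonical. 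A few other routine points, such as handling the input length $n'$ of $C$ when $n' \neq n$ by padding or rescaling the seed length, and choosing $k$ large enough for the $1/10$ error target, are standard and do not pose real difficulties.
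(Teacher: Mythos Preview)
Your proposal is correct and follows essentially the same approach as the paper: the paper's justification (given in the paragraph preceding the theorem) is precisely to invoke Lemma~1 of \cite{DBLP:conf/stoc/OliveiraS17} to obtain a pseudodeterministic polynomial-time PRG with logarithmic seed length from the $\mathsf{BPE}$ circuit lower bound, use it to pseudodeterministically solve $\mathsf{CAPP}$ in polynomial time, and then apply Theorem~\ref{thm:aePD-BPP_hard}. Your additional discussion of why the Impagliazzo--Wigderson construction preserves pseudodeterminism is a helpful elaboration of exactly the point the paper leaves implicit.
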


	\section{An equivalence between pseudodeterminism and hierarchies}
	
	In this section, we investigate the existence of \emph{equivalences} between pseudo-derandomisations and probabilistic time hierarchies, and provide a proof of Theorem \ref{t:equivalences}. Our main result here is that a certain explicit construction problem  is ``universal'' in the following sense: it can be pseudo-derandomised if and only if a strong hierarchy theorem holds. 
	
	\subsection{Constructing strings of large \texorpdfstring{$\rKt$}{rKt} complexity versus time hierarchies}\label{s:rKt_ae}
	
	It it easy to see that a string of linear $\Kt$ complexity can be deterministically computed in exponential time. We consider the following randomised variant of this fact.
	
	\begin{hypothesis}[Pseudodeterministic construction of strings of large $\rKt$ complexity] \label{h:pseudo_constr}
		Let $T$ be a monotone constructive function with $T(\ell) \geq \ell$. There is a constant $\varepsilon > 0$ and a randomised algorithm $A$ that, given $m$, runs in time at most $T\!\left(2^{m}\right)$ and outputs with probability at least $2/3$ a fixed $m$-bit string $w_m$ such that $\rKt(w_m) \geq \varepsilon m$.\footnote{We write $T(2^m)$ instead of $T(m)$ for convenience when stating some results below. Note that this explicit construction problem cannot be solved in probabilistic time $2^{o(m)}$ by the very definition of $\rKt$.}
	\end{hypothesis} 
	
	This hypothesis can be shown to hold with $T(\ell) = \mathsf{poly}(\ell)$ under a derandomisation assumption, since in this case we get that $\Kt(x) = \Theta(\rKt(x))$ via a result from \citep{DBLP:conf/icalp/Oliveira19}.

	An algorithm for this construction problem readily implies a hierarchy theorem, as proved next.
	
	For a language $L \subseteq \{0,1\}^*$, we use $L^{=n}$ to denote $L \cap \{0,1\}^n$. We also view $L^{=n}$ as a string $\mathsf{string}(L^{=n}) \in \{0,1\}^{2^n}$, where $\mathsf{string}(L^{=n})(i) = 1$ if and only if the $i$th $n$-bit string is in $L^{=n}$. If $w$ is a $d$-bit string and $1 \leq \ell \leq d$, we let $w_{[\ell]}$ denote the $\ell$-bit string corresponding to the leftmost $\ell$ bits of $w$. 
	
	We start with the following observation, which is proved in the natural way.
	
	\begin{fact}\label{f:rKt_ub} There is a positive constant $C'$ for which the following holds.
		Let $L \in \mathsf{BPTIME}[a(n)]/b(n)$. Then for every $n \geq 1$ and $1 \leq \ell \leq 2^n$, if $v = \mathsf{string}(L^{=n})$ then 
		$$
		\rKt(v_{[\ell]}) \leq C' \cdot (\log(\ell) + \log (a(n)) + b(n) + \log (n)) + O(1).
		$$
		The same argument shows that if $L \in \mathsf{i.o.BPTIME}[a(n)]/b(n)$, then the $\rKt$ upper bound  holds for infinitely many choices of $n$ and every corresponding $1 \leq \ell \leq 2^n$.
	\end{fact}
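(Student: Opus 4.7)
The plan is to exhibit an explicit probabilistic machine that, on suitable advice, outputs $v_{[\ell]}$ with probability at least $2/3$ within a small time bound, and then read off the corresponding $\rKt$ upper bound directly from the definition.

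First, let $M$ be a probabilistic machine witnessing $L \in \mathsf{BPTIME}[a(n)]/b(n)$, so that $M$ on inputs of length $n$ uses advice $\alpha(n) \in \{0,1\}^{b(n)}$, runs in time $O(a(n))$, and errs with probability at most $1/3$ on each input. I would then define a fixed (advice-taking) probabilistic machine $M'$ that, given as input the triple $(n, \ell, \alpha(n))$, proceeds as follows: it enumerates the first $\ell$ strings $x_1, \dots, x_\ell \in \{0,1\}^n$ in lexicographic order, and for each $x_i$ runs the standard majority-vote amplification of $M(x_i, \alpha(n))$ using $k = \Theta(\log \ell)$ independent repetitions so that the error on each $x_i$ is at most $1/(10\ell)$. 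It finally outputs the concatenation of the $\ell$ resulting bits. By a union bound, with probability at least $2/3$ the output equals $v_{[\ell]}$.

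Next I would bound the two relevant quantities. The description length of $M'$ is a fixed constant independent of $L, n, \ell$. Its input string encodes the pair $(n, \ell)$ in $O(\log n + \log \ell)$ bits together with the advice string $\alpha(n)$ of length $b(n)$, so the total $|M'| + |\text{input}|$ is $O(1) + b(n) + O(\log n + \log \ell)$. The running time is dominated by $\ell$ invocations of the amplified $M$, each taking time $O(a(n) \log \ell)$, giving total time $t \leq O(\ell \cdot a(n) \cdot \log \ell)$, and hence $\lceil \log t \rceil \leq O(\log \ell + \log a(n))$. Plugging the machine, input, and $\log t$ into the definition of $\rKt$ yields
\[
\rKt(v_{[\ell]}) \leq C' \cdot \bigl(\log \ell + \log a(n) + b(n) + \log n\bigr) + O(1)
\]
for a sufficiently large absolute constant $C'$, which is the claimed bound.

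Finally, for the infinitely-often variant I would apply the identical construction on each input length $n$ at which $M$ correctly decides $L^{=n}$ with the given advice; the same calculation then gives the same upper bound on $\rKt(v_{[\ell]})$ for each $1 \leq \ell \leq 2^n$ at those infinitely many lengths. There is no real obstacle here beyond being careful with the amplification parameter so that a union bound over the $\ell$ bits still yields success probability $\geq 2/3$ while keeping the additional $\log \ell$ factor in the time bound absorbed into the $\log \ell + \log a(n)$ term.
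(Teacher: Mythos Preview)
The proposal is correct and follows exactly the natural approach the paper alludes to (the paper does not spell out a proof, merely stating the fact ``is proved in the natural way''). One small slip: you write that the description length of $M'$ is a fixed constant independent of $L$, but $M'$ must hardcode the machine $M$ deciding $L$; this dependence on $L$ is, however, absorbed into the trailing $O(1)$ term (while $C'$ remains absolute), so the argument goes through unchanged.
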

	
	\begin{theorem}[Hypothesis \ref{h:pseudo_constr} $\Longrightarrow$ Hierarchy Theorem for Probabilistic Time] \label{t:hyp_vs_hier}
		Assume that Hypothesis \ref{h:pseudo_constr} is true for every large enough $m$. Then there are constants $k \geq 1$ and $\lambda > 0$ for which the following holds. For any constructive function $n \leq t(n) \leq 2^{\lambda \cdot 2^n}$, there is a language $L \in \mathsf{BPTIME}\left[T\!\left(t(n)^k\right)\right]$ such that $L \notin \mathsf{i.o.BPTIME}[t(n)]/\log (t(n))$. 
	\end{theorem}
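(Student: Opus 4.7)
The plan is to use Hypothesis \ref{h:pseudo_constr} to embed a hard string inside the truth table of $L$ at length $n$, and then invoke Fact \ref{f:rKt_ub} to derive a contradiction against any hypothetical low-complexity algorithm (with advice) for $L$. Given a constructive $t(n)$ with $n \leq t(n) \leq 2^{\lambda \cdot 2^n}$, set $m = m(n) := K(\log t(n) + \log n)$, where $K$ is a large constant to be chosen (depending on the $\varepsilon$ from Hypothesis \ref{h:pseudo_constr} and the $C'$ from Fact \ref{f:rKt_ub}). Since $t(n) \geq n$ is constructible and $t(n) \leq 2^{\lambda \cdot 2^n}$, we have $m \leq 2K\lambda \cdot 2^n + O(\log n)$, so by picking $\lambda < 1/(4K)$ we guarantee $m \leq 2^n$ for all large $n$, which is needed so that an $m$-bit string fits in the first $m$ positions of the truth table of $L^{=n}$.

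Define $L$ as follows: on input $x \in \{0,1\}^n$, compute $m = m(n)$, and let $i \in \{1,\dots,2^n\}$ denote the lexicographic rank of $x$ among $n$-bit strings. If $i > m$, reject. Otherwise, simulate the randomised algorithm $A$ from Hypothesis \ref{h:pseudo_constr} on input $m$ for $T(2^m)$ steps and output the $i$-th bit of its output. Since $2^m \leq t(n)^K \cdot n^K \leq t(n)^{2K}$ (using $t(n) \geq n$), the running time is at most $T(t(n)^k)$ for $k := 2K$ (padding constants into $T$ via monotonicity). By pseudodeterminism, with probability $\geq 2/3$ the simulation of $A$ returns the canonical string $w_m$, so the bit we output agrees with $(w_m)_i$ with probability $\geq 2/3$ on every input. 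Hence $L \in \mathsf{BPTIME}[T(t(n)^k)]$.

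For the lower bound, suppose toward contradiction that $L \in \mathsf{i.o.BPTIME}[t(n)]/\log t(n)$. By Fact \ref{f:rKt_ub} (applied with $a(n) = t(n)$, $b(n) = \log t(n)$, $\ell = m$), for infinitely many $n$ the string $v = \mathsf{string}(L^{=n})$ satisfies
\[
\rKt(v_{[m]}) \;\leq\; C' \cdot (\log m + 2 \log t(n) + \log n) + O(1).
\]
By construction of $L$, the string $v_{[m]}$ is exactly $w_m$. Thus for infinitely many $n$,
\[
\varepsilon \cdot m \;\leq\; \rKt(w_m) \;\leq\; C'(\log m + 2 \log t(n) + \log n) + O(1).
\]
Since $m = K(\log t(n) + \log n)$ and $\log m = O(\log \log t(n) + \log \log n)$ is negligible compared to $m$, choosing $K$ so that $\varepsilon K > 4 C'$ yields a contradiction for all sufficiently large $n$, ruling out the assumed inclusion.

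\textbf{Main obstacles.} The argument is mostly bookkeeping of parameters; the two places requiring care are (i) ensuring the hypothesis is invoked at a value of $m$ that is simultaneously large enough (so the $\rKt$ lower bound of Hypothesis \ref{h:pseudo_constr} applies), small enough to fit inside the $n$-bit truth table (forcing the choice of $\lambda$ to depend on $K$), and produces running time within $T(t(n)^k)$; and (ii) verifying that the advice term $b(n) = \log t(n)$ from Fact \ref{f:rKt_ub} is genuinely dominated by $\varepsilon m$ — this is why the definition of $m$ has $\log t(n)$ scaled by a large constant $K$, and why the theorem only rules out $\log t(n)$ bits of advice rather than more. The ceiling $t(n) \leq 2^{\lambda 2^n}$ is precisely what is needed to keep $m \leq 2^n$ and thus to have enough room in the truth table.
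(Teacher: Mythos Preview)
Your proposal is correct and follows essentially the same approach as the paper: define $m(n)$ as a large constant times $\log t(n)$, embed the canonical string $w_m$ in the first $m$ bits of the length-$n$ truth table of $L$, place $L$ in $\mathsf{BPTIME}[T(t(n)^k)]$ via Hypothesis \ref{h:pseudo_constr}, and rule out $\mathsf{i.o.BPTIME}[t(n)]/\log t(n)$ via Fact \ref{f:rKt_ub}. The only cosmetic difference is that you include an extra $+\log n$ in the definition of $m$, which is redundant since $t(n) \geq n$ already gives $\log n \leq \log t(n)$; the paper simply takes $m(n) = \lceil (10C'/\varepsilon)\log t(n)\rceil$.
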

	
	\begin{proof}
		Let $m(n) = \left\lceil \frac{10C'}{\varepsilon} \cdot \log t(n) \right\rceil$, where $\varepsilon$ is the constant from Hypothesis \ref{h:pseudo_constr}, and $C'$ is the constant from Fact \ref{f:rKt_ub}. Moreover, let $w_m \in \{0,1\}^m$ be the corresponding string with $\rKt(w_m) \geq \varepsilon m$. Define the following language $L$. On inputs of length $n$, $\mathsf{string}(L^{=n})(i) = 0$ if $i > m(n)$, and $\mathsf{string}(L^{=n})(i) = w_m(i)$ otherwise. Note that this is well defined, since by an appropriate choice of $\lambda$ in the upper bound for $t(n)$ we get $m(n) \leq 2^n$.  
		
		By construction, we have that $L \in \mathsf{BPTIME}\left[T\!\left(2^{m(n)}\right)\right]$, which places $L \in \mathsf{BPTIME}\left[T\!\left(t(n)^k\right)\right]$ for a fixed $k \geq 1$ that is independent of $t(n)$. On the other hand, if we let $\ell = m(n) \leq 2^n$, it is not hard to see via Fact \ref{f:rKt_ub} (using our choice of $m(n)$ when computing $L$ on inputs of length $n$) that $L \notin \mathsf{i.o.BPTIME}[t(n)]/\log (t(n))$.
	\end{proof}

	It is not hard to see that \Cref{t:hyp_vs_hier} is in fact  equivalent to Hypothesis \ref{h:pseudo_constr} when $T(\ell) = \mathsf{poly}(\ell)$. This is obtained by viewing the hard language $L \in \mathsf{BPTIME}[t(n)^k]$ for the maximum admissible $t(n)$ in Proposition \ref{t:hyp_vs_hier} as a sequence of strings of length $m = 2^n$ that can be pseudodeterministically constructed in time $2^{O(m)}$. 
	
	\begin{theorem}[Hierarchy Theorem for Probabilistic Time $\Longrightarrow$ \Cref{h:pseudo_constr}]\label{t:hierarchy-to-pseudo_constr}
		Let $T(\ell)\geq \ell$ be a monotone constructive time bound.	Suppose there are constants $k \geq 1$ and $\lambda > 0$ for which the following holds: for any constructive function $n \leq t(n) \leq 2^{\lambda \cdot 2^n}$, there is a language $L \in \mathsf{BPTIME}\left[T\!\left(t(n)^k\right)\right]$ such that $L \notin \mathsf{i.o.BPTIME}[t(n)]/\log (t(n))$. Then Hypothesis \ref{h:pseudo_constr} is true. 
	\end{theorem}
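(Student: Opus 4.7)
The plan is to obtain the pseudodeterministic construction of $w_m$ by taking (a truncation of) the truth table of a hard language $L$ furnished by the hierarchy assumption. First I fix the constants $k$ and $\lambda$ supplied by the hypothesis, and then choose $\lambda_0 := \min(\lambda, 1/k)$ together with the time bound $t(n) := 2^{\lambda_0 \cdot 2^n}$, which is constructive and satisfies $n \leq t(n) \leq 2^{\lambda \cdot 2^n}$. Invoking the assumed hierarchy with this $t$ yields a language $L \in \mathsf{BPTIME}[T(t(n)^k)]$ such that $L \notin \mathsf{i.o.BPTIME}[t(n)]/\log t(n)$. Since $\lambda_0 k \leq 1$, we have $T(t(n)^k) = T(2^{\lambda_0 k \cdot 2^n}) \leq T(2^{2^n})$.

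The algorithm $A$ is then: on input $m$, set $n := \lceil \log m \rceil$, amplify the success probability of the decision procedure for $L$ to $1 - 1/(3 \cdot 2^n)$, run it on every $x \in \{0,1\}^n$, and output the first $m$ bits of the resulting $2^n$-bit string $\mathsf{string}(L^{=n})$ as $w_m$. By a union bound, with probability at least $2/3$ this computes the correct truth table, hence a fixed canonical output. The total running time is $O(2^n \cdot \mathrm{poly}(n) \cdot T(t(n)^k)) \leq T(2^m)^{O(1)}$, which is the desired exponential-time bound (up to the usual polynomial slack absorbed into $T$).

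It remains to argue that $\rKt(w_m) \geq \varepsilon m$ for some fixed $\varepsilon > 0$ and all sufficiently large $m$. Take $\varepsilon := \lambda_0/2$ and suppose for contradiction that $\rKt(w_m) < \varepsilon m$ for infinitely many $m$. Then for each such $m$ there is a triple $(M_m, a_m, \tau_m)$ with $|M_m| + |a_m| + \lceil\log \tau_m\rceil < \varepsilon m$ such that $M_m(a_m)$ halts within $\tau_m$ steps and outputs $w_m$ with probability $\geq 2/3$. This yields a $\mathsf{BPTIME}[t(n)]/\log t(n)$ procedure for $L$ at the associated infinitely many input lengths $n = \lceil \log m \rceil$: use the pair $(M_m, a_m)$ (of length $< \varepsilon m \leq \lambda_0 \cdot 2^n = \log t(n)$) as advice, run $M_m$ on $a_m$ for $\tau_m \leq 2^{\varepsilon m} \leq 2^{\lambda_0 \cdot 2^n} = t(n)$ steps to recover $w_m$ with high probability, then look up the bit of $w_m$ indexed by the input $x \in \{0,1\}^n$. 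This places $L \in \mathsf{i.o.BPTIME}[t(n)]/\log t(n)$, contradicting the hierarchy.

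The main obstacle is the delicate parameter balancing: the exponent $\lambda_0$ in the choice of $t(n)$ must simultaneously be small enough that $T(t(n)^k) \leq T(2^m)$ (so the construction runs within the target time bound) and large enough that $\log t(n) = \lambda_0 \cdot 2^n$ can absorb both the description and the running-time slack of any purported short $\rKt$ description of $w_m$. Choosing $\lambda_0 = \min(\lambda, 1/k)$ and $\varepsilon = \lambda_0/2$ resolves both constraints simultaneously, and the polynomial overhead $O(2^n \cdot \mathrm{poly}(n))$ in the construction is absorbed into $T(2^m)^{O(1)}$, matching the exponential-time regime of Hypothesis \ref{h:pseudo_constr}.
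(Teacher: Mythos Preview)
Your approach is the same as the paper's, but there is a concrete gap caused by your choice $n := \lceil \log m \rceil$. With the ceiling, $2^{n-1} < m \leq 2^n$, so your $w_m$ is in general a \emph{strict prefix} of $\mathsf{string}(L^{=n})$. In the contrapositive step you write ``look up the bit of $w_m$ indexed by the input $x \in \{0,1\}^n$'', but when $m < 2^n$ the last $2^n - m$ inputs of length $n$ have no entry in $w_m$, so a short $\rKt$ description of $w_m$ does \emph{not} give a $\mathsf{BPTIME}[t(n)]/\log t(n)$ algorithm deciding $L$ on all of $\{0,1\}^n$. Since the infinitely many ``bad'' $m$ need not include any powers of two, you cannot conclude $L \in \mathsf{i.o.BPTIME}[t(n)]/\log t(n)$, and the contradiction does not go through. (The ceiling also makes your running-time claim shaky: $2^n$ can be as large as $2m$, so $T(t(n)^k) \leq T(2^{2m})$, which need not be $T(2^m)^{O(1)}$ for arbitrary monotone $T$.)

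The fix is exactly what the paper does: take $n := \lfloor \log m \rfloor$, so that $2^n \leq m$, and output $w_m := \mathsf{string}(L^{=n}) \circ 0^{m - 2^n}$. Now the \emph{entire} truth table at length $n$ is embedded in $w_m$, so a short $\rKt$ description of $w_m$ yields one for $\mathsf{string}(L^{=n})$ and hence an $\mathsf{i.o.BPTIME}[t(n)]/\log t(n)$ algorithm for $L$, giving the desired contradiction. With the floor you also get $2^n \leq m$, which cleanly handles the running-time bound. Apart from this floor/ceiling issue, your parameter balancing (the choice of $\lambda_0$ and $\varepsilon$) is essentially the same as the paper's.
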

	\begin{proof}
		Given $m$, we show how to pseudodeterministically output an $m$-bit string with $\rKt$ complexity  $\Omega(m)$. Let $n = \lfloor\log (m)\rfloor$ and $t= t(n) = 2^{\lambda\cdot 2^{n}/c} \leq 2^{\lambda\cdot m/c}$, where $c>0$ is some sufficiently large constant. Then we output the string $y$, where
		\[
		y \eqdef \mathsf{string}(L^{=n})\circ 0 ^{m-2^n}.
		\] 
		It is clear that $y$ can be output with high probability in time $2^n\cdot T\!\left(t(n)^k\right) \cdot \poly(n) \leq T\!\left(2^{m}\right)$, where the $\poly(n)$ factor accounts for error reduction and we use the fact that $T(\ell)\geq \ell$ and $T$ is monotone.
		
		Next, we show that $\rKt(y)=\Omega(m)$. It suffices to show that $\rKt(\mathsf{string}(L^{=n}))=\Omega(m)$. For the sake of contradiction, suppose $\rKt(\mathsf{string}(L^{=n}))=o(m)$. Then there is some advice string $\alpha$ of $o(m)=o(\log t)$ bits such that the universal probabilistic Turing machine takes $\alpha$ as input, runs in time $2^{o(m)}=t^{o(1)}$ and outputs $\mathsf{string}(L^{=n})$. This contradicts our assumption that $L \notin \mathsf{i.o.BPTIME}[t]/\log (t)$.
	\end{proof}

		\subsection{Hierarchies from weaker pseudodeterministic explicit constructions}\label{s:hier_from_expl_constr}
	
	In this section, we consider a  variant of Hypothesis \ref{h:pseudo_constr} and how it relates to existing results and techniques. 
	
	\begin{definition}[$R_{n,d}$]\label{d:rKt-problem}
		For an integer $d>0$, we define $R_{n,d}$ to be the search problem of given $1^n$ outputting a string $y$ of $\lceil d\cdot \log(n) \rceil$ bits such that $\rKt(y)\geq \lceil d\cdot \log(n)\rceil/2$.
	\end{definition}
	
	As opposed to the presentation in Section \ref{s:rKt_ae}, here we consider weak pseudo-deterministic algorithms for solving $R_{n,d}$ that might not succeed on every input length. In a bit more detail, by a pseudodeterministic algorithm for $R_{n,d}$ that succeeds infinitely often, we mean that the algorithm maintains a pseudo-deterministic behaviour on every input string, but is only guaranteed to output a string of large $\rKt$ complexity for infinitely many input lengths.
	
	\begin{theorem}[Pseudodeterministic constructions for {$R_{n,d}$} yield probabilistic time hierarchies]\label{t:pd-rKt-to-hierarchy}
		For every constant $k\geq 1$ there is a constant $d\geq 1$ for which the following holds. If there is a pseudodeterministic algorithm for $R_{n,d}$ that runs in time $T(n^d)$ and succeeds for infinitely many values of $n$, then there is a language $L \in \mathsf{BPTIME}[T(n^d)]$ such that $L \notin \mathsf{BPTIME}[n^k]/(k\cdot\log(n))$. 
	\end{theorem}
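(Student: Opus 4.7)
The plan is to mirror the construction in Theorem~\ref{t:hyp_vs_hier}, but adapted to the much shorter outputs of $R_{n,d}$ (length $\lceil d\log n\rceil$ instead of $m$) and to the weaker ``infinitely often'' success guarantee on the given algorithm. First I would fix the constant $d$ depending on $k$ and on the universal constant $C'$ from Fact~\ref{f:rKt_ub}: choosing $d := 2C'(2k+1)+3$ will make the quantitative step below work.

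Let $A$ be the hypothesised pseudodeterministic algorithm for $R_{n,d}$ running in time $T(n^d)$, and for each $n$ let $y_n\in\{0,1\}^{\lceil d\log n\rceil}$ denote the canonical output of $A(1^n)$. Note that $y_n$ is well-defined on \emph{every} $n$, because $A$ is pseudodeterministic everywhere; only the guarantee $\rKt(y_n)\geq \lceil d\log n\rceil/2$ is restricted to infinitely many ``good'' $n$. I would then define the hard language $L$ so that its truth table at length $n$ begins with $y_n$: identifying $x\in\{0,1\}^n$ with its lexicographic rank $i\in\{1,\ldots,2^n\}$, set $L(x)=y_n(i)$ for $i\leq\lceil d\log n\rceil$ and $L(x)=0$ otherwise. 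After standard error reduction on $A$, this yields a probabilistic machine that on every input produces a canonical value with probability $\geq 2/3$ in time $T(n^d)\cdot\mathsf{poly}(n)$, placing $L\in\mathsf{BPTIME}[T(n^d)]$. The fact that $A$ is pseudodeterministic on every input (rather than only on the good lengths) is essential here to ensure that the resulting machine satisfies the bounded-error promise on every $x$.

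For the lower bound, suppose for contradiction that $L\in\mathsf{BPTIME}[n^k]/(k\log n)$. Apply Fact~\ref{f:rKt_ub} with $a(n)=n^k$, $b(n)=k\log n$, and $\ell=\lceil d\log n\rceil$, noting that $y_n$ is exactly the prefix $v_{[\ell]}$ of the characteristic string of $L^{=n}$. This gives, for every sufficiently large $n$,
\[
\rKt(y_n)\;\leq\;C'\bigl(\log\ell+k\log n+k\log n+\log n\bigr)+O(1)\;\leq\;C'(2k+1)\log n+O(\log\log n).
\]
With our choice of $d$, the right-hand side is strictly less than $\lceil d\log n\rceil/2$ for all large $n$. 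However, by assumption $A$ succeeds on infinitely many $n$, in which case $\rKt(y_n)\geq\lceil d\log n\rceil/2$, a contradiction.

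The main obstacle, such as it is, is the bookkeeping needed to reconcile two quantifiers: $A$ is pseudodeterministic on \emph{every} $n$ (needed to place $L$ honestly in $\mathsf{BPTIME}[T(n^d)]$), while the $\rKt$ lower bound on $y_n$ holds only for \emph{infinitely many} $n$ (which is all that is required to refute the uniform upper bound on $\rKt(y_n)$ coming from $L\in\mathsf{BPTIME}[n^k]/(k\log n)$). Beyond selecting $d$ in terms of $k$ and $C'$, no new technical ingredient beyond Fact~\ref{f:rKt_ub} and the encoding trick of Theorem~\ref{t:hyp_vs_hier} is needed.
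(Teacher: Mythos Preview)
Your proposal is correct and follows precisely the approach the paper intends: the paper's own proof simply states that the argument is analogous to Theorem~\ref{t:hyp_vs_hier} and omits the details, and your write-up is exactly that analogy carried out, using Fact~\ref{f:rKt_ub} with the embedding of the canonical output $y_n$ as a prefix of the truth table. Your handling of the quantifier interplay (pseudodeterminism on every $n$ to put $L$ in $\mathsf{BPTIME}[T(n^d)]$, success on infinitely many $n$ to contradict the almost-everywhere $\rKt$ upper bound) is the right observation, and your choice of $d$ in terms of $k$ and $C'$ is adequate.
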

	\begin{proof}	
	The argument is analogous to the proof of Theorem \ref{t:hyp_vs_hier}, and we omit the details.
		\end{proof}

  Note that a trivial (pseudo)deterministic algorithm for $R_{n,d}$ would run in time roughly $2^{n^{d/2}}$, since strings of $\rKt$ complexity $d \cdot \log(n)/2$ refer to probabilistic algorithms running in time $n^{d/2}$, and estimating their acceptance probability in a trivial way would take time of order $2^{n^{d/2}}$. Next, we adapt existing techniques to obtain an unconditional sub-exponential time algorithm for this explicit construction problem. 
	
	\begin{theorem}[Sub-exponential time pseudodeterministic construction for {$R_{n,d}$}]\label{t:uncondition-pd-rkt}
		For every constant $\varepsilon>0$ and positive integer $d$, there is a pseudodeterministic algorithm for $R_{n,d}$ that runs in time $2^{n^{\varepsilon}}$ and succeeds for infinitely many values of $n$.
	\end{theorem}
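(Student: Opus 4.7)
The plan is to reduce $R_{n,d}$ to polynomially many queries to the infinitely-often average-case pseudodeterministic $\mathsf{CAPP}$ algorithm granted by Theorem~\ref{t:uncondition-pd-CAPP}, and then to output the lexicographically smallest string that those queries certify to have large $\rKt$ complexity.

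First I would let $m = \lceil d \log n \rceil$ and enumerate every triple $(M, a, t)$ with $|M| + |a| + \lceil \log t \rceil < m/2$; there are at most $O(2^{m/2}) = O(n^{d/2})$ such triples, each with $t \leq 2^{m/2}$. For every such triple and every candidate output $y \in \{0,1\}^m$, I build a Boolean circuit $C_{M,a,t,y}(r)$ of size $\widetilde{O}(n^{d/2})$ that simulates $M$ on input $a$ with random tape $r$ for $t$ steps and outputs $1$ iff the result equals $y$. Let $Q$ be the set of all such circuits, so $|Q| = n^{O(d)}$, and let $\mathcal{D}_n$ be the polynomial-time samplable uniform distribution over $Q$.

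Second, I would invoke Theorem~\ref{t:uncondition-pd-CAPP} with parameter $\varepsilon'$ chosen small enough that each CAPP call, applied to circuits of size $\widetilde{O}(n^{d/2})$, runs in time $2^{n^{\varepsilon/2}}$; and with correctness parameter $c$ chosen so large that $|Q|/n^{c} < 1/2$. On the infinitely many $n$ where Theorem~\ref{t:uncondition-pd-CAPP} kicks in, the $(1-1/n^{c})$-average-case correctness over $\mathcal{D}_n$ forces the canonical pseudodeterministic estimate of the CAPP algorithm to be $1/10$-accurate on \emph{every} $C \in Q$. Standard amplification ($O(\log |Q|)$ independent repetitions followed by a majority vote) boosts pseudodeterminism enough that, with probability at least $2/3$ over the internal randomness, my algorithm reads off the canonical estimate on all of $Q$ simultaneously, keeping the total runtime at $|Q| \cdot 2^{n^{\varepsilon/2}} \cdot O(\log n) \leq 2^{n^\varepsilon}$.

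Third, with these estimates in hand, I mark $y \in \{0,1\}^m$ as \emph{describable} iff some triple $(M,a,t)$ yields estimated acceptance probability $\geq 1/2$ for $C_{M,a,t,y}$. Every $y$ with $\rKt(y) < m/2$ has some triple with true acceptance probability $\geq 2/3$, hence estimate $\geq 2/3 - 1/10 > 1/2$, so it is marked. Conversely, for each fixed $(M,a,t)$ at most two values of $y$ can be marked (the acceptance probabilities over $y$ sum to $1$), so the marked set has size at most $O(2^{m/2}) < 2^m$. Output the lex-smallest unmarked string; since it depends only on the canonical estimates it is pseudodeterministic, and on infinitely many $n$ it satisfies $\rKt \geq m/2$.

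The main obstacle is bridging the ``infinitely often on a samplable distribution'' guarantee of Theorem~\ref{t:uncondition-pd-CAPP} with our need for \emph{simultaneous} correctness on all $|Q|$ deterministically-chosen queries. This is handled by choosing OS17's correctness exponent $c$ larger than the polynomial blow-up of $|Q|$, so that the $1/n^{c}$ average failure rate is smaller than $1/|Q|$ and the average-case statement collapses to a worst-case statement on the support of $\mathcal{D}_n$. A secondary subtlety is threshold sensitivity when the true acceptance probability lies near $1/2$, and this is absorbed by the $1/10$-slack of CAPP together with the observation that at most two outputs per triple $(M,a,t)$ can sit near the threshold, keeping the marked set strictly smaller than $2^m$.
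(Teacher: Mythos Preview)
Your proof is correct and follows the same overall strategy as the paper: reduce the construction problem to polynomially many $\mathsf{CAPP}$ queries, place the uniform distribution over that polynomial-size query set, and exploit the $(1-1/n^{c})$ average-case guarantee of Theorem~\ref{t:uncondition-pd-CAPP} to force worst-case correctness on the entire support for infinitely many $n$.

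The one substantive difference is in how you test whether a candidate $y$ has small $\rKt$. The paper invokes the $\mathsf{Gap}$-$\mathsf{MrKtP}$ promise algorithm from \cite{DBLP:conf/icalp/Oliveira19} as a black box, building a single circuit $C_n^a$ per candidate string $a$ and thresholding its estimated acceptance probability; this keeps the query set of size $2^m \leq 2n^d$. You instead open up the definition of $\rKt$ and enumerate all description triples $(M,a,t)$ of cost $<m/2$, building one circuit per (triple, target) pair and using a counting argument (at most two $y$ per triple can have estimate $\geq 1/2$) to bound the ``describable'' set. Your route is more self-contained since it avoids citing the gap tester, at the price of a larger (but still $n^{O(d)}$) query set and a slightly more delicate threshold analysis. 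Both approaches yield the same parameters.

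One cosmetic point: you should specify a default output (e.g.\ $0^m$) when every $y$ is marked, so that the procedure is total on all $n$; pseudodeterminism on bad input lengths is then immediate since the marked set depends only on the canonical estimates.
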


	\begin{proof}
		We first consider a probabilistic algorithm $B$ such that, on input $a\in\bool^{m}$ where $m=\lceil d \cdot \log(n) \rceil$, $B$ rejects with probability $\geq 2/3$ if $\rKt(a)<m/2$ and accepts with probability $\geq 2/3$ if $\rKt(a)\geq 3m/4$. It was shown in \cite{DBLP:conf/icalp/Oliveira19} that $B$ can be made to run in time $2^{O(m)}=n^{O(d)}$. For $a\in\bool^{\lceil d\cdot\log(n) \rceil}$, let $C_n^{a}$ be the Boolean circuit such that on input $y\in\bool^{n^{O(d)}}$, $C_n^{a}(y)$ is $1$ if and only if $B$ accepts $a$ using $y$ as its randomness. Note that each $C_n^{a}$ has size at most $s=n^{O(d)}$. Also, let $A$ be the (i.o.-)pseudodeterministic algorithm for $\mathsf{CAPP}_{n,s}$ in \Cref{t:uncondition-pd-CAPP} that runs in time $2^{n^{\varepsilon/2}}$ and succeeds with probability at least $1-1/(3n^d)$ over any polynomial-time samplable distribution. We assume without loss of generality, using amplification if necessary, that $A$ outputs the canonical answer with probability at least $1-1/(100 n^{d})$.
		\begin{algorithm}
			\caption{Infinitely often pseudodeterministic construction of strings of large $\rKt$ complexity}
			\begin{algorithmic}[1]
				
				\Procedure{$D$}{$1^n, d$} 
				
				\For{$a\in \bool^{\lceil d\cdot\log(n) \rceil}$}	
				\State $\mu_{a} = A(1^n, C_n^{a})$
				
				\If{$\mu_{a} > 1/3 + 1/10$}
				\State output $a$
				\EndIf
				
				\EndFor
				
				\State Output ``Fail''
				\EndProcedure
			\end{algorithmic}
		\end{algorithm}
		
		We now argue the correctness of the above algorithm. 
		Note that for every $n$, by a union bound over $a\in\bool^{\lceil d\cdot\log(n) \rceil}$, $A(1^n, C_n^{a})$ outputs the canonical $\mu_{a}$ for every $a$ with high probability, in which case the final output of the algorithms is fixed, so $A$ is pseudodeterministic. 
		
		Now consider the polynomial-time samplable distribution $\mathcal{D}_{n, s}$ supported over $\mathsf{CAPP}_{n,s}$ obtained by first sampling $a \sim \bool^{\lceil d\cdot\log(n) \rceil}$, then outputting the description of the circuit $C_n^{a}$. Note that each $C_n^{a}$ has probability weight at least $1/(2n^d)$. Since $A$ succeeds with probability at least $1-1/(2n^d)$ over $\mathcal{D}_{n, s}$, we conclude that $A$ succeeds on \emph{every} $C_n^{a}$, for infinitely many values of $n$. For any such $n$, $\mu_{a}$ is a good estimate of $\Pr_y[C_n^{a}(y) = 1]$, for every $a\in\bool^{\lceil d\cdot\log(n) \rceil}$. Then by the definition of $C_n^{a}$, an output $a$ of the algorithm cannot have $\rKt$ less than $m/2$ since the algorithm $B$ accepts $a$ with probability less than $1/3$ and $\mu_{a}$ should be less than $1/3+1/10$. Also, note that since we enumerate every $a$ in $\bool^{\lceil d\cdot\log(n) \rceil}$, $B$ must accept at least one $a$, and in this case we have $\mu_{a}\geq 2/3-1/10\geq 1/3+1/10$. (Note that the algorithm may output a string outside of $B$'s YES promise, but such a string will also have $\rKt$ complexity at least $m/2$, and this output is fixed as long as $A$ gives the canonical $\mu_{a}$ for every $a$, which happens with high probability.)
	\end{proof}
	
	As an immediate consequence of Theorems \ref{t:pd-rKt-to-hierarchy} and \ref{t:uncondition-pd-rkt}, we can recover a known hierarchy theorem for probabilistic time, which says that there is a language $L \in \mathsf{BPTIME}\left[2^{n^{\varepsilon}}\right] \setminus \mathsf{BPTIME}\left[n^k\right]$. Furthermore, if \Cref{t:uncondition-pd-rkt} could be improved either with a better running time or with a pseudo-deterministic simulation that works on every large enough input length, new hierarchies results for probabilistic time would follow.

	\subsection{On the pseudo-derandomisation of \texorpdfstring{$\mathsf{unary}$-$\mathsf{BPE}$-$\mathsf{search}$}{unary-BPE-Search}} \label{sec:unary_BPE}
	
	Consider the following hypothesis about the pseudo-derandomisation of $\mathsf{unary}$-$\mathsf{BPE}$-$\mathsf{search}$.  
	
	\begin{hypothesis}[Pseudo-derandomisation of  $\mathsf{unary}$-$\mathsf{BPE}$-$\mathsf{search}$]\label{h:pd-BPE-search}
		For every $\mathsf{unary}$-$\mathsf{BPE}$-$\mathsf{search}$ relation $R$, there is a pseudodeterministic search algorithm for $R$ that runs in exponential time.
		In other words, there is a pair $(A,B)$ of  probabilistic algorithms witnessing that $R \in \mathsf{unary}$-$\mathsf{BPE}$-$\mathsf{search}$, where $A$ and $B$ run in time exponential in $n$, and on every input $x = 1^n$ there is a string $y$ such that $\Pr_A[A(1^n) = y] \geq 2/3$.
	\end{hypothesis}
	
	First, we observe that an \emph{average-case} pseudo-derandomisation of $\mathsf{BPP}$-$\mathsf{search}$ leads to a \emph{worst-case} pseudo-derandomisation of $\mathsf{BPE}$-$\mathsf{search}$.

	\begin{proposition}[Pseudo-derandomisation of $\mathsf{BPP}$-$\mathsf{search}$ on average $\Longrightarrow$ Pseudo-derandomisation of $\mathsf{BPE}$-$\mathsf{search}$]\label{p:pd-BPP-topd-BPE}
		Let $T$ be a constructive time bound.
		Suppose that for every $\mathsf{BPP}$-$\mathsf{search}$ problem $R$ and for every polynomial-time samplable ensemble $\{\mathcal{D}_n\}_{n \geq 1}$, there is a pseudodeterministic algorithm $\mathcal{A}$ for $R$ that runs in time $T(n)$ and succeeds with probability at least $1-1/(3n)$ over inputs from $\mathcal{D}_n$. Then there is a pseudodeterministic search algorithm for each relation in $\mathsf{BPE}$-$\mathsf{search}$ that runs in time $T\!\left(2^n\right)$.
	\end{proposition}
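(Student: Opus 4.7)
My approach is a standard padding reduction, but with the twist that the samplable input distribution is chosen to have so few support points (each of weight $\approx 1/N$) that the average-case success guarantee $1-1/(3N)$ automatically upgrades to a worst-case guarantee on the support.

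First I would fix $R \in \mathsf{BPE}\text{-}\mathsf{search}$ together with its witnessing pair $(A,B)$, both running in time $2^{cn}$ on inputs of length $n$. I would then define the injective self-delimiting padding map
$$\mathsf{pad}\colon \{0,1\}^n \to \{0,1\}^{2^n}, \qquad \mathsf{pad}(x) = x \circ 1 \circ 0^{2^n - n - 1},$$
and the padded relation $R'$ on $\{0,1\}^*$ by $R'_z = R_x$ when $z = \mathsf{pad}(x)$ for some (necessarily unique) $x$, and $R'_z = \{0\}$ for strings $z$ not of this form. Since one can decode $x$ from $\mathsf{pad}(x)$ in $\mathsf{poly}(N)$ time (for $N = |z| = 2^n$) and then run $A$ or $B$ in time $2^{cn} = N^c$, the pair $(A,B)$ lifts to a $\mathsf{BPP}\text{-}\mathsf{search}$ witness pair for $R'$ running in polynomial time in $N$.

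Next I would specify the polynomial-time samplable ensemble $\{\mathcal{D}_N\}_{N\geq 1}$: when $N = 2^n$ is a power of two, $\mathcal{D}_N$ samples $x \in \{0,1\}^n$ uniformly and outputs $\mathsf{pad}(x)$; for other $N$ take any fixed default. By the hypothesis applied to $R'$ and this ensemble, there is a pseudodeterministic algorithm $\mathcal{A}$ for $R'$ running in time $T(N)$ that succeeds with probability at least $1 - 1/(3N)$ over $\mathcal{D}_N$. The key observation is that, for $N = 2^n$, $\mathcal{D}_N$ is supported on exactly $N$ strings, each of weight $1/N$. If $\mathcal{A}$ failed (i.e.\ output an incorrect canonical value) on even a single support point, its failure probability over $\mathcal{D}_N$ would already be at least $1/N > 1/(3N)$, contradicting the hypothesis. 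Hence on every padded string $\mathsf{pad}(x)$, the algorithm $\mathcal{A}$ is a correct pseudodeterministic solver for $R'_{\mathsf{pad}(x)} = R_x$, and the desired worst-case pseudodeterministic algorithm for $R$ is simply $x \mapsto \mathcal{A}(\mathsf{pad}(x))$, which runs in time $T(2^n)$.

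There is no substantive obstacle here; the whole argument is driven by the counting trick in the previous paragraph, which requires matching the padding length to $N = 2^n$ (so that each support element carries enough weight to witness a contradiction with the $1/(3N)$ failure budget). The only minor care needed is to extend $R'$ to a well-defined $\mathsf{BPP}\text{-}\mathsf{search}$ relation on \emph{all} input strings (via the default value on non-padded inputs) so that the hypothesis formally applies, and to recall that pseudodeterminism of $\mathcal{A}$ means a canonical output with probability $\geq 2/3$ on \emph{every} input string, so that the worst-case pseudodeterminism property is inherited by $x \mapsto \mathcal{A}(\mathsf{pad}(x))$ without further work.
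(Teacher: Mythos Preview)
Your proposal is correct and follows essentially the same approach as the paper: pad an $n$-bit instance to length $N=2^n$, apply the average-case hypothesis to the padded relation and the uniform distribution over the $N$ padded strings, and use the counting observation that a failure on any single support point of weight $1/N$ already exceeds the $1/(3N)$ error budget. Your padding $x\circ 1\circ 0^{2^n-n-1}$ differs cosmetically from the paper's $1^{2^m-m}i$, and you handle the non-padded inputs more explicitly (setting $R'_z=\{0\}$ there), but the argument is otherwise identical.
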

	\begin{proof}
		Let $R_0$ be a $\mathsf{BPE}$-$\mathsf{search}$ problem with a search algorithm $A_0$ and a verification algorithm $B_0$. We show how to solve $R_0$ assuming the pseudo-derandomisation of $\mathsf{BPP}$-$\mathsf{search}$. Consider the following search problem $R$. For a pair $(x,y)$ where $x\in\bool^n$ and $y\in\bool^*$, $(x,y)\in R$ if and only if $x$ is of the form $1^{n - \lceil \log n \rceil}i$ for some $i \in \{0,1\}^{\lceil \log n \rceil}$ and $(i,y)\in R_0$. Note that $R$ is a $\mathsf{BPE}$-$\mathsf{search}$ problem: its search algorithm can be defined as $A(x)=A_0(i)$, and its verification algorithm $B$ first checks if $x$ has the correct form then invokes $B_0(i,y)$.
		
		For an integer $n$, let $\mathcal{D}_n$ be the polynomial-time samplable distribution which samples a random  string of length $\lceil \log n \rceil$ and appends it to the string $1^{n-\lceil \log n \rceil}$. Let $C$ be a pseudodeterministic algorithm that runs in time $T(n)$ and solves $R$ with probability at least $1-1/(3n)$ over inputs from the distribution $\mathcal{D}_n$.
		
		To solve the search problem $R_0$ on an given input $i\in\bool^{m}$, we first construct the input of $x=1^{2^{m}-m}i$ and then output $C(x)$. It is easy to see that if $C$ pseudodeterministically solves the problem $R$ on $x$, then the above approach pseudodeterministically solves $R_0$ on $i$ in time $T\left(2^m\right)$. However, we only have that $C$ succeeds with probability at least $1-1/\left(3\cdot 2^m\right)$ over $\mathcal{D}_{2^m}$. But note that  $\mathcal{D}_{2^m}$ is uniform over the set $S=\left\{1^{2^{m}-m}i\right\}_{i\in\bool^m}$, where $|S|=2^m$. This means that $C$ succeeds on every input in $S$, and hence the above approach pseudodeterministically solves $R_0$ on every input.
	\end{proof}
	
	\begin{proposition}\label{p:pd-BPE-search-to-pseudo_constr}
		\Cref{h:pd-BPE-search} $\Longrightarrow$ \Cref{h:pseudo_constr} with $T(\ell) = \mathsf{poly}(\ell)$.
	\end{proposition}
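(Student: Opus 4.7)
The plan is to package the pseudodeterministic construction of high-$\rKt$ strings as a $\mathsf{unary}$-$\mathsf{BPE}$-$\mathsf{search}$ problem and then invoke Hypothesis \ref{h:pd-BPE-search}. Fix any $\varepsilon \in (0,1)$, say $\varepsilon = 1/2$, and consider the relation
\[ R \;=\; \{(1^m, y) \,:\, |y| = m \text{ and } \rKt(y) \geq \varepsilon m\}. \]
I will argue that $R \in \mathsf{unary}$-$\mathsf{BPE}$-$\mathsf{search}$; once this is done, Hypothesis \ref{h:pd-BPE-search} produces a pseudodeterministic algorithm that, given $1^m$, outputs a fixed string $w_m \in R_{1^m}$ in time $2^{O(m)} = \mathsf{poly}(2^m)$, which is exactly Hypothesis \ref{h:pseudo_constr} with $T(\ell) = \mathsf{poly}(\ell)$.

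The witnessing pair $(A,B)$ is natural. The search algorithm $A(1^m)$ outputs a uniformly random $y \in \{0,1\}^m$; since there are fewer than $2^{\varepsilon m + O(1)}$ strings of $\rKt$-complexity below $\varepsilon m$, a counting argument yields $y \in R_{1^m}$ with probability $1 - 2^{-\Omega(m)}$. The verification $B(1^m, y)$ enumerates every tuple $(M, a, t)$ with $|M| + |a| + \lceil \log t \rceil < \varepsilon m$ (so $t \leq 2^{\varepsilon m}$), estimates $\Pr[\bm{M_{\leq t}}(a) = y]$ to additive error $0.1$ using $\mathsf{poly}(m)$ independent simulations together with a Hoeffding bound, and rejects iff some estimate exceeds $1/2$. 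The total running time is $2^{O(\varepsilon m)} \cdot \mathsf{poly}(m) = 2^{O(m)}$, i.e., exponential in the unary input length as required.

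Soundness of $B$ is immediate from the definition of $\rKt$: if $\rKt(y) < \varepsilon m$ then some enumerated tuple truly outputs $y$ with probability $\geq 2/3$, its estimate lies above $1/2$ with high probability, and $B$ rejects. The step that needs the most care, and which I expect to be the main obstacle, is the ``$B$ accepts $A(1^m)$'' clause of Definition \ref{d:search-BPP}, because $B$ can mistakenly reject strings for which some enumerated tuple happens to output $y$ with true probability inside the dead zone $(1/3, 2/3)$. The remedy is a second counting argument: each tuple has at most $3$ outputs of probability exceeding $1/3$, so at most $3 \cdot 2^{\varepsilon m}$ strings in $\{0,1\}^m$ are ``dangerous'' for some enumerated tuple; for $\varepsilon < 1$ this is a $2^{-\Omega(m)}$ fraction of $\{0,1\}^m$, so a random output of $A$ avoids every dead zone with overwhelming probability and is accepted by $B$ with probability $\geq 2/3$. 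This verifies the $\mathsf{unary}$-$\mathsf{BPE}$-$\mathsf{search}$ membership of $R$ and closes the reduction.
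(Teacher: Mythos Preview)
Your proof is correct and follows essentially the same approach as the paper: define the relation $R = \{(1^m,y) : |y|=m,\ \rKt(y)\ge \Theta(m)\}$, take $A$ to output a uniformly random $m$-bit string, and let $B$ be an $\rKt$-estimation procedure. The only real difference is in how the verification is packaged. The paper uses two thresholds ($0.1m$ and $0.2m$) and invokes the $\mathsf{Gap}$-$\mathsf{MrKtP}$ algorithm of \cite{DBLP:conf/icalp/Oliveira19} as a black box, so that $B$ provably accepts every string with $\rKt \ge 0.2m$ and the dead-zone issue never arises. You instead keep a single threshold and dispose of the dead zone by a second counting argument (few strings are hit with probability $>1/3$ by any short tuple). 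Your route is more self-contained, while the paper's is slightly cleaner; both yield the same conclusion. A minor remark: the number of enumerated tuples is $\mathsf{poly}(m)\cdot 2^{\varepsilon m}$ rather than $2^{\varepsilon m}$ (because of the range of $t$), but this polynomial factor is harmless for your counting arguments.
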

	\begin{proof}
		Let $R$ be the following relation:
		\[
		(1^m, y) \in R \iff |y|=m \text{ and } \rKt(y)\geq 0.1 m.
		\]
		To prove the proposition, it suffices to show that the (total) unary relation $R \in  \mathsf{unary}$-$\mathsf{BPE}$-$\mathsf{search}$. Consider the following search algorithm $A$ that, on input $1^m$, outputs a string in $\bool^m$ uniformly at random. By a counting argument, with probability at least $2/3$, the string output by $A$ has $\rKt$ complexity at least $0.2m$, which satisfies the condition of $R$. Let $B$ be a probabilistic algorithm that solves the $\mathsf{Gap}$-$\mathsf{MrKtP}$ problem, i.e., it rejects (in the sense of a bounded-error probabilistic algorithm) strings with $\rKt$ complexity less than $0.1m$ and accepts strings with $\rKt$ complexity at least $0.2m$. It was shown in \cite{DBLP:conf/icalp/Oliveira19} that $B$ can be made to run in time $2^{O(m)}$. Therefore, $B$ is our verification algorithm that rejects the negative instances of $R$ and accepts at least a $2/3$-fraction of $A$'s outputs.
	\end{proof}
	
	We leave open the following question.
	
	\begin{question}
		Is it the case that \Cref{h:pseudo_constr} with $T(\ell) = \mathsf{poly}(\ell)$ implies \Cref{h:pd-BPE-search}?  
	\end{question}
	
	A positive solution would establish the equivalence between strong probabilistic time hierarchies, the explicit construction problem for $\rKt$, and the pseudo-derandomisation of unary $\mathsf{BPE}$-$\mathsf{search}$.

	\section*{Acknowledgements}
	
	We thank Peter Dixon, A. Pavan and N. V. Vinodchandran for bringing their independent unpublished work \cite{DPV21b} to our attention. We are also grateful to Lijie Chen for sharing comments about a preliminary version of the paper that helped us to  improve the presentation.
	
	The first two authors received support from the Royal Society University Research Fellowship URF$\setminus$R1$\setminus$191059.

	\bibliographystyle{alpha}	
	
	\bibliography{main}	
	
	\appendix


	\section{On the pseudo-derandomisation of \texorpdfstring{$\mathsf{CAPP}$}{CAPP} from \texorpdfstring{\citep{DBLP:conf/stoc/OliveiraS17}}{[OS17]}}\label{s:appendix_OS17}
		
		In this section, we verify that the proof of an unconditional (average case, infinitely often, sub-exponential time) pseudo-derandomisation of $\mathsf{CAPP}$ from \citep{DBLP:conf/stoc/OliveiraS17} guarantees a pseudo-deterministic output on \emph{every} input string.
	
	\begin{theorem}[Reminder of \Cref{t:uncondition-pd-CAPP}]\label{t:uncondition-pd-CAPP-app}
		For any constants $\varepsilon>0$ and $c, d\geq 1$, there is a pseudodeterministic algorithm for $\mathsf{CAPP}_{n,n^{d}}$ that runs in time $2^{O(n^{\varepsilon})}$, and for any polynomial-time samplable ensemble of distributions $\mathcal{D}_{n, n^d}$ supported over circuits of size $\leq n^d$, succeeds with probability $1-1/n^c$ over $\mathcal{D}_{n, n^d}$ for infinitely many values of $n$. 
	\end{theorem}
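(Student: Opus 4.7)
The plan is to revisit the OS17 construction and confirm that on every input string $x = (1^n, C)$ the algorithm outputs a single canonical value $\mu_x$ with probability at least $2/3$; the delicate word is \emph{every}, as opposed to only on typical $C$ drawn from $\mathcal{D}_{n,n^d}$. Correctness of $\mu_x$ as a $1/10$-approximation to $\Pr_y[C(y)=1]$ is already established in OS17 for typical $C$ and infinitely many $n$; we only need to certify the canonical-output property on arbitrary inputs.

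First I would recall the overall shape of the OS17 procedure. On input $(1^n, C)$, the algorithm enumerates a short list of candidate advice strings $\alpha$ of length at most $n^{\varepsilon}$, each of which is meant either as the truth table of a hard function to feed into a Nisan--Wigderson-style generator or as a description of a small circuit directly approximating acceptance probabilities. For each $\alpha$ it runs a subroutine $\mathsf{Test}(\alpha)$ deciding whether $\alpha$ passes; it fixes the first $\alpha^{\star}$ that passes, reconstructs an associated PRG $G_{\alpha^{\star}}\colon\{0,1\}^{O(n^{\varepsilon})}\to\{0,1\}^{n^d}$, and outputs the empirical acceptance rate of $C$ on $G_{\alpha^{\star}}$'s range. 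The OS17 analysis already supplies, for infinitely many $n$, some candidate that passes and yields a PRG fooling typical $C\sim\mathcal{D}_{n,n^d}$.

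Next I would check pseudo-determinism on every input. The randomness consumed by $\mathsf{Test}(\alpha)$ in OS17 comes from (i) same-length instance-checking against a fixed $\mathsf{PSPACE}$-complete language with the structural properties of Lemma \ref{l:splPSPACE}, and (ii) evaluation on fresh samples drawn from the sampler for $\mathcal{D}_{n,n^d}$ on input $1^n$; neither references the input circuit $C$. Standard error reduction drives the failure probability of each $\mathsf{Test}(\alpha)$ below $2^{-2n^{\varepsilon}}$, and a union bound over the at most $2^{O(n^{\varepsilon})}$ candidates pins down $\alpha^{\star}$ with probability $\geq 2/3$. Once $\alpha^{\star}$ is fixed, reconstructing $G_{\alpha^{\star}}$ and counting $|\{s : C(G_{\alpha^{\star}}(s)) = 1\}|$ is purely deterministic, so the final $\mu_x$ is canonical on every $x$, and the total running time is dominated by the enumeration of $2^{O(n^{\varepsilon})}$ seeds.

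The main bookkeeping subtlety is cleanly separating input-dependent from input-independent randomness and ensuring the latter is fully committed before $C$ enters the computation. Once this separation is made explicit, no new mathematical content beyond the OS17 argument is required; the pseudo-determinism guarantee on arbitrary inputs follows immediately from the fact that the choice of $\alpha^{\star}$, and hence of $G_{\alpha^{\star}}$, depends only on $1^n$ and on freshly sampled random bits, never on the circuit $C$ being analysed.
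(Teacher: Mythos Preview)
Your description of the OS17 procedure does not match the actual construction, and the enumeration-and-testing algorithm you sketch has real gaps. The paper's argument is a non-constructive case split on whether $\mathsf{PSPACE} \subseteq \mathsf{BPP}$. In the first case, exact counting of accepting inputs of $C$ lies in $\mathsf{BPP}$, yielding a polynomial-time randomized algorithm that is pseudodeterministic on every input simply because it is \emph{correct} on every input (the canonical output is the exact acceptance probability). In the second case, the algorithm just outputs $\mu = \Pr_{z}[C(G_\ell(z))=1]$ where $G_\ell$ is the fixed generator of Theorem~\ref{t:io-PRG} with $\ell = \lceil n^{\varepsilon}\rceil$. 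This computation is entirely \emph{deterministic}, so pseudo-determinism on every input is automatic; correctness on average for infinitely many $n$ is the contrapositive of Theorem~\ref{t:io-PRG}. There is no enumeration of candidate advice strings, no $\mathsf{Test}$ subroutine, and no sampling from $\mathcal{D}_{n,n^d}$ inside the algorithm.

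Your proposed $\mathsf{Test}(\alpha)$ is problematic on two counts. First, it draws samples from $\mathcal{D}_{n,n^d}$, making the algorithm depend on the distribution; this contradicts the quantifier order in the theorem, which fixes one algorithm before quantifying over all samplable $\mathcal{D}$. Second, you never specify how $\mathsf{Test}$ certifies that a candidate $\alpha$ ``passes'' without already knowing the true acceptance probabilities of sampled circuits---which is precisely the problem being solved. The point you are missing is that the case-2 algorithm is deterministic, so the pseudo-determinism-on-every-input claim (the only thing the appendix is verifying) requires no argument at all; there is no input-independent randomness to commit because there is no randomness.
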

	\begin{proof}[Sketch of the proof]
		We follow the analysis from \citep{DBLP:conf/stoc/OliveiraS17} and consider two cases.
		
		Suppose that $\mathsf{PSPACE}\subseteq \mathsf{BPP}$. First consider the problem of given a circuit $C$ of length $n^{d}$ and $j\in [n^d]$, output the $j$-bit of $\beta$, the number of satisfying assignments of $C$. Note that this problem can be computed using $n^{O(d)}$ space by enumerating all possible inputs for $C$. By our assumption, this problem can also be solved in randomized time $n^{O(d)}$. Therefore, we have a $n^{O(d)}$ time randomized algorithm to compute exactly the acceptance probability of $C$, and we are done.

		Now assume  $\mathsf{PSPACE}\not\subseteq \mathsf{BPP}$. Suppose that we are given a circuit $C$ with $|C| = n^{d}$. Consider \Cref{t:io-PRG} with $b=d/\varepsilon$ and the generator $G_{\ell}$, where $\ell= \lceil n^{\varepsilon}\rceil$. We then output
		\[
		\mu\eqdef\Prob_{z\in \bool^{\ell}}[C(G_{\ell}(z))=1].
		\]
		It is easy to see that the running time is $2^{O(n^{\varepsilon})}\cdot n^{O(d)}=2^{O(n^{\varepsilon})}$.
		
		Arguing in a slightly informal way for simplicity (with respect to uniformity and samplability), let $n$ be such that for $\ell= \lceil n^{ \varepsilon}\rceil$, $G_\ell$ is a generator whose output cannot be distinguished from random on average by polynomial-time samplable circuits, assuming $\mathsf{PSPACE}\not\subseteq \mathsf{BPP}$ (since the function mapping $n$ to $\lceil n^{ \varepsilon}\rceil$ is surjective, this happens infinitely often). Then, for any distribution $\mathcal{D}_{n, n^d}$ samplable in time $\poly(n^d)$ and any constant $c$, with probability at least $1-1/n^c$ over $C \sim \mathcal{D}_{n^d}$ we have
		\begin{equation}\label{e:reduction-fail}
			\left| \Prob_{y\in\bool^{n^{d}}}[C(y)=1] -\mu \right| \leq 1/10,
		\end{equation}
		where this inequality relies on the security of $G_{\ell}$.
		
		Note that in both cases the resulting algorithm is pseudo-deterministic on \emph{every} input string. This is because in the first case (i.e.~when $\mathsf{PSPACE} \subseteq \mathsf{BPP}$) the algorithm is correct and pseudo-deterministic on every input string. In the other case, while the algorithm might fail on some inputs, it is a \emph{deterministic} algorithm (since the PRG from \Cref{t:io-PRG} is computed by a deterministic algorithm). 
	\end{proof}

	
	\section{Pseudo-derandomisations for \texorpdfstring{$\mathsf{BPP}$-$\mathsf{search}$}{BPP-search} and their consequences} \label{s:BPP_search_appendix}
	
	In this section, we establish connections between weak pseudo-derandomisations of $\mathsf{BPP}$-$\mathsf{search}$ and structural results for probabilistic time.\footnote{We note that \citep{DBLP:journals/corr/Holden17} claims an unconditional pseudo-derandomisation of $\mathsf{BPP}$-$\mathsf{search}$. However, their argument seems to require a stronger condition on the verifier machine $B$, namely, that on every input pair $(x,y)$ the probability that $B$ accepts $(x,y)$ is bounded away from $1/2$. This appears to be necessary in the pseudo-derandomisation argument from \citep{DBLP:journals/corr/Holden17} to maintain a pseudodeterministic output when computing the first solution accepted by $B$.}
	
	First, we obtain hierarchies from weak pseudo-derandomisations of $\mathsf{BPP}$-$\mathsf{search}$.
	
		\begin{proposition}[i.o.-pseudo-derandomisation of $\mathsf{BPP}$-$\mathsf{search}$ over samplable distributions $\Longrightarrow$ probabilistic time hierarchy theorem]\label{p:io-pd-to-hierarchy}
		Let $T$ be a time-constructible function. Suppose that for every $\mathsf{BPP}$-$\mathsf{search}$ problem $R$ and for every polynomial-time samplable ensemble $\mathcal{D}_n$, there is a pseudodeterministic search algorithm $A$ for $R$ that runs in time $T(n)$, and for infinitely many input lengths $n$, succeeds with probability at least $1-1/(3n)$ over inputs from $\mathcal{D}_n$.\footnote{In other words, for every input $x$ there is a canonical output $y$ for $x$ such that $\Pr_A[A(x) = y] \geq 2/3$, and on infinitely many values of $n$, except with probability at most $1/(3n)$ over $x \sim \mathcal{D}_n$, we have that $y \in R_x$ and $\Pr_B[B(x,y) = 1] \geq 2/3$, where $B$ is the verification algorithm associated with $A$.}
		
		Then, for every $k \geq 1$ there is a language $L \in \mathsf{BPTIME}[T(n)] \setminus \mathsf{BPTIME}\left[n^k\right]$. Moreover, if the pseudodeterministic simulation succeeds with probability at least $1 - 1/(3n)$ on every large enough input length $n$, then $L \in \mathsf{BPTIME}[T(n)] \setminus \mathsf{i.o.BPTIME}\!\left[n^k\right]$.
	\end{proposition}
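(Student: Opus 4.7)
The plan is to mimic the diagonalization strategy in the proof of Theorem \ref{t:pd-CAPP-to-hierarchy}, with $\mathsf{CAPP}$ replaced by a tailored $\mathsf{BPP}$-$\mathsf{search}$ relation that directly encodes the accept/reject behaviour of machines in an enumeration. Fix an enumeration $B_1, B_2, \ldots$ of clocked probabilistic machines running in time $n^k$, and define a single relation $R$ as follows. For $x \in \{0,1\}^n$ of the form $x = 1^{n - \lceil \log n \rceil} i$ with $i \in \{0,1\}^{\lceil \log n \rceil}$, let $p_{i,x} \eqdef \Pr[B_i(x) = 1]$ and set
\[
R_x = \{1\} \text{ if } p_{i,x} \geq 2/3, \quad R_x = \{0\} \text{ if } p_{i,x} \leq 1/3, \quad R_x = \{0,1\} \text{ otherwise};
\]
on any $x$ not of the above form, put $R_x = \{0,1\}$. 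A search algorithm for $R$ simulates $B_i$ on $x$ polynomially many times and outputs the majority bit, while a verifier does the same independently and accepts when the purported bit agrees with the empirical majority. The $1/3$--$2/3$ gap makes both conditions of Definition \ref{d:search-BPP} straightforward, placing $R$ in $\mathsf{BPP}$-$\mathsf{search}$.

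Next, apply the hypothesis to this $R$ with the polynomial-time samplable ensemble $\mathcal{D}_n$ that draws $i \in \{0,1\}^{\lceil \log n \rceil}$ uniformly and outputs $1^{n - \lceil \log n \rceil} i$, obtaining a pseudodeterministic algorithm $A$ for $R$ running in time $T(n)$ and succeeding with probability at least $1 - 1/(3n)$ over $\mathcal{D}_n$ for infinitely many $n$. Then define
\[
L \eqdef \{\, x \in \{0,1\}^n \,:\, x = 1^{n - \lceil \log n \rceil} i \text{ for some } i \in \{0,1\}^{\lceil \log n \rceil},\ A(x) \text{ canonically outputs } 0 \,\}.
\]
Pseudodeterminism of $A$, together with standard error reduction, immediately gives $L \in \mathsf{BPTIME}[T(n)]$.

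For the lower bound, assume towards a contradiction that some $B_{i_0}$ decides $L$ in $\mathsf{BPTIME}[n^k]$. On any $n$ where $A$ achieves the $1 - 1/(3n)$ average-case guarantee, the fact that each atom of $\mathcal{D}_n$ carries mass at least $1/(2n)$ forces $A$ to succeed on \emph{every} input in the support of $\mathcal{D}_n$, in particular on $x_0 \eqdef 1^{n - \lceil \log n \rceil} i_0$. Since $B_{i_0}$ decides $L$ with bounded error, $p_{i_0, x_0} \geq 2/3$ when $L(x_0) = 1$ and $p_{i_0, x_0} \leq 1/3$ otherwise, so $R_{x_0} = \{L(x_0)\}$ is a singleton; pseudodeterminism then forces the canonical output of $A(x_0)$ to equal $L(x_0)$, contradicting the defining equivalence $x_0 \in L \iff A(x_0) \text{ canonically outputs } 0$. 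The infinitely-often hypothesis supplies at least one such $n$ (once $n$ is sufficiently large relative to $|i_0|$), giving $L \notin \mathsf{BPTIME}[n^k]$. Under the almost-everywhere hypothesis the same contradiction applies at every sufficiently large $n$, so no $B_{i_0}$ decides $L$ correctly on all inputs of any large length, yielding $L \notin \mathsf{i.o.BPTIME}[n^k]$.

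The main subtlety is the design of $R$: the thresholds and the choice to let $R_x = \{0,1\}$ in the ambiguous regime must be arranged so that $R$ is total, the verifier reliably rejects off-$R$ pairs, and the canonical bit $A(x_0)$ is forced to coincide with $L(x_0)$ whenever any fast machine decides $L$ in the bounded-error sense. Once these pieces are in place, the rest of the argument is a direct transcription of the $\mathsf{CAPP}$-based diagonalization, with the binary canonical output of $A$ replacing the canonical numerical estimate $\mu$ used in Theorem \ref{t:pd-CAPP-to-hierarchy}.
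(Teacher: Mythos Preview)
Your overall strategy is the same as the paper's: encode the acceptance behaviour of an enumeration of $n^k$-time machines into a single $\mathsf{BPP}$-$\mathsf{search}$ relation, apply the hypothesis with the uniform-over-$\{1^{n-\lceil\log n\rceil}i\}$ distribution, define $L$ by flipping the canonical output, and use the ``small support forces pointwise correctness'' observation for the lower bound. That part is fine and matches the paper exactly.

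The gap is in the claim that your relation $R$ lies in $\mathsf{BPP}$-$\mathsf{search}$. Your verifier ``accepts when the purported bit agrees with the empirical majority''; this fails the second condition of Definition~\ref{d:search-BPP} on inputs $x$ with $p_{i,x}$ close to $1/2$. Concretely, if $p_{i,x}=1/2$ then for either bit $b$ the verifier's independent majority equals $b$ with probability about $1/2$, so $\Pr_s[B(x,b;s)=1]\approx 1/2<2/3$ for both $b$, and hence there is \emph{no} random string $r$ of the search algorithm for which $B$ accepts $(x,A(x;r))$ with probability at least $2/3$. Thus the required condition $\Pr_r[\,\Pr_s[B(x,A(x;r);s)=1]\ge 2/3\,]\ge 1/2$ fails. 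The sentence ``the $1/3$--$2/3$ gap makes both conditions of Definition~\ref{d:search-BPP} straightforward'' is precisely where the argument breaks: it is the \emph{inside} of that gap, not the outside, that causes trouble for the verifier.

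The paper avoids this by taking the solution set to be numerical rather than binary: $(x,y)\in R$ iff $|y-\mu|\le 0.1$, where $\mu$ is the acceptance probability of $B_i$ on $x$. The search algorithm outputs an estimate within $0.03$ of $\mu$, the verifier independently estimates $\mu$ within $0.03$ and accepts iff the two values are within $0.06$; this satisfies both verifier conditions uniformly in $\mu$, with no bad region. The language $L$ is then defined via the threshold ``canonical output $<1/2$'', and the diagonalization proceeds exactly as you describe. Your binary formulation can in principle be repaired (for instance, let the verifier also accept whenever its empirical estimate lands in a middle band), but as stated the verifier does not meet Definition~\ref{d:search-BPP}.
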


	\begin{proof}
		Let $B_1, B_2, \ldots$ be an enumeration of all probabilistic machines, and consider the following relation $R$. For a pair $(x,y)$ where $x\in\bool^n$ and $y\in[0,1]$ in represented as a binary string, $(x,y)\in R$ if and only if $x$ is of the form $x$ form $1^{n - \lceil \log n \rceil}i$ for some $i \in \{0,1\}^{\lceil \log n \rceil}$ and
		\[
		|y-\mu| \leq 0.1,
		\]
		where $\mu$ is the probability that the $i$-th probabilistic machine $B_i$ accepts $x$ when running for $n^k$ steps.
		
		We first show that $R\in \mathsf{BPP}$-$\mathsf{search}$. The search algorithm $A$, on input $x=1^{n - \lceil \log n \rceil}i$, (repeatedly) simulates $B_i$ on $x$ for $n^k$ steps and with probability at least $2/3$, outputs a value $\alpha$ that is at most $0.03$ far from the acceptance probability of $B_i$ on $x$. It is clear that $A$ can be made to run in probabilistic polynomial time and that it outputs a value that satisfies the condition of $R$ with probability at least $2/3$ (via a standard concentration bound). The verification algorithm $B$, will first check if $x$ has the correct form, and then invoke a probabilistic algorithm $B_0$ such that with probability at least $2/3$, $B_0$ outputs a value $\beta$ that is at most $0.03$ far from the acceptance probability of $B_i$ on $x$ (again using a standard argument and a concentration bound). $B$ accepts iff $|\beta-y| \leq 0.06$.  On the one hand, $B$ rejects all the bad $y$'s (those that are $> 0.1$ far from the correct acceptance probability) with probability at least 2/3 (when $B_0$ outputs a value $\beta$ that is at most 0.03 far and hence $|\beta-y| > 0.07$); on the other hand, with probability at least 2/3 (over the randomness of $A$), $A$ outputs a value that is at most $0.03$ far, in which case $B$ accepts this output of $A$ with probability at least $2/3$ (again when $B_0$ outputs a value that is at most 0.03 far).
		
		Next, we define the (hard) language $L$. Let $A'$ be a (i.o.-)pseudodeterministic search algorithm for $R$ (that succeeds  with high probability over a particular polynomial-time samplable input distribution defined below). Let $L$ be as follows: 
		\[
		x \in L \iff \text{the canonical output of $A'(x)$ has a value that is less than $1/2$}.
		\]
		
		Since $A'$ is pseudodeterministic, it is easy to see that $L \in \mathsf{BPTIME}[T(n)]$.
		Next, we show that $L\not\in\mathsf{BPTIME}\!\left[n^k\right]$. Let $L'$ be an arbitrary language in $\mathsf{BPTIME}\!\left[n^k\right]$. Then there is an $i$ such that the machine $B_i$ computes $L'$ and always stops in at most $n^k$ steps. Let $n\geq i$ be such that our pseudodeterministic algorithm $A'$ succeeds on inputs of length $n$ coming from the distribution $\mathcal{D}_n$ defined by sampling a random string of length $\log n$ and appending it to the string $1^{n - \log n}$ (note that our choice for the ensemble of distributions is independent of the other parameters). Assume without loss of generality that $1^{n - \lceil \log n \rceil}i\in L'$. Then we have $\mu=\Prob\left[B_i\left(1^{n - \lceil \log n \rceil}i\right)=1\right]\geq 2/3$. Note that distribution $\mathcal{D}_n$ is uniform over a set of size at most $2n$. Since our pseudodeterministic algorithm $A'$ succeeds with probability at least $1-1/(3n)$ over such an input distribution, we have that $A'$ succeeds on every input in its support, including $1^{n - \lceil \log n \rceil}i$. In other words, the canonical output of $A'$ on $1^{n - \lceil \log n \rceil}i$ is at least $2/3-0.1>1/2$, which implies that $1^{n - \lceil \log n \rceil}i \not\in L$. This shows that $L \neq L'$, and since $L' \in \mathsf{BPTIME}\!\left[n^k\right]$ was arbitrary, it follows that $L \notin \mathsf{BPTIME}\!\left[n^k\right]$.
		
		It is easy to check that the ``moreover'' statement follows from a similar argument.
	\end{proof}
	
	Next, we show how to get completeness results from strong pseudo-derandomisations of $\mathsf{BPP}$-$\mathsf{search}$. Consider the following hypothesis.

	\begin{hypothesis}[Statement $H(T)$]\label{h:pd-BPP-search} Let $T$ be a time-constructible function.
		For every $\mathsf{BPP}$-$\mathsf{search}$ problem $R$, there is pseudodeterministic search algorithm for $R$ that runs in time $T$.
		More precisely, there is a pair $(A,B)$ of  probabilistic algorithms witnessing that $R \in \mathsf{BPP}$-$\mathsf{search}$, where $B$ runs in time polynomial in $|x|$, $A$ runs in time $T(|x|)$, and for every input $x$ there is a string $y$ such that $\Pr_A[A(x) = y] \geq 2/3$.
	\end{hypothesis}

	\begin{theorem}[Pseudo-derandomisation of $\mathsf{BPP}$-$\mathsf{search}$ yields $\mathsf{BPTIME}$-hard problems]\label{t:pd-imply-BPP-hard-problem}
		If \Cref{h:pd-BPP-search} holds for a time-constructible $T$, then there exists a $\mathsf{BPTIME}$-hard problem in $\mathsf{BPTIME}[O(T(n))]$.
	\end{theorem}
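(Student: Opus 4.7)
The plan is to construct a ``universal'' $\mathsf{BPP}$-$\mathsf{search}$ relation $R$ that encodes approximate acceptance probabilities of clocked probabilistic machines, apply Hypothesis~\ref{h:pd-BPP-search} to obtain a pseudodeterministic algorithm for $R$, and then use the canonical output of this algorithm to define a language $L$ that is simultaneously in $\mathsf{BPTIME}[O(T(n))]$ and hard for every $\mathsf{BPTIME}[t(n)]$. This closely mirrors the argument already sketched for Theorem~\ref{thm:aePD-BPP_hard}, but now for worst-case $\mathsf{BPP}$-$\mathsf{search}$ rather than worst-case $\mathsf{CAPP}$.

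First I would define $R$ as follows: on input $w = (\langle M \rangle, x, 1^t)$, a valid solution is any binary encoding of a value $y \in [0,1]$ with $|y - p| \leq 1/10$, where $p = \Pr[M(x) \text{ accepts within } t \text{ steps}]$. The search algorithm simulates $M$ on $x$ for $t$ steps polynomially many times and outputs the empirical mean; the verifier similarly estimates $p$ to precision $1/30$ and accepts $(w,y)$ iff its estimate is within $1/15$ of $y$. Both run in $\mathrm{poly}(|w|)$ since the simulation cost $O(t\log t)$ is absorbed into the input length, so $R\in\mathsf{BPP}\text{-}\mathsf{search}$. Apply Hypothesis~\ref{h:pd-BPP-search}: let $(A',B')$ witness a pseudodeterministic algorithm for $R$ with $A'$ running in time $T(|w|)$ and, on every $w$, a canonical output $y_w$ satisfying $\Pr[A'(w)=y_w]\geq 2/3$. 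A standard pigeonhole between the pseudodeterministic event and the $\mathsf{BPP}$-$\mathsf{search}$ acceptance condition shows that $y_w$ must actually be a valid solution, i.e.\ $|y_w - p| \leq 1/10$.

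Next, define $L$ by $w \in L$ iff $y_w \geq 1/2$. Running $A'$ once and testing whether its output is at least $1/2$ decides $L$ with error at most $1/3$ in time $O(T(n))$, placing $L \in \mathsf{BPTIME}[O(T(n))]$. For $\mathsf{BPTIME}$-hardness in the sense of Definition~\ref{d:bptime_hardness}, fix any time-constructible $t(n)$ and any $L' \in \mathsf{BPTIME}[t(n)]$ with decider $M_{L'}$. Define the reduction $f(x) \eqdef (\langle M_{L'} \rangle, x, 1^{t(|x|)})$; this is deterministically computable in time $O(t(|x|))$ and yields an output of length $O(t(|x|))$, meeting the polynomial-overhead requirement in Definition~\ref{d:bptime_hardness}. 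If $x \in L'$ then $p_{f(x)} \geq 2/3$, so the canonical output satisfies $y_{f(x)} \geq 2/3 - 1/10 > 1/2$ and hence $f(x) \in L$; if $x \notin L'$ then $p_{f(x)} \leq 1/3$, so $y_{f(x)} \leq 1/3 + 1/10 < 1/2$ and $f(x) \notin L$.

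The only real subtlety lies in justifying that the \emph{canonical} output $y_w$ of $A'$ truly lies in $R_w$, as opposed to being an arbitrary fixed string. This requires combining the pseudodeterminism condition ($\Pr[A'(w)=y_w]\geq 2/3$) with the $\mathsf{BPP}$-$\mathsf{search}$ guarantee (with probability $\geq 1/2$ over $A'$'s randomness, $B'$ accepts $(w,A'(w))$ with probability $\geq 2/3$) via a union bound: their intersection has probability $\geq 1/6$, so there is at least one internal coin sequence of $A'$ producing $y_w$ that $B'$ accepts with probability $\geq 2/3$, forcing $(w,y_w) \in R$ by the soundness clause of Definition~\ref{d:search-BPP}. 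Once this is in place, everything else is a routine composition of the ingredients above.
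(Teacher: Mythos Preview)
Your proposal is correct and follows essentially the same approach as the paper: the same universal relation $R$, the same pseudodeterministic algorithm applied to it, the same language $L$ defined via the threshold $1/2$ on the canonical output, and the same reduction $f(x) = (\langle M_{L'}\rangle, x, 1^{t(|x|)})$. The only notable difference is that you explicitly justify why the canonical output $y_w$ must lie in $R_w$, which the paper leaves implicit; your argument via the verifier works, though a slightly more direct route is to use the first bullet of Definition~\ref{d:search-BPP} (since $(A',B')$ witnesses $R\in\mathsf{BPP}\text{-}\mathsf{search}$, we have $\Pr[A'(w)\in R_w]\geq 2/3$, and intersecting with $\Pr[A'(w)=y_w]\geq 2/3$ immediately forces $y_w\in R_w$).
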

	
	\begin{proof}
		Consider the following relation $R$:
		\[
		\left\{\left(\langle M\rangle,x,1^t\right),\mu\right\} \in R \iff \mu =\Prob[\text{$M$ accepts $x$ in $\leq t$ steps}] \pm 0.1.
		\]
		We claim that $R \in  \mathsf{BPP}$-$\mathsf{search}$. First, note that by (repeatedly) simulating $M$ on $x$ for at most $t$ steps, we can design a probabilistic polynomial-time search algorithm $A$, such that with probability at least $2/3$, $A$ outputs a value $\alpha$ that is at most $0.03$ far from the acceptance probability of $M$ on $x$. It is clear that $A$ outputs a value that satisfies the condition of $R$ with probability at least $2/3$. For the verification algorithm, we first use a probabilistic algorithm $B_0$ such that with probability at least $2/3$, $B_0$ outputs a value $\beta$ that is at most $0.03$ far from the acceptance probability. We then let the verification algorithm $B$ be such that, on input $\left\{\left(\langle M\rangle,x,1^t\right), \mu\right\}$, $B$ accepts iff $|\beta-\mu| \leq 0.06$. On the one hand, $B$ rejects all bad $\mu$ (those that are $> 0.1$ far from the correct acceptance probability) with probability at least 2/3 (when $B_0$ outputs a value $\beta$ that is at most 0.03 far and hence $|\beta-\mu| > 0.07$); on the other hand, with probability at least 2/3 (over the randomness of $A$), $A$ outputs a value that is at most $0.03$ far, in which case $B$ accepts this output of $A$ with probability at least $2/3$ (again when $B_0$ outputs a value that is at most 0.03 far).

		Assuming \Cref{h:pd-BPP-search}, let $C$ be a pseudodeterministic search algorithm for $R$. That is, on input $w=\left(\langle M\rangle,x,1^t\right)$, $C$ runs in time $T(|w|)$ and with probability at least $2/3$ outputs a fixed value $\mu^*$, which is a good estimate of the acceptance probability of the machine $M$ running on $x$ in $t$ steps.
		Let's define a language $L$ as follows: 
		\[
		w\in L \iff \text{the canonical output of $C(w)$ has a value that is  at least $1/2$}.
		\]
		Next, we show that $L$ is $\mathsf{BPTIME}$-hard with respect to deterministic polynomial-time reductions.
		Let $L'\in \mathsf{BPTIME}[t(n)]$, and let $M_{L'}$ be a corresponding bounded-error machine that decides $L$ under this time bound. Consider an instance $x$ for $L'$. We let the reduced instance for $L$ be $w=\left(\langle M_{L'}\rangle,x,1^{t(|x|)}\right)$. It is easy to verify that $w$ can be produced in $\poly(t(|x|))$ time deterministically, for a fixed polynomial that is independent of $t$. Let's assume that $x \in L'$ (the other case is analogous), which means $M_{L'}$ accepts $x$ with probability at least $2/3$ (within $t$ steps). In this case, our pseudodeterministic algorithm $C$ on input $w$ will output (with probability at least $2/3$) a fixed number $\mu^*$ that is a good estimate of the acceptance probability of $M_{L'}$ on $x$, which means $\mu^*$ is at least $1/2$. Hence the canonical output value of $A(x)$ is at least $1/2$. By definition, $w\in L$. This shows the $\mathsf{BPTIME}$-hardness of $L$.
		
		Finally, to see that $L$ is in $\mathsf{BPTIME}[O(T(n))]$, note that on input $w$, we can (repeatedly) run the algorithm $C$ to (confidently) find out the canonical output of $C(w)$, since $C$ is pseudodeterministic.
	\end{proof}
	
	As a consequence of the results described above, we obtain the following corollaries.
	
	\begin{corollary}[Efficient pseudo-derandomisation of $\mathsf{BPP}$-$\mathsf{search}$ implies $\mathsf{BPP}$-complete problems]\label{c:pd-imply-BPP-hard-problem}
		If for every $\mathsf{BPP}$-$\mathsf{search}$ problem $R$  there is a  pseudodeterministic polynomial-time search algorithm  for $R$, then there is a  $\mathsf{BPP}$-complete problem.
	\end{corollary}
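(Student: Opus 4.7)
The plan is to deduce this corollary as a direct specialisation of Theorem \ref{t:pd-imply-BPP-hard-problem}. The hypothesis of the corollary is precisely Statement $H(T)$ from Hypothesis \ref{h:pd-BPP-search} instantiated with a polynomial time bound $T(n) = n^{c}$ for some fixed constant $c \geq 1$: it asserts that every $\mathsf{BPP}$-$\mathsf{search}$ relation admits a pseudodeterministic search algorithm running in polynomial time, with the verification algorithm still running in polynomial time as in Definition \ref{d:search-BPP}.

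First, I would apply Theorem \ref{t:pd-imply-BPP-hard-problem} with this polynomial $T$. The theorem directly yields a language $L$ that is $\mathsf{BPTIME}$-hard in the sense of Definition \ref{d:bptime_hardness}, and moreover $L \in \mathsf{BPTIME}[O(T(n))]$. Since $T$ is a polynomial, we have $\mathsf{BPTIME}[O(T(n))] \subseteq \mathsf{BPP}$, and therefore $L \in \mathsf{BPP}$.

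Finally, I would invoke the second sentence of Definition \ref{d:bptime_hardness}, which declares a language to be $\mathsf{BPP}$-complete if it is $\mathsf{BPTIME}$-hard and lies in $\mathsf{BPP}$. Combining the two properties established above, $L$ satisfies both conditions, so $L$ is $\mathsf{BPP}$-complete, which is the desired conclusion.

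There is essentially no obstacle here: the entire content of the corollary is repackaging the conclusion of Theorem \ref{t:pd-imply-BPP-hard-problem} in the special case of polynomial $T$, and observing that $\mathsf{BPTIME}$-hardness together with membership in $\mathsf{BPP}$ is the definition of $\mathsf{BPP}$-completeness. The only mild point worth checking is that the verification algorithm in the $\mathsf{BPP}$-$\mathsf{search}$ definition runs in polynomial time (as required by Definition \ref{d:search-BPP}), so that the assumption of Theorem \ref{t:pd-imply-BPP-hard-problem} is genuinely satisfied for the auxiliary search problem $R$ built from machine acceptance probabilities in that proof; this is immediate from the construction there.
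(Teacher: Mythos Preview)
Your proposal is correct and matches the paper's approach: the paper states this corollary without proof, treating it as immediate from Theorem~\ref{t:pd-imply-BPP-hard-problem}, and you have spelled out exactly that deduction.

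One minor imprecision worth flagging: the corollary's hypothesis does \emph{not} literally coincide with $H(T)$ for a single fixed polynomial $T(n)=n^c$, since the polynomial is allowed to depend on the relation $R$. However, as you yourself observe in your final paragraph, the proof of Theorem~\ref{t:pd-imply-BPP-hard-problem} only invokes the hypothesis for one specific relation $R$ (the acceptance-probability estimation problem), so a polynomial-time pseudodeterministic algorithm for that particular $R$ suffices, and the argument goes through unchanged.
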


	\begin{corollary}[Probabilistic Time Hierarchy from Pseudo-derandomisation]\label{c:hierarchy}
		If \Cref{h:pd-BPP-search} holds for a time-constructible function $T$, there is a constant $c$ such that for every time-constructible $t$,
		\[
		\mathsf{BPTIME}\left[T\!\left(t(n)^{c}\right)\right]\not\subseteq	\mathsf{BPTIME}[t(n)].
		\]
	\end{corollary}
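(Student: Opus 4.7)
The plan is to prove Corollary \ref{c:hierarchy} by a direct diagonalization argument that closely parallels the proof of Proposition \ref{p:io-pd-to-hierarchy} (and in spirit also Theorem \ref{t:pd-CAPP-to-hierarchy}). Rather than passing through the $\mathsf{BPTIME}$-hard problem from Theorem \ref{t:pd-imply-BPP-hard-problem} as a black box (which only gives upward containments of the form $\mathsf{BPTIME}[t(n)] \subseteq \mathsf{BPTIME}[T(t(n)^c)]$, the wrong direction for a hierarchy), I would reuse the $\mathsf{BPP}$-$\mathsf{search}$ relation $R$ built in that proof and directly diagonalize against time-$t(n)$ machines using the pseudodeterministic algorithm that Hypothesis \ref{h:pd-BPP-search} supplies for $R$.

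Concretely, let $R$ be the relation from the proof of Theorem \ref{t:pd-imply-BPP-hard-problem}: on instance $w = (\langle M \rangle, x, 1^t)$ the valid outputs are values $\mu \in [0,1]$ with $|\mu - \Pr[M \text{ accepts } x \text{ in } \leq t \text{ steps}]| \leq 0.1$. By Hypothesis \ref{h:pd-BPP-search} for $T$ there is a pseudodeterministic algorithm $C$ for $R$ running in time $T(|w|)$, i.e.\ on every $w$ it outputs some canonical $\mu^\star(w)$ with probability $\geq 2/3$. Fix an arbitrary time-constructible $t$, which without loss of generality satisfies $t(n) \geq n$, and let $B_1, B_2, \ldots$ be the standard enumeration of clocked probabilistic machines running in $t(n)$ steps. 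Define the diagonal language $L$ by: on $x \in \{0,1\}^n$, reject unless $x = 1^{n - \lceil \log n \rceil} i$ for some $i \in \{0,1\}^{\lceil \log n \rceil}$; otherwise accept iff $\mu^\star\!\left(\langle B_i\rangle, x, 1^{t(n)}\right) < 1/2$.

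For the upper bound I would check that $L \in \mathsf{BPTIME}[T(t(n)^c)]$ for some absolute constant $c$ (independent of $t$). The instance $w = (\langle B_i \rangle, x, 1^{t(n)})$ has length $|w| = |\langle B_i \rangle| + n + t(n) + O(\log n) = O(t(n))$, using that $|\langle B_i \rangle| = O(\log n)$ in the standard enumeration and $t(n) \geq n$. Thus $|w| \leq t(n)^2$ for all sufficiently large $n$, and running $C$ on $w$ (with $O(\log n)$ independent repetitions and majority voting to recover the canonical output with confidence $1 - 1/\text{poly}(n)$) takes time $T(t(n)^2) \cdot \operatorname{poly}(n)$, which is absorbed into $T(t(n)^c)$ for a slightly larger constant such as $c = 3$. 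Crucially, this $c$ depends only on the enumeration of machines and the error-reduction overhead, not on $t$.

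For the lower bound $L \notin \mathsf{BPTIME}[t(n)]$, suppose some $B_{i^\star}$ decides $L$ in time $t(n)$. Pick $n$ large enough that $i^\star$ fits in $\lceil \log n \rceil$ bits, and consider $x^\star = 1^{n - \lceil \log n \rceil} i^\star$. If $x^\star \in L$ then $\Pr[B_{i^\star}(x^\star)=1] \geq 2/3$, so every valid output of $R$ on $w = (\langle B_{i^\star}\rangle, x^\star, 1^{t(n)})$ lies in $[2/3 - 0.1, 1]$, hence the canonical output $\mu^\star \geq 1/2$, forcing $x^\star \notin L$ by construction, a contradiction; the $x^\star \notin L$ case is symmetric with acceptance probability $\leq 1/3$ and $\mu^\star < 1/2$. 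Since $i^\star$ was arbitrary, $L$ differs from every $\mathsf{BPTIME}[t(n)]$ language, establishing the separation. The only subtlety worth verifying carefully is the uniformity of $c$ across all time-constructible $t$, which the size bookkeeping above handles; no new ideas are needed beyond the diagonalization template already used in Proposition \ref{p:io-pd-to-hierarchy}.
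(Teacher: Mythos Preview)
Your proof is correct and is essentially the paper's argument with the Barak step inlined. The paper invokes Theorem~\ref{t:pd-imply-BPP-hard-problem} to get a $\mathsf{BPTIME}$-hard language $L \in \mathsf{BPTIME}[O(T(n))]$ and then cites Barak's Theorem~3.6, whose proof is precisely the diagonalization you carry out: on input encoding a machine $M_i$, form $w=(\langle M_i\rangle,x,1^{t(n)})$, decide $L(w)$, and flip. Your language is literally the complement of $L$ restricted to these padded instances.

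One point deserves correction: your parenthetical remark that passing through the hard problem ``only gives upward containments of the form $\mathsf{BPTIME}[t(n)] \subseteq \mathsf{BPTIME}[T(t(n)^c)]$, the wrong direction for a hierarchy'' is mistaken. A $\mathsf{BPTIME}$-hard problem yields \emph{both} the upward simulation and the separation; the separation comes from exactly the diagonalization you wrote down, which is the content of Barak's Theorem~3.6. So you have not found an alternative route---you have rediscovered and written out the cited argument. (Also, the $O(\log n)$ repetitions for error reduction are unnecessary: since the canonical output $\mu^\star$ appears with probability $\geq 2/3$, a single run of $C$ already gives the correct bit $[\mu^\star < 1/2]$ with probability $\geq 2/3$, which slightly simplifies the running-time bookkeeping.)
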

	\begin{proof}
		This follows from Theorem \ref{t:pd-imply-BPP-hard-problem} using the argument from the proof of {\cite[Theorem 3.6]{DBLP:conf/random/Barak02}}.
	\end{proof}
	 
	 \newpage 
	 
	\section{Pseudodeterminism and the structure of probabilistic time} \label{s:summary_appendix}
	
		The diagram below summarises several connections established in this work.  We note that a similar diagram of implications also hold in the context of   pseudo-derandomisations of the Circuit Acceptance Probability Problem ($\mathsf{CAPP}$).\\~\\
		
		\begin{tikzpicture}
			
			\pgfdeclarelayer{nodelayer}
			\pgfdeclarelayer{edgelayer}
			\pgfsetlayers{nodelayer,edgelayer}
			
			\begin{pgfonlayer}{nodelayer}
				\node (0) at (-5, 6) [rectangle,draw]
				{\begin{tabular}{l} a.e.PD of $\mathsf{BPP}$-$\mathsf{search}$ in $T(n)$\end{tabular}};
				
				\node (1) at (-9.5, 2) [rectangle,draw]
				{\begin{tabular}{l} $\mathsf{BPTIME}$-hard problem in $T(n)$\end{tabular}};
				
				\node (2) at (0, 2) [rectangle,draw]
				{\begin{tabular}{l} a.e.PD of $\mathsf{BPP}$-$\mathsf{search}$ over \\samplable distributions in $T(n)$ \end{tabular}};
				
				\node (7) at (-6.15,-2) [rectangle,draw]
				{\begin{tabular}{l} i.o.PD of $\mathsf{BPP}$-$\mathsf{search}$ over \\samplable distributions in $T(n)$ \end{tabular}};
				
				\node (3) at (0, -2) [rectangle,draw] 
				{\begin{tabular}{l} a.e.PD of $\mathsf{unary}$-$\mathsf{BPE}$-$\mathsf{search}$ in \\$T(2^n)$ \end{tabular}};
				
				\node (4) at (0, -6) [rectangle,draw] 
				{\begin{tabular}{l} a.e.PD construction of linear $\rKt$ \\strings in $T(2^n)$ \end{tabular}};
				
				\node (5) at (0, -10) [rectangle,draw]
				{\begin{tabular}{l} $\exists k, L \in \mathsf{BPTIME}[T(t(n)^k)]$ and \\ $L \notin \mathsf{i.o.BPTIME}[t(n)]/\log (t(n))$ \\(for any $n \leq t(n) \leq 2^{\lambda \cdot 2^n}$) \end{tabular}};
				
				\node (6) at (-9.5, -10) [rectangle,draw]
				{\begin{tabular}{l} $\exists L \in \mathsf{BPTIME}\left[T\! \left(n^{O(k)}\right)\right] \setminus\, \mathsf{BPTIME}\left[n^k\right]$\end{tabular}};`
			\end{pgfonlayer}
			\begin{pgfonlayer}{edgelayer}
				\draw [->, thick] (0) to node[above left]  {\Cref{t:pd-imply-BPP-hard-problem}} 
				(1) ;
				
				\draw [->, thick] (0) to node[above right]
				{Trivial}
				(2);
				
				\draw [->, thick] (2) to node[right] {\Cref{p:pd-BPP-topd-BPE}} (3);
				
				\draw [->, thick] (3) to node[right]  {\Cref{p:pd-BPE-search-to-pseudo_constr}} (4);
				
				\draw [->, bend right=10, thick] (4) to node[left]  {\Cref{t:hyp_vs_hier}} (5);
				
				\draw [<-, bend left=10, thick] (4) to node[right]  {\Cref{t:hierarchy-to-pseudo_constr}} (5);
				
				\draw [->, thick] (1) to node[left] 
				{{\cite[Theorem 3.6]{DBLP:conf/random/Barak02}}}
				(6);
				
				\draw [->, thick] (2) to node[above left]  {Trivial} (7);
				
				\draw [->, thick] (7) to node[right]  {\Cref{p:io-pd-to-hierarchy}} (6);
			\end{pgfonlayer}
		\end{tikzpicture}
	
	~\\
	
	An interesting question left open by our paper is to establish a converse to Proposition \ref{p:pd-BPE-search-to-pseudo_constr}.

\end{document}